\newtheorem{lemma}{Lemma}
\newtheorem{theorem}{Theorem}
\newtheorem{corollary}{Corollary}
\newtheorem{example}{Example}
\newtheorem{definition}{Definition}
\newcommand{\N}{{\mathbb N}}
\newcommand{\card}[1]{|#1|}
\newcommand{\dfa}{DFA\xspace}
\newcommand{\nfa}{NFA\xspace}
\newcommand{\dfas}{DFAs\xspace}
\newcommand{\nfas}{NFAs\xspace}
\newcommand{\comp}[1]{\overline{#1}}
\newcommand{\suff}[1]{\mathsf{suff}(#1)}
\newcommand{\preff}[1]{\mathsf{pref}(#1)}
\newcommand{\inff}[1]{\mathsf{infix}(#1)}
\newcommand{\suffo}{\mathsf{suff}}
\newcommand{\preffo}{\mathsf{pref}}
\newcommand{\inffo}{\mathsf{infix}}
\newcommand{\Succ}[1]{\mathsf{Succ}(#1)}
\newcommand{\disw}[2]{\mathsf{D}_{#1}(#2)}
\newcommand{\dis}[1]{\mathsf{D}(#1)}
\newcommand{\diso}{\mathsf{D}}
\newcommand{\disn}[2]{\mathsf{D}^{#2}(#1)}
\newcommand{\distminw}[2]{\underline{\mathsf{D}}_{#1}(#2)}
\newcommand{\distmin}[1]{\underline{\mathsf{D}}(#1)}
\newcommand{\distmino}{\underline{\mathsf{D}}}
\newcommand{\distminp}[2]{\underline{\mathsf{D}}^{#2}{(#1)}}
\newcommand{\dpre}[1]{\mathsf{E}(#1)}
\newcommand{\dpreo}{\mathsf{E}}
\newcommand{\dpremin}[1]{\underline{\mathsf{E}}(#1)}
\newcommand{\dpreminw}[2]{\underline{\mathsf{E}}_{#1}(#2)}
\newcommand{\dprew}[2]{\mathsf{E}_{#1}(#2)}
\newcommand{\dpren}[2]{\mathsf{E}^{#2}(#1)}
\newcommand{\dpremino}{\underline{\mathsf{E}}}
\newcommand{\dpreminp}[2]{\underline{\mathsf{E}}^{#2}{(#1)}}
\newcommand{\dinf}[1]{\mathsf{F}(#1)}
\newcommand{\dinfo}{\mathsf{F}}
\newcommand{\dinfmin}[1]{\underline{\mathsf{F}}(#1)}
\newcommand{\dinfminw}[2]{\underline{\mathsf{F}}_{#2}(#1)}
\newcommand{\dinfn}[2]{\mathsf{F}^{#2}(#1)}
\newcommand{\dinfw}[2]{\mathsf{F}_{#1}(#2)}
\newcommand{\notequiv}{\nLeftrightarrow}
\newcommand{\bcop}[1]{\mathsf{O}(#1)}
\newcommand{\bcopo}{\mathsf{O}}
\newcommand{\bcopn}[2]{\mathsf{O}^{#2}(#1)}
\newcommand{\mneq}[1]{\equiv_{#1}}
\newcommand{\mnleq}[1]{\eqsim_{#1}}
\newcommand{\mntseq}[1]{\approxeq_{#1}}
\date{}
\newcommand{\Set}[1]{\left\{ #1 \right\}}
\newcommand{\kleene}[1]{#1^\star}
\newcommand{\tuple}[1]{\left\langle #1\right\rangle}
\newcommand{\lang}[1]{{\cal L}(#1)}
\newcommand{\AND}{\;\wedge\:}
\title{Distinguishability Operations  and Closures on Regular Languages}
 \author{Cezar C\^ampeanu \\
Department of Computer Science and Information Technology\\
The University of Prince Edward Island, Canada\\ccampeanu@upei.ca
 \and Nelma Moreira
%\thanks{This work was partially funded by the
%  European Regional Development Fund through the programme COMPETE
%   and by the Portuguese Government through the FCT under projects
%  PEst-C/MAT/UI0144/2013 and FCOMP-01-0124-FEDER-020486.}
, Rogério Reis\\
 Centro de Matem\'atica e Faculdade de Ci\^e{}ncias da\\
   Universidade do Porto, Portugal  \\
%  Rua do Campo Alegre, 4169-007 Porto, Portugal\\
   \{nam,rvr\}@dcc.fc.up.pt}
\begin{document}
\maketitle

\begin{abstract}
 Given a regular language $L$, we study the language of words $\dis{L}$, that  distinguish between pairs of different
left-quotients of $L$. We characterize this distinguishability
operation, show that its iteration has always a fixed point,
and we generalize this result to operations derived from closure operators 
and Boolean operators. 
% In this paper we study the language of the words that, for a given
% language $L$, distinguish between non-equivalent classes of words,
% with respect to the Myhill-Nerode equivalence relation on $L$. 
We give an upper bound for the state complexity of the distinguishability operation, 
and prove its tightness.  
We show that the set of minimal words that can be used to distinguish 
between different left-quotients of a language $L$ has at most $n-1$ elements,
 where $n$ is the state complexity of $L$, and 
 we also study the properties of its iteration. We generalize the results for the languages of words that distinguish between pairs of different right-quotients and two-sided quotients of a language $L$.
\end{abstract}

\section{Introduction}
\label{sec:intro}
Regular languages and operations over them have been extensively
studied  during the last sixty years, the applications of automata
being continuously extended in different areas. 
As a practical example, we can use automata to model various
electronic circuits. The testing of the circuits can be done by
applying several inputs to various pins of a circuit, and checking the output
produced. 
Because in many cases the circuits emulate automata, it is useful to
develop general tools for testing various properties of automata, such
as testing the relation between the response of the circuit for the
same signal, applied to different gates. 
To answer if an automaton is minimal requires to test if two states
are equivalent or not. 
The easiest way to do this is to use as input different words, and see if
for both states, we reach states with the same finality, thus, in case
of a circuit, for both input gates we will get the same value of the output bit.  
However, checking every possible word is a tedious task, and it would
be useful to limit the testing only to the words that can distinguish
between states. 

Therefore, it is worth studying the languages that distinguish between all
non-equivalent states of a given deterministic finite automaton. 
For an automaton  $\mathcal{A}$ we denote the distinguishing  
language by $\dis{\mathcal{A}}$ and the language of minimal words
distinguishing between all non-equivalent states %is denoted 
by $\distmin{\mathcal{A}}$. We can also consider the distinguishability 
languages $\dis{L}$, and $\distmin{L}$ for a regular language, $L$,
which will distinguish between all non-equivalent words.%, rather than states.

The idea of studying word or state distinguishability is not new. 
In 1958, Ginsburg studied the length of the smallest uniform experiment 
which distinguishes the terminal states of a machine 
\cite{ginsburg58:_lengt_of_small_unifor_exper},
and with Spanier in \cite{Ginsburg1960:InputSemigroup}, he studies whether or not 
an arbitrary semigroup can serve as an input for a 
machine that distinguishes between the elements of the input semigroup.
\ A comparable work was done for terminal
distinguishability by Sempere \cite{JoseSempere}, where terminal
 segments of automata are studied to characterize language 
 families that can be identified in the limit from positive data.  
Indeed, knowing that an automaton $\mathcal{A}$ has at most $n$
states, and having the language $\distmin{\mathcal{A}}$, together with the
words of length at most $n+1$ that are in the language $\lang{\mathcal{A}}$, we
can recover the initial automaton $\mathcal{A}$.  
Note that without the language $\distmin{\mathcal{A}}$, any learning
procedure will only approximate the language $\lang{\mathcal{A}}$.  
For example, in case we know $M$ to be the set of all the
words of a language $L$ with length at most $n+1$, we can infer that
$L$ is a cover language for $M$, but we cannot determine which one of
these cover languages is $L$. 
Thus, any learning procedure would only be able to guess $L$ from $M$,
and the guess would not be accurate, as the number of cover automata
for a finite language can be staggeringly
high~\cite{campeanu13:_cover_languag_and_implem,CezarAndreiCount}. 
\ In ~\cite{%restivo12:_extrem_minim_condit_autom
restivo12:_graph_theor_approac_to_autom_minim}, 
Restivo and
Vaglica proposed a graph-theoretical approach to test automata
minimality. For  a given automaton ${\cal A}$ they associate 
a digraph, called \emph{pair graph}, where vertices are pairs
$\Set{p,q}$ of states of ${\cal A}$, and  edges connect vertices for which
the states have a transition from the same symbol in $\mathcal{A}$. Then, two states
$p$ and $q$ of ${\cal A}$ are distinguishable if and only if there is a path
from the vertex $\Set{p,q}$ to a vertex $\Set{p', q'}$, where $p'$ is
final and $q'$  is non final, i.e., there exists a word that distinguishes
between them. 
A related research topic is the problem of finding a minimal \dfa that
 distinguishes between two words by accepting one and rejecting the other.
It was studied by~Blumer et
al. in \cite{blumer85:_small_autom_recog_subwor_of_text}, and recently 
Demaine et al. in \cite{demaine11:_remar_separ_words} 
reviewed several attempts to solve the problem and presented new results. 

In the present paper we do not separate two words by a language,
instead, we distinguish between non-equivalent quotients of the same
language. We use many powerful tools such as 
language quotients, atoms, and universal witnesses, 
that hide proof complexity,
helping us to produce a presentation easier to follow.
%The paper is organized as follows. 
We introduce the notation in Section~\ref{snotation},  define and
prove general properties of the distinguishability operation in
Section~\ref{sgenres}, and prove some state complexity results in
Section~\ref{ssc}.  
In Section~\ref{sec:minDist}, we analyze the set of minimal words with
respect to quasi-lexicographical order that distinguishes different
quotients of a regular language. 
In Section~\ref{sec:recoverL}, we present an algorithm that can be used
as a positive learning procedure for language $L$ if $\distmin{L}$ is known.
In Section~\ref{sec:sufclosuregeneral}, we present a class of operands %defined 
using closure operations and Boolean operations, that have a 
fixed point under iteration. 
In Section~\ref{sec:otherdis} we define other distinguishability operations and study their properties.  
The conclusion, together with open problems and future work, are included 
in Section~\ref{sconc}.

\section{Notation and Definitions}
\label{snotation}
For a set $T$, its cardinality is denoted by $|T|$. 
An \emph{alphabet} $\Sigma$ is a finite non-empty set, and the free monoid
generated by $\Sigma$ is $\Sigma^\star$. 
A word $w$ is an element of $\Sigma^\star$ and a \emph{language} is a subset of $\Sigma^\star$.  
The \emph{complement} of a language $L$ is $\comp{L}=\Sigma^\star\setminus L$.
The \emph{length} of a word $w\in \Sigma^\star$, $w=a_1a_2\ldots a_n$,
$a_i\in \Sigma$, $1\leq i\leq n$, with $n\in\N$ is $|w|=n$.  
The \emph{empty word} is $\varepsilon$, and $|\varepsilon|=0$. If $w=uxv$ for some $u,v,x\in \Sigma^\star$ then $u$ is a \emph{prefix} of $w$, $x$ is a \emph{factor} (or \emph{infix}) of $w$ and $v$ a \emph{suffix} of $w$.
Consider an order over $\Sigma$. 
In $\Sigma^\star$, we define the \emph{quasi-lexicographical} order as: $w\preceq w'$
if $|w|<|w'|$ or $|w|=|w'|$ and $w$ lexicographically precedes $w'$. 
The reverse $w^R$ of a
word $w\in \Sigma^\star$ is defined as follows: $\varepsilon^R=\varepsilon$, and $(wa)^R=aw^R$, for $a\in \Sigma$. 
The \emph{reverse} of a language $L$ is denoted by $L^R$ and defined as $L^R =\{w^R\mid w\in L\}$.

A \emph{deterministic finite automaton} (\dfa) is a quintuple
$\mathcal{A}=\tuple{Q,\Sigma, q_0,\delta,F}$, where $Q$ is  a finite
non-empty set, the set of states, $\Sigma$ is the  alphabet, $q_0\in
Q$ is the initial state, $F\subseteq Q$ is the set of final states,
and $\delta:Q\times \Sigma\longrightarrow Q$ is the transition
function. This function defines for each symbol of
the alphabet a transformation of the set $Q$ of states (i.e. a map from $Q$ to $Q$). The
\emph{transition semigroup} of a \dfa $\mathcal{A}$, 
\cite{bell14:_symmet_group_and_quotien_compl},
 is the semigroup of transformations of $Q$ generated 
by the transformations induced by the symbols of $\Sigma$.

A \emph{reduced} \dfa is a \dfa with all states reachable from the  
initial state (accessible), and all states can reach a final state 
(useful), except at most one that is a \emph{sink} state or
\emph{dead} state, i.e., a state where all output transitions are self
loops. 
%In the case of an incomplete automaton the transition function 
%is a partial function
%and  we can add the dead state to make the automaton complete.
%A \emph{trimmed} \dfa  is a \dfa with all states reachable and all
%states useful, thus it may be incomplete, i.e., where not all
%transitions are defined. 

The transition function $\delta$ can be extended to $\delta:Q\times
\Sigma^\star\longrightarrow Q$ by $\delta(q,\varepsilon)=q$, and
$\delta(q,wa)=\delta(\delta(q,w),a)$. 
The language recognized by a \dfa $\mathcal{A}$ is
$\lang{\mathcal{A}}=\{w\in \Sigma^\star\mid \delta(q_0,w)\in F\}$.  
We denote by 
$L_q$ and $R_q$ the \emph{left} and \emph{right} languages of $q$, respectively,
i.e.,  
$L_q = \Set{w \mid  \delta(q_0,w) = q}$, and 
$R_q = \Set{w \mid  \delta(q,w)\in F}$. 

The minimal word in quasi-lexicographical order that reaches state $q\in Q$
is $x_{\cal A}(q)$; the word $x_{\cal A}(q)$ is also the minimal element of $L_q$.

A \emph{regular language} is a language recognized by a \dfa.  
A regular language $L$ 
induces on $\Sigma^\star$ the Myhill-Nerode equivalence relation:
 $x\mneq{L} y$ if, for all $w\in
\Sigma^\star$, we have that $xw\in L$ if and only if $yw\in L$. 
If $\mathcal{A}=\tuple{Q,\Sigma,\delta,q_0,F}$ is a \dfa recognizing
the language $L$ and  $R_q=R_p$, then we say that $p$ and $q$ are equivalent, 
and write $p\mneq{\mathcal{A}} q$.   
A \dfa is \emph{minimal} if it has no equivalent states.
The \emph{left quotient}, or simply \emph{quotient}, of a regular
language $L$ by a word $w$ is the language $w^{-1}L=\{x\mid wx \in
L\}$. A quotient 
corresponds to an equivalence class of $\mneq{L}$, i.e. two words are equivalent if and only if their quotients are the same. 
If a language $L$ is regular, the number of distinct left quotients is
finite, and it is exactly the number of states in the \emph{minimal} \dfa
recognizing $L$. This number is called the \emph{state complexity} of $L$,
 and is denoted by $sc(L)$. 
In a minimal \dfa, for each $q\in Q$, $R_q$ is exactly a quotient. 
If some quotient of a language $L$ is $\emptyset$, this means that the
minimal \dfa of $L$ has a dead state. 

A \emph{nondeterministic finite automata}  (\nfa) is a quintuple
$\mathcal{N}=\tuple{Q,\Sigma, I,\delta,F}$, where $Q$, $\Sigma$, and
$F$ are the same as in the \dfa definition, $I\subseteq Q$ is the set of
initial states, and $\delta:Q\times \Sigma\longrightarrow 2^Q$ is the
transition function.  
The transition function can also be extended to subsets of $Q$ instead
of states, and to words instead of symbols of $\Sigma$. 
The language recognized by  an \nfa $\mathcal{N}$ is
$\lang{\mathcal{N}}=\{w\mid \delta(I,w)\cap F\not= \emptyset\}$.  
It is obvious that a \dfa is also an \nfa. 
Any \nfa can be converted in an equivalent \dfa by the well known
\emph{subset construction}. Given an \nfa $\mathcal{N}$ for $L$, an \nfa $\mathcal{N}^R$ for $L^R$ is obtained by interchanging the sets of final and initial states of $\mathcal{N}$ and reversing all transitions between states.

More notation and definitions related to formal languages can be consulted in~\cite{sakarovitch09:_elemen_of_autom_theor,yu97:_handb_formal_languag}.

\section{The (Left) Distinguishability Operation}
\label{sgenres}
Let $L$ be a regular language. 
For every pair of words, $x,y\in \Sigma^\star$, with $x\not\mneq{L} y$,
there exists at least one word $w$ such that either $xw \in L$
or $ yw \in L$. 
Let  ${\cal A}=\tuple{Q,\Sigma,\delta,q_0,F}$ be a \dfa such that
$L=\lang{\cal A}$.  
If two states $p,q\in Q$, $p\not\mneq{A} q$, then there exists at least one
word $w$ such that $\delta(p,w) \in F \notequiv\delta(q,w)
\in F$.  
We say that $w$ distinguishes between the words $x$ and $y$, in the
first case, and the states $p$ and $q$, in the second case. Given $x,y\in\Sigma^\star$, the language that \emph{distinguishes} $x$ from
$y$ w.r.t. $L$ is 
\begin{equation}
\label{eq:dw}
\disw{L}{x,y}=\Set{w\;|\; xw\in L\notequiv yw\in L}.
  \end{equation}

Naturally, we define the 
\emph{left distinguishability language} (or simply, distinguishability language) of $L$ by
\begin{equation}
\label{edisdefL}
\dis{L}=\{w\mid \exists x,y\in \Sigma^\star \ (xw \in L\ \wedge \ yw
\notin L)\}.  
\end{equation}

It is immediate that $\dis{L}=\displaystyle\bigcup_{x,y\in
  \Sigma^\star} \disw{L}{x,y}$.
%, and thus $L\subseteq\Delta(L)$, since
%$L=\Delta(\varepsilon,x)$ for any $x\not\mneq{L}\varepsilon$.  
%
% removed: CE1.a is a conter example: L\subsetneq \dis{L}
%
In the same way, for the \dfa $\mathcal{A}$, we define $\disw{L}{p,q}$
for $p,q\in Q$, and 
\begin{equation}
\label{edisdefA}
\dis{\mathcal{A}}=\{w\mid \exists p,q\in Q \  (\delta(p,w)\in F\ \wedge \ 
                          \delta(q,w)\notin F) \}.
\end{equation}

\begin{lemma}
 Let $\mathcal{A}_1,\mathcal{A}_2$ be two reduced \dfas such that 
$\lang{\mathcal{A}_1}=\lang{\mathcal{A}_2}=L$. 
Then $\dis{\mathcal{A}_1}=\dis{\mathcal{A}_2}=\dis{L}$.
\end{lemma}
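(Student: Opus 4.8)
The plan is to prove the stronger statement that for \emph{any} reduced \dfa $\mathcal{A}=\tuple{Q,\Sigma,\delta,q_0,F}$ with $\lang{\mathcal{A}}=L$ one has $\dis{\mathcal{A}}=\dis{L}$; applying this to both $\mathcal{A}_1$ and $\mathcal{A}_2$ then delivers the full chain $\dis{\mathcal{A}_1}=\dis{L}=\dis{\mathcal{A}_2}$ at once. The single idea driving the equality is the correspondence between words and states supplied by accessibility: a reached state $\delta(q_0,x)$ behaves, under any remaining input $w$, exactly like the word $x$, since $\delta(q_0,xw)=\delta(\delta(q_0,x),w)$, and hence $xw\in L$ if and only if $\delta(\delta(q_0,x),w)\in F$.

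For the inclusion $\dis{L}\subseteq\dis{\mathcal{A}}$, I would take $w\in\dis{L}$, fix witnesses $x,y$ with $xw\in L$ and $yw\notin L$, and set $p=\delta(q_0,x)$ and $q=\delta(q_0,y)$. The identity above immediately gives $\delta(p,w)\in F$ and $\delta(q,w)\notin F$, so $p$ and $q$ witness $w\in\dis{\mathcal{A}}$ (they are automatically distinct, as equal states cannot disagree on finality after reading the same $w$). For the reverse inclusion $\dis{\mathcal{A}}\subseteq\dis{L}$, I would take $w\in\dis{\mathcal{A}}$ with witnessing states $p,q$, and here invoke accessibility: because $\mathcal{A}$ is reduced, every state is reachable, so there are words $x,y$ with $\delta(q_0,x)=p$ and $\delta(q_0,y)=q$ (one may even take $x=x_{\cal A}(p)$ and $y=x_{\cal A}(q)$). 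The same identity then yields $xw\in L$ and $yw\notin L$, whence $w\in\dis{L}$.

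The argument presents no genuine obstacle; the only point deserving care is that accessibility is precisely what licenses passing from states back to words in the second inclusion, while the first inclusion needs nothing about $\mathcal{A}$ beyond $\lang{\mathcal{A}}=L$. In particular, the \emph{useful}/sink clause in the definition of a reduced \dfa plays no role here — plain accessibility suffices — and it is worth recording this, since it is exactly the reason why the distinguishability language depends only on $L$ and not on the chosen (reduced) automaton recognizing it.
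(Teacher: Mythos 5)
Your proof is correct and takes essentially the same route as the paper's: both reduce to showing $\dis{\mathcal{A}}=\dis{L}$ for a single reduced \dfa, proving $\dis{L}\subseteq\dis{\mathcal{A}}$ by setting $p=\delta(q_0,x)$, $q=\delta(q_0,y)$, and the reverse inclusion by using accessibility to pull states back to words. Your side remark that only accessibility (not usefulness) is needed is accurate but does not change the argument.
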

\begin{proof}
Let $\mathcal{A}_1=\tuple{Q,\Sigma, q_0,\delta,F}$ and
$L=\lang{\mathcal{A}_1}=\lang{\mathcal{A}_2}$.  
It is enough to prove that $\dis{L}=\dis{\mathcal{A}_1}$.
If $w\in \dis{L}$, then we have two words $x,y\in \Sigma^\star$ such that
$xw\in L$ and $yw\notin L$. 
Let $p=\delta(q_0,x)$ and $q=\delta(q_0,y)$. 
Then, $\delta(q_0,xw)\in F$ and $\delta(q_0,yw)\notin F$, so $w\in
\dis{\mathcal{A}_1}$. 
If  $w\in \dis{\mathcal{A}_1}$, then there exist $p, q\in Q$ such that 
$\delta(p,w)\in F$ and  $\delta(q,w)\notin F$; as $\mathcal{A}_1$ is reduced, 
there must exist $x,y$ with $\delta(q_0,x)=p$ and $\delta(q_0,y)=q $, 
therefore $\delta(q_0,xw)= \delta(p,w)\in F$ and $\delta(q_0,yw)=\delta(q,w)\notin F$,
 hence, $xw\in L$ and $yw\notin L$, i.e., we conclude $w\in \dis{L}$.
\end{proof}

This shows that the operator $\diso$ is independent of the automata
we choose to represent the language. In what follows, we present some characterization results for the
distinguishability operation, and show that iterating the operation
always leads to a fixed point. 

The distinguishability operation can be expressed directly by means of
the language quotients, as it can be seen in the following result. 
\begin{theorem}
\label{theo:dist-quot}
Let $L$ be a regular language with $\Set{w^{-1}L\;|\;w\in
  \kleene{\Sigma}}$ its set of (left) quotients. Then we have the equality
\begin{equation*}
 \dis{L} = \bigcup_{x\in\kleene{\Sigma}}x^{-1}L \setminus
 \bigcap_{x\in\kleene{\Sigma}}x^{-1}L. 
\end{equation*}
\end{theorem}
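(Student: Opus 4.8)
The plan is to prove the set equality by double inclusion, unwinding the definition of $\dis{L}$ in terms of quotients. The key observation is that the condition ``$xw \in L$'' is exactly the condition ``$w \in x^{-1}L$,'' so the definition in equation~\eqref{edisdefL} translates directly into a statement about membership in quotients. Specifically, $w \in \dis{L}$ iff there exist $x,y \in \kleene{\Sigma}$ with $w \in x^{-1}L$ and $w \notin y^{-1}L$. The entire proof is then a matter of recognizing that this existential statement is equivalent to asserting that $w$ lies in at least one quotient but not in all of them.

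First I would establish the forward inclusion. Suppose $w \in \dis{L}$. Then there are words $x,y$ with $xw \in L$ and $yw \notin L$, i.e.\ $w \in x^{-1}L$ and $w \notin y^{-1}L$. The first fact gives $w \in \bigcup_{x\in\kleene{\Sigma}} x^{-1}L$, and the second gives $w \notin \bigcap_{x\in\kleene{\Sigma}} x^{-1}L$ (since it fails to be in the particular quotient $y^{-1}L$). Hence $w$ belongs to the set difference, as required.

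For the reverse inclusion, suppose $w \in \bigcup_{x} x^{-1}L \setminus \bigcap_{x} x^{-1}L$. Membership in the union supplies some $x$ with $w \in x^{-1}L$, that is $xw \in L$. Failure to lie in the intersection supplies some $y$ with $w \notin y^{-1}L$, that is $yw \notin L$. Together these witness $w \in \dis{L}$ by definition. The two inclusions give the claimed equality.

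I do not anticipate a serious obstacle here: the result is essentially a reformulation of the defining property, and the only care needed is to handle the logical form of the $\notequiv$ condition correctly (the definition is symmetric in whether $xw$ or $yw$ lies in $L$, but the set-difference formulation is manifestly symmetric as well, so no case split is lost). One small point worth stating explicitly is that the quantifiers over $x$ and $y$ in~\eqref{edisdefL} are independent, which is exactly what lets the union (witnessed by $x$) and the complement of the intersection (witnessed by $y$) be treated separately.
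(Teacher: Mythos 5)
Your proof is correct and follows essentially the same route as the paper's: both directions are established by unwinding the definition of $\dis{L}$, translating $xw\in L$ into $w\in x^{-1}L$, and using the existential witnesses $x$ and $y$ independently for the union and the complement of the intersection. Your additional remark about the independence of the two quantifiers is a sound observation, though the paper's proof treats it as implicit.
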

\begin{proof}
Let $w\in \dis{L}$, i.e.,
 $\exists x,y\in\Sigma^\star, xw\in L
                          \AND yw\notin L$. 
Then $w\in x^{-1}L \AND w\notin y^{-1}L$, therefore  
${\displaystyle \dis{L} \subseteq \bigcup_{x\in\kleene{\Sigma}}x^{-1}L
  \setminus \bigcap_{x\in\kleene{\Sigma}}x^{-1}L}$. 

Let ${\displaystyle w\in \bigcup_{x\in\kleene{\Sigma}}x^{-1}L
  \setminus \bigcap_{x\in\kleene{\Sigma}}x^{-1}L}$, then let $x$ and
$y$ be such that $w\in x^{-1}L$ and $w\notin y^{-1}L$. Thus, $w\in
\dis{L}$. Hence, we conclude
$$\dis{L}=\bigcup_{x\in\kleene{\Sigma}}x^{-1}L \setminus
\bigcap_{x\in\kleene{\Sigma}}x^{-1}L.$$ 
%\qed
\end{proof}
\begin{corollary}
\label{cor;diffxyquot}
For a regular language $L$ and $x,y\in \Sigma^\star$, $\disw{L}{x,y}$ is 
the symmetric difference of the correspondent quotients, 
$\disw{L}{x,y}=(x^{-1}L)\Delta (y^{-1}L).$
\end{corollary}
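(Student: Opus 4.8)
The plan is to unfold Definition~\eqref{eq:dw} and translate the non-equivalence condition into statements about quotient membership. I would fix $x,y\in\Sigma^\star$ and let $w\in\Sigma^\star$ be arbitrary. By definition, $w\in\disw{L}{x,y}$ holds precisely when $xw\in L \notequiv yw\in L$, that is, when exactly one of the two memberships $xw\in L$ and $yw\in L$ is true.

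Next I would rewrite each membership in terms of a quotient. By the definition of the left quotient, $xw\in L$ is equivalent to $w\in x^{-1}L$, and likewise $yw\in L$ is equivalent to $w\in y^{-1}L$. Substituting these equivalences, the condition defining $\disw{L}{x,y}$ becomes: precisely one of $w\in x^{-1}L$ and $w\in y^{-1}L$ holds.

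Finally, the statement ``$w$ belongs to exactly one of two given sets'' is, by definition, membership in their symmetric difference, so this last condition is equivalent to $w\in (x^{-1}L)\,\Delta\,(y^{-1}L)$. Since $w$ was arbitrary, the two sets coincide, which gives $\disw{L}{x,y}=(x^{-1}L)\,\Delta\,(y^{-1}L)$ as claimed. I do not expect any genuine obstacle here: the whole argument is a single chain of equivalences, and the result can equally be seen as the two-quotient specialization of the computation in the proof of Theorem~\ref{theo:dist-quot}, where the global ``union minus intersection'' over all quotients collapses to the symmetric difference of the single pair $x^{-1}L$ and $y^{-1}L$.
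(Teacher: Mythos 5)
Your proof is correct and matches the intended argument: the paper states this as an immediate corollary (giving no separate proof), and your direct unfolding of Definition~(1) together with the equivalence $xw\in L \Leftrightarrow w\in x^{-1}L$ is exactly the chain of equivalences the authors rely on. Your closing remark that this is the two-quotient specialization of Theorem~1's ``union minus intersection'' computation is also the precise sense in which the paper derives it.
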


To help the reader better understand  how these languages look like, next we present 
some examples.

\begin{example}
\label{ex:singleton}
 If the language $L$ has only one quotient, i.e., $L=\emptyset$ or
 $L=\kleene{\Sigma}$, then $\dis{L}=\emptyset$, as there are no
 different quotients to distinguish. 
\end{example}

\begin{example}
\label{ex:epsilon}
If $\dis{L}=\{\varepsilon\}$, then we can only distinguish between final 
and non-final states, thus the minimal \dfa of $L$ has exactly two states 
corresponding to its  two quotients.
\end{example}

\begin{example}
\label{ex:symmerysg}
In this example we consider a family of languages $L_n$ for which $\dis{L_n}=\Sigma^\star$.
Let $L_n=\lang{A_n}$ for $3\leq n$,  with
$A_n=\tuple{Q_n,\{0,1\},\delta_n,0,\{0\}}$,  
where $Q_n=\Set{0,\ldots,n-1}$, 
$\delta_n(i,0)=i+1\mod{n-1}$, for $0\leq i\leq n-1$ and 
$\delta_n(1,1)=0$, $\delta_n(0,1)=1$,  $\delta_n(i,1)=i$, for $2\leq
i\leq n-1$.  
In Figure~\ref{fSigmaS}, we present~$A_5$.   
Both symbols of the alphabet induce permutations on $Q_n$: $1$ induces
a transposition (2-cycle), and  $0$, an $n$ cyclic permutation. It
follows that, the transition semigroup of $A_n$  
is the symmetric group $S_n$ of degree $n$, i.e., the set of all
permutations of $Q_n$.  
We always have  $\varepsilon\in\dis{L}$, and  
 every $w\in \Sigma^+$ induces a permutation on the states, 
for  $0\leq i,j\leq n-1$, $\delta(i,w)=i_w$ and  $\delta(j,w)=j_w$,
with $i_w\neq j_w$.  
Then, there must exist at least a pair $(i,j)$, such that $w\in R_i$ and
$w\notin R_j$, i.e., 
 $w\in  \dis{L_n}$,  thus $\dis{L_n}=\Sigma^\star$. 
Bell et al.~\cite{bell14:_symmet_group_and_quotien_compl}
studied those families of automata,  
and in particular, proved that they are uniformly minimal, i.e.,
minimal for every non-trivial choice of final
states~\cite{restivo12:_graph_theor_approac_to_autom_minim}.  
\end{example}

\begin{figure}[htb]
\begin{center}

\hfill\begin{tikzpicture}[>=stealth, shorten >=1pt, auto, node distance=1.5cm,initial text={}]

\node[state, accepting, initial, inner sep=1pt, minimum size=7pt] (S0){};
\node[state, inner sep=1pt, minimum size=15pt][below of=S0, minimum size=7pt] (S1){};
\node[state, inner sep=1pt, minimum size=15pt][right of=S1,minimum size=7pt] (S2){};
\node[state, inner sep=1pt, minimum size=15pt][above right of=S2, yshift = -.4cm, minimum size=7pt] (S3){};
\node[state, inner sep=1pt, minimum size=15pt][right of=S0, minimum size=7pt] (S4){};
\path[->](S0) edge [bend left]node {$0,1$} (S1)
         (S1) edge node [swap]{$0$} (S2)
		      edge [bend left] node {$1$} (S0)
		 (S2) edge node [swap]{$0$} (S3)
		      edge [loop below] node {$1$} ()
		 (S3) edge node {$0$} (S4)
		      edge [loop right] node {$1$} ()
		 (S4) edge [loop above] node {$1$} ()
		      edge node [swap] {$0$} (S0);
\end{tikzpicture}\quad
\begin{tikzpicture}[>=stealth, shorten >=1pt, auto, node distance=2cm,initial text={}]

\node[state, accepting, initial, inner sep=1pt, minimum size=7pt] (S0){};
\path[->]
		 (S0) edge [loop below] node {$0,1$} ();
\end{tikzpicture}\hfill~
\caption{Automaton  $A_5$ (left) and its distinguishability language,  $\Sigma^\star$ (right).
  } 
\label{fSigmaS} 
\end{center}
\end{figure}
\begin{figure}[h!]
\begin{center}
\begin{tikzpicture}[>=stealth, shorten >=1pt, auto, node distance=1.5cm,initial text={}]

\node[state, accepting, initial, inner sep=1pt, minimum size=7pt] (S0){};
\node[state, accepting, inner sep=1pt, minimum size=7pt] [right of=S0](S1){};
\node[state, accepting, inner sep=1pt, minimum size=7pt] [above of=S1](S2){};
\node[state, accepting, inner sep=1pt, minimum size=7pt] [right of=S2](S3){};
\node[state, accepting, inner sep=1pt, minimum size=7pt] [below of=S3](S4){};
\node[state, accepting, inner sep=1pt, minimum size=7pt] [right of=S4](S5){};
\node[state, accepting, inner sep=1pt, minimum size=7pt] [above of=S5, yshift=20pt](S6){};
\node[state, inner sep=1pt, xshift=1cm, minimum size=7pt] [right of=S5](S7){};
\node[state, inner sep=1pt, minimum size=7pt] [above of=S7](S8){};
\node[state, accepting, inner sep=1pt, minimum size=7pt] [right of=S7](S9){};
\node[state, inner sep=1pt, minimum size=7pt] [above of=S9](S10){};
\path[->] (S0) edge [loop above] node {$1$}()
               edge node {$0$}(S1)
		  (S1) edge [loop below] node {$0$}()
		       edge node {$1$}(S2)
		  (S2) edge node {$1$}(S3)
		       edge [bend left] node [pos=.2] {$0$}(S6)
		  (S3) edge node {$0$}(S1)
			   edge node {$1$}(S4)
		  (S4) edge node {$1$}(S1)
		       edge node {$0$}(S5)
		  (S5) edge [loop below] node {$0$}()
		       edge node {$1$}(S6)
		  (S6) edge node {$0$}(S7)
		       edge node [pos=.2]{$1$}(S8)
		  (S7) edge [loop left] node {$0$}()
		       edge node {$1$}(S8)
		  (S8) edge [out=120, in=30] node[swap, pos=.2] {$0$}(S6)
		       edge node [pos=.2]{$1$}(S9)
		  (S9) edge node [swap]{$0,1$}(S10)
		  (S10) edge [out=140,in=50] node [pos=.1,swap] {$1$}(S6)
		        edge [out=260,in=300] node {$0$} (S5);
\end{tikzpicture}
\end{center}
\caption{Example of an automaton with $\lang{\mathcal{A}}\neq
\dis{\lang{\mathcal{A}}}$.} 
\label{fdiff} 
\end{figure}

\begin{example}
\label{ex:ts45}
Consider the automaton ${\cal A}$ in Figure~\ref{fdiff}.\ 
We have that $\lang{{\cal A}}\neq \dis{\lang{\mathcal{A}}}$,  
but $\dis{\dis{\lang{{\cal A}}}}= \dis{\lang{\mathcal{A}}}$.  
The minimal automaton for $\dis{\lang{\mathcal{A}}}$ is presented in Figure~\ref{ffixp}.
\end{example}

\begin{figure}
\begin{center}
\begin{tikzpicture}[>=stealth, shorten >=1pt, auto, node distance=2cm,initial text={}]

\node[state, accepting, initial, inner sep=1pt, minimum size=7pt] (S0){};
\node[state, accepting, inner sep=1pt, minimum size=7pt] [right of=S0](S1){};
\node[state, accepting, inner sep=1pt, minimum size=7pt] [right of=S1](S2){};
\node[state, accepting, inner sep=1pt, minimum size=7pt] [right of=S2](S3){};
\node[state, inner sep=1pt, minimum size=7pt] [right of=S3](S4){};
\path[->] (S0) edge node {$0$} (S1)
	           edge [loop above] node {$1$}()
		  (S1) edge [loop above] node {$0$}()
		       edge [bend left] node {$1$}(S2)
		  (S2) edge [bend left] node{$1$}(S1)
		       edge node {$0$} (S3)
		  (S3) edge node {$0,1$} (S4)
		  (S4) edge [loop above] node {$0,1$}();
\end{tikzpicture}
\end{center}
\caption{Example of automaton where $\dis{\lang{\mathcal{A}_1}}=\lang{\mathcal{A}_1}$, i.e. distinguishability  
language is the same as the language of the words it can distinguish.}
\label{ffixp} 
\end{figure}

\begin{example}
\label{ex:suff}
Considering the language $L=\kleene{((0+1)(0+1))} (\varepsilon + 1)$, 
in Figure~\ref{f1} one can find, from left to right, a \dfa that accepts $L$, 
one that accepts  $\dis{L}=\kleene{(0 +  \kleene{1}10)}$, 
%$\dis{\dis{L}}=\varepsilon$, 
and one  for  $\disn{L}{n}=\varepsilon$, for $n\geq 2$.
\end{example}

From the last example, we can see that the language $\dis{L}$ contains
the word $0110$,  
but also the words $110$, $10$,  and $0$, which are all suffixes of  $0110$. 
This observation suggests that $\dis{L}$ is suffix closed, which is proved 
in the following theorem.

\begin{figure}[h!]
\hfill\begin{tikzpicture}[>=stealth, shorten >=1pt, auto, node distance=2.3cm,initial text={}]
\node[state, accepting, initial, inner sep=1pt, minimum size=7pt] (S0){};
\node[state, inner sep=1pt, minimum size=15pt, minimum size=7pt, yshift=-.5cm][above right of=S0] (S1){};
\node[state, accepting, inner sep=1pt, minimum size=15pt, minimum size=7pt, yshift=.5cm][below right of=S0] (S2){};
\path[->](S0) edge [bend left] node{$0$}(S1)
              edge node{$1$}(S2)
		 (S1) edge node{$0,1$}(S0)
         (S2) edge [bend left] node{$0,1$}(S0);
\end{tikzpicture}
\hfill\begin{tikzpicture}[>=stealth, shorten >=1pt, auto, node distance=2cm,initial text={}]
\node[state, accepting, initial, inner sep=1pt, minimum size=7pt] (S0){};
\node[state, inner sep=1pt, minimum size=15pt, minimum size=7pt][right of=S0] (S1){};
\path[->](S0) edge [bend left] node{$1$}(S1)
              edge [loop above] node{$0$}()
		 (S1) edge [bend left] node{$0$}(S0)
		      edge [loop above] node{$1$}();
\end{tikzpicture}
\hfill\begin{tikzpicture}[>=stealth, shorten >=1pt, auto, node distance=1.5cm,initial text={}]
\node[state, accepting, initial, inner sep=1pt, minimum size=7pt] (S0){};
\node[state, inner sep=1pt, minimum size=15pt, minimum size=7pt][right of=S0] (S1){};
\path[->] (S0) edge node {$0,1$} (S1)
          (S1) edge [loop below] node {$0,1$}();
\end{tikzpicture}\hfill~
\caption{Automata for the languages $L$, $\dis{L}$, and $\disn{L}{n}$,
  $n\geq 2$.} 
\label{f1}
\end{figure}

\begin{theorem}
\label{tsuffcl}
    If $L$ is a regular language, then the language $\dis{L}$ is suffix
    closed, i.e.,  
    $$(\forall w\in \dis{L})(\forall x,y\in \Sigma^\star)(w=xy\implies
    y\in \dis{L}).$$ 
\end{theorem}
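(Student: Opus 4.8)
The plan is to use the quotient characterization from Theorem~\ref{theo:dist-quot}, which expresses $\dis{L}$ as the difference of the union and the intersection of all left quotients. The key observation is that the set of quotients is closed under taking further quotients: if $K = x^{-1}L$ is a quotient of $L$, then for any letter $a \in \Sigma$, the language $a^{-1}K = (xa)^{-1}L$ is again a quotient of $L$, and by induction $y^{-1}K = (xy)^{-1}L$ is a quotient of $L$ for every $y \in \Sigma^\star$. This is the structural fact that drives the whole argument, so I would state it first.

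Next I would take an arbitrary $w \in \dis{L}$ and write $w = xy$ with $x,y \in \Sigma^\star$; the goal is to show $y \in \dis{L}$. By the characterization, there exist words $u,v$ with $w \in u^{-1}L$ and $w \notin v^{-1}L$, that is, $uw \in L$ and $vw \notin L$. Writing $w = xy$, this says $y \in (ux)^{-1}L$ and $y \notin (vx)^{-1}L$. Since both $(ux)^{-1}L$ and $(vx)^{-1}L$ are themselves quotients of $L$ (taking $x' = ux$ and $y' = vx$ as the relevant indices), we have exhibited a quotient containing $y$ and a quotient not containing $y$. Applying Theorem~\ref{theo:dist-quot} in the reverse direction, this places $y$ in $\bigcup_{z}z^{-1}L \setminus \bigcap_{z}z^{-1}L = \dis{L}$, as desired.

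The argument is essentially a one-line manipulation once the right viewpoint is fixed, so there is no serious obstacle; the only thing to be careful about is the bookkeeping of which word indexes which quotient. The cleaner way to phrase the core step is: if $w$ belongs to some quotient and is excluded from another, then appending a suffix $y$ of $w$ to the corresponding index words yields two index words whose quotients respectively contain and exclude $y$, so $y$ witnesses distinguishability as well. I would present the direct definitional version (using $xw \in L \notequiv yw \in L$) in parallel, since it avoids even invoking the theorem: from $uw = u(xy) \in L$ and $vw = v(xy) \notin L$ we read off $(ux)y \in L$ and $(vx)y \notin L$, which is exactly the defining condition for $y \in \dis{L}$ in Equation~\eqref{edisdefL}. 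This definitional route is the shortest and I would adopt it as the main line of proof, mentioning the quotient reformulation only as intuition.
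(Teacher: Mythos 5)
Your proposal is correct, and the definitional route you adopt as your main line is exactly the paper's proof: take $uw\in L$ and $vw\notin L$, write $w=xy$, and read off $(ux)y\in L$ and $(vx)y\notin L$, which is precisely the condition of Equation~\eqref{edisdefL}. The quotient reformulation via Theorem~\ref{theo:dist-quot} is equivalent bookkeeping and adds nothing beyond intuition, as you yourself note.
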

\begin{proof}
Let $w\in \dis{L}$, i.e., there exist $x,y\in \Sigma^\star$ such that  
$xw\in L$ and $yw\notin L$. 
If $v$ is a suffix of $w$, i.e., $w=uv$, for an $u\in \Sigma^\star$, 
then we can write $xuv\in L$ and $yuv\notin L$, which means that $v\in \dis{L}$.
%\qed
\end{proof}

Using Theorem~\ref{theo:dist-quot}, 
if $w\in \dis{L}$, then 
$w$ is a suffix of a word in $L$, and a suffix of the complement of $L$,
because 
$$\bigcup_{x\in\kleene{\Sigma}}x^{-1}L \setminus
\bigcap_{x\in\kleene{\Sigma}}x^{-1}L=\bigcup_{x\in\kleene{\Sigma}}x^{-1}L
\bigcap \bigcup_{x\in\kleene{\Sigma}}x^{-1}\comp{L}.$$ 

 Accordingly,
$\dis{L}\subseteq \suff{L}\cap \suff{\comp{L}}$, where $\suff{L}$ is
the language of all suffixes of $L$. 
If $w \in \suff{L}\cap \suff{\comp{L}}$, then we can find $x$ and $y$ such that
$xw\in L$ and $yw\in \comp{L}$, thus $w\in \dis{L}$.
Therefore, we just found a new way to express the distinguishability
language of $L$: 

\begin{theorem}
\label{theo:dss} If $L$ is a regular language, then
\begin{equation}
\label{eq:dss}  
\dis{L}=\suff{L}\cap\suff{\comp{L}}.
\end{equation}
\end{theorem}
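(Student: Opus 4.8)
The plan is to prove the identity by double inclusion, working straight from the definition \eqref{edisdefL} of $\dis{L}$, and leaning on one elementary reformulation: a word $w$ belongs to $\suff{L}$ precisely when there exists some $x\in\kleene{\Sigma}$ with $xw\in L$ (namely, $w$ is the suffix of the word $xw\in L$, and $x$ is the matching prefix). Symmetrically, $w\in\suff{\comp{L}}$ exactly when there is some $y$ with $yw\in\comp{L}$, i.e. $yw\notin L$. This dictionary converts the two existential quantifiers appearing in the definition of $\dis{L}$ directly into suffix-membership statements, after which both inclusions are immediate.

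First I would establish $\dis{L}\subseteq\suff{L}\cap\suff{\comp{L}}$. Taking $w\in\dis{L}$, definition \eqref{edisdefL} supplies $x,y\in\kleene{\Sigma}$ with $xw\in L$ and $yw\notin L$. The first condition exhibits $w$ as a suffix of a word of $L$, so $w\in\suff{L}$; rewriting the second as $yw\in\comp{L}$ exhibits $w$ as a suffix of a word of $\comp{L}$, so $w\in\suff{\comp{L}}$. Hence $w$ lies in the intersection. Conversely, for $\suff{L}\cap\suff{\comp{L}}\subseteq\dis{L}$, if $w\in\suff{L}\cap\suff{\comp{L}}$ then membership in $\suff{L}$ yields an $x$ with $xw\in L$, and membership in $\suff{\comp{L}}$ yields a $y$ with $yw\in\comp{L}$, that is $yw\notin L$. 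These are exactly the witnesses required by \eqref{edisdefL}, so $w\in\dis{L}$.

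As an alternative route I could instead start from the quotient characterization of Theorem~\ref{theo:dist-quot}, recognizing that $\bigcup_{x}x^{-1}L=\suff{L}$ and that $\kleene{\Sigma}\setminus\bigcap_{x}x^{-1}L=\suff{\comp{L}}$, and then rewriting the set difference as $A\setminus B=A\cap(\kleene{\Sigma}\setminus B)$ to land on $\suff{L}\cap\suff{\comp{L}}$. I do not anticipate any genuine obstacle here; the proof is short once the reformulation is in place. The only step deserving a little care is this equivalence between the quantifier phrasing ``$\exists x:\,xw\in L$'' and the suffix phrasing ``$w\in\suff{L}$'', and—should I take the second route—the De~Morgan observation that failing to lie in \emph{every} quotient of $L$ is the same as lying in \emph{some} quotient of $\comp{L}$. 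I would state the suffix equivalence explicitly and keep the rest as a two-line double inclusion.
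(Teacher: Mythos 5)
Your proposal is correct and is essentially the paper's own argument: the paper obtains the forward inclusion from Theorem~\ref{theo:dist-quot} via the identity $\bigcup_{x}x^{-1}L\setminus\bigcap_{x}x^{-1}L=\bigcup_{x}x^{-1}L\cap\bigcup_{x}x^{-1}\comp{L}$ and proves the reverse inclusion with exactly your witness argument, and since $x^{-1}L=\{w\mid xw\in L\}$ gives $\bigcup_{x}x^{-1}L=\suff{L}$, your direct quantifier unwinding of definition~(\ref{edisdefL}) is the same computation in different notation. Indeed, your ``alternative route'' is verbatim the paper's proof, so there is no gap and no genuinely different idea to compare.
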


Because $\dis{L}$ is suffix closed,  $\dis{L}=\suff{\dis{L}}$, hence
$\suff{\dis{L}}\subseteq \dis{L}\subseteq \suff{L}$ and
$\disn{L}{2}\subseteq \dis{L}\subseteq \suff{L}$. 
In general, we have for every $n\geq 1$, the following inclusion
\begin{equation}
\label{eqchain}
\disn{L}{n+1}\subseteq \disn{L}{n}. 
\end{equation}

Consequently, we may ask if this hierarchy is infinite or not, in other words, 
we may ask if for any language $L$,
there exists $n\geq 0$ such that $\disn{L}{n+1}= \disn{L}{n}$. 

\begin{example}
\label{ex:d3}
Consider the language $L=\lang{\cal A}$, where ${\cal A}$ is given 
in Figure~\ref{f2}, on the left. For the language $L$, we have that $L\neq \dis{L}$ and 
 $\dis{L}\neq\disn{L}{2}=\disn{L}{n}$, for $n\geq 2$. 
The minimal automaton for  $\disn{L}{2}$ is depicted on the right. 
The minimal automaton for $\dis{L}$ has $7$ states.
\end{example}

\begin{figure}[h!]
\begin{center}
%\begin{picture}(150,55)
%\put(20,5){
\hfill\begin{tikzpicture}[>=stealth, shorten >=1pt, auto, node distance=1.9cm,initial text={}]

\node[state, initial, inner sep=1pt, minimum size=7pt] (S0){};
\node[state, inner sep=1pt, minimum size=7pt][above of=S0,yshift=-6pt] (S1){};
\node[state, accepting, inner sep=1pt, minimum size=7pt][right of=S1] (S2){};
\node[state, inner sep=1pt, minimum size=7pt][right of=S2] (S3){};
\node[state, accepting, inner sep=1pt, minimum size=7pt,yshift=9pt][below of=S2] (S4){};
\node[state, accepting, inner sep=1pt, minimum size=7pt][below of=S0, yshift=-0pt] (S5){};
\node[state, inner sep=1pt, minimum size=7pt][right of=S5] (S6){};
\node[state, inner sep=1pt, minimum size=7pt][right of=S6] (S7){};
\node[state, accepting, inner sep=1pt, minimum size=7pt, yshift=10pt][above of=S7] (S8){};
\path[->] (S0) edge node {$0$}(S1)
               edge node {$1$}(S5)
		  (S1) edge [loop above] node {$1$}()
		       edge node {$0$}(S2)
		  (S2) edge [loop above] node {$0$}()
		       edge node {$1$}(S3)
		  (S3) edge node [pos=.2] {$0,1$}(S4)
		  (S4) edge [loop above] node {$0$}()
		       edge node {$1$}(S1)
		  (S5) edge node {$0,1$}(S6)
		  (S6) edge [loop above] node {$0$}()
		       edge node {$1$}(S7)
		  (S7) edge [loop right] node {$1$}()
		       edge node {$0$}(S8)
		  (S8) edge [loop right] node {$0$}()
		       edge node[pos=.3,swap] {$1$}(S5);
\end{tikzpicture}\hfill
%\put(90,10){
\begin{tikzpicture}[>=stealth, shorten >=1pt, auto, node distance=1.9cm,initial text={}]

\node[state, initial, accepting, inner sep=1pt, minimum size=7pt] (S0){};
\node[state, accepting,inner sep=1pt, minimum size=7pt][right of=S0] (S1){};
\node[state, accepting,inner sep=1pt, minimum size=7pt][below right of=S0] (S2){};
\node[state, inner sep=1pt, minimum size=7pt][right of=S2] (S3){};
\path[->] (S0) edge [loop above] node {$0$}()
			   edge [bend left] node {$1$}(S1)
		  (S1) edge [bend left] node {$0$}(S0)
		       edge node {$1$}(S2)
		  (S2) edge node {$0$}(S0)
		       edge node {$1$}(S3)
		  (S3) edge [loop below] node {$0,1$}();

\end{tikzpicture}\hfill~
%\end{picture}
\end{center}
\caption{Example of a language $L$ with
$\dis{L}\neq\disn{L}{2}=\disn{L}{n}$, for $n\geq 3$. On the left a \dfa for $L$ and on the right a \dfa for $\disn{L}{2}$.} 
\label{f2}
\end{figure}

The following lemma will be useful for the rest of the section.

\begin{lemma}
\label{lsufcap}
 If $L,M\subseteq \Sigma^\star$ are suffix-closed languages, then
$\suff{L}\cap\suff{M}=\suff{L\cap M}$, and 
$\suff{L}\cup\suff{M}=\suff{L\cup M}$.
\end{lemma}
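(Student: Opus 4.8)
The plan is to exploit the fact that a language $K$ is suffix-closed precisely when $\suff{K}=K$, so that the two hypotheses give $\suff{L}=L$ and $\suff{M}=M$. Under this reformulation the first claimed identity reads $L\cap M=\suff{L\cap M}$ and the second reads $L\cup M=\suff{L\cup M}$; that is, both assertions amount to showing that the class of suffix-closed languages is closed under intersection and under union, respectively.

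First I would dispose of the union, which in fact needs no hypothesis at all: for arbitrary languages one always has $\suff{A\cup B}=\suff{A}\cup\suff{B}$, since a word $v$ is a suffix of some $w\in A\cup B$ if and only if it is a suffix of some word of $A$ or of some word of $B$. Applying this with $A=L$ and $B=M$, and then substituting $\suff{L}=L$ and $\suff{M}=M$, yields $\suff{L}\cup\suff{M}=\suff{L\cup M}$ directly.

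For the intersection I would argue the two inclusions separately. The inclusion $\suff{L\cap M}\subseteq\suff{L}\cap\suff{M}$ is again unconditional and purely monotone: any suffix of a word lying in both $L$ and $M$ is in particular a suffix of a word of $L$ and a suffix of a word of $M$. The reverse inclusion is where the hypothesis does its work. Here I would take $v\in\suff{L}\cap\suff{M}=L\cap M$ (using suffix-closure) and show directly that $L\cap M$ is suffix-closed: if $w\in L\cap M$ and $v$ is a suffix of $w$, then $v\in L$ because $L$ is suffix-closed and $v\in M$ because $M$ is suffix-closed, whence $v\in L\cap M$. Combining this with the trivial inclusion $L\cap M\subseteq\suff{L\cap M}$ gives $\suff{L\cap M}=L\cap M=\suff{L}\cap\suff{M}$.

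The main obstacle, and the only place suffix-closure does genuine work, is precisely this reverse inclusion for the intersection. It really fails without the hypothesis: for $L=\{ab\}$ and $M=\{cb\}$ one has $b\in\suff{L}\cap\suff{M}$, yet $L\cap M=\emptyset$ and so $\suff{L\cap M}=\emptyset$. The role of the assumption is exactly to exclude such ``accidental'' common suffixes that do not descend from a single element shared by the two languages; once suffix-closure guarantees $\suff{L}=L$ and $\suff{M}=M$, this pathology cannot arise and the proof reduces to the elementary closure argument above.
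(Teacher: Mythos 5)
Your proof is correct and essentially the same as the paper's: the paper also treats the union as immediate, notes the unconditional inclusion $\suff{L\cap M}\subseteq\suff{L}\cap\suff{M}$, and for the reverse inclusion uses suffix-closure of $L$ and $M$ to conclude that a common suffix $w$ (with $xw\in L$, $yw\in M$) itself lies in $L\cap M\subseteq\suff{L\cap M}$. Your repackaging via the identity $\suff{K}=K$ and closure of suffix-closed languages under intersection is just a mildly more structural phrasing of that same two-line argument.
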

\begin{proof}
It is obvious that the equality holds for reunion, and  
the inclusion $\suff{L\cap M}\subseteq \suff{L}\cap\suff{M}$, for intersection is true.
If $w\in \suff{L}\cap\suff{M}$, then
there exist $x,y\in \Sigma^\star$ such that $xw\in L$ and $yw\in M$. 
Because $L$ and $M$ are suffix closed, then
$w\in L\cap M\subseteq \suff{L\cap M}.$
%\qed
\end{proof}

In the following result, we prove that the iteration of $\diso$
operations always reaches a fixed point. 

\begin{theorem}
\label{theo:fixpoint}
Let $L\subseteq \kleene{\Sigma}$ be a regular language. 
Then we have that  
  $\disn{L}{3}=\disn{L}{2}$.
\end{theorem}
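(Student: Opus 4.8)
The plan is to reduce the claim to a self-contained statement about suffix-closed languages and then invoke the fact that $\dis{L}$ is always suffix closed. Concretely, I will prove the following: for \emph{any} suffix-closed language $K$ one has $\dis{\dis{K}}=\dis{K}$. Granting this, since $\dis{L}$ is suffix closed by Theorem~\ref{tsuffcl}, I simply set $K=\dis{L}$ and obtain $\disn{L}{3}=\dis{\dis{\dis{L}}}=\dis{\dis{L}}=\disn{L}{2}$. By the chain~\eqref{eqchain} the inclusion $\disn{L}{3}\subseteq\disn{L}{2}$ is already free, so the real content is the reverse inclusion, which the identity above supplies.

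First I would rewrite $\diso$ on a suffix-closed language in a convenient form. If $K$ is suffix closed then $\suff{K}=K$, so Theorem~\ref{theo:dss} gives $\dis{K}=\suff{K}\cap\suff{\comp{K}}=K\cap\suff{\comp{K}}$. The key object is the \emph{core} $C(K)=\Set{v\mid \kleene{\Sigma}v\subseteq K}$, the set of words all of whose left extensions lie in $K$. A word $v$ fails to be in $\suff{\comp{K}}$ exactly when no $u$ satisfies $uv\notin K$, i.e. exactly when $\kleene{\Sigma}v\subseteq K$; hence $\suff{\comp{K}}=\comp{C(K)}$ and therefore $\dis{K}=K\setminus C(K)$. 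Note also $C(K)\subseteq K$ (take $u=\varepsilon$), so this is a genuine removal of part of $K$.

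The crux is to show $C(\dis{K})=\emptyset$ for suffix-closed $K$; this immediately yields $\dis{\dis{K}}=\dis{K}\setminus C(\dis{K})=\dis{K}$, since $\dis{K}$ is again suffix closed by Theorem~\ref{tsuffcl} and the displayed formula applies to it. Suppose, for contradiction, that $v\in C(\dis{K})$, i.e. $\kleene{\Sigma}v\subseteq \dis{K}$. Taking $u=\varepsilon$ gives $v\in \dis{K}=K\setminus C(K)$, so $v\notin C(K)$. On the other hand $\dis{K}\subseteq K$, so $\kleene{\Sigma}v\subseteq \dis{K}\subseteq K$ forces $v\in C(K)$ — a contradiction. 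Hence $C(\dis{K})=\emptyset$, completing the argument.

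The step I expect to be the real obstacle, and the conceptual heart of the proof, is identifying the core $C(K)$ and recognizing the mutual exclusivity $\dis{K}\cap C(K)=\emptyset$ together with $\dis{K}\subseteq K$: these are precisely what force the second iterate to remove nothing. Everything else is bookkeeping with Theorem~\ref{theo:dss} and suffix-closure. If a more hands-on write-up were wanted, the same idea can be phrased purely in terms of the defining condition ``$v\in\dis{K}$ iff $v\in K$ and some $uv\notin K$'', avoiding the complementation identity $\suff{\comp{K}}=\comp{C(K)}$, but the core-based formulation keeps the calculation minimal.
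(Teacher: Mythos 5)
Your proof is correct, and it takes a genuinely different route from the paper's. The paper argues purely equationally: it first records $\disn{L}{2}=\dis{L}\cap\suff{\comp{\dis{L}}}$ (using suffix-closedness of $\dis{L}$), then expands $\suff{\comp{\disn{L}{2}}}$ via De Morgan and Lemma~\ref{lsufcap} into $\suff{\comp{\dis{L}}}\cup\suff{\comp{\suff{\comp{\dis{L}}}}}$, and absorbs terms using the inclusion $\disn{L}{2}\subseteq\suff{\comp{\dis{L}}}$ to land back at $\disn{L}{2}$. You instead isolate a structural fact: with $C(K)=\Set{v\mid\kleene{\Sigma}v\subseteq K}$, the identity $\suff{\comp{K}}=\comp{C(K)}$ holds for every language $K$, so on suffix-closed languages $\diso$ acts by $\dis{K}=K\setminus C(K)$, and the residue always has empty core, since any $v\in C(\dis{K})$ would lie both in $K\setminus C(K)$ (take $u=\varepsilon$) and in $C(K)$ (because $\kleene{\Sigma}v\subseteq\dis{K}\subseteq K$). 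Hence $\diso$ is idempotent on suffix-closed languages, and taking $K=\dis{L}$, which is suffix closed by Theorem~\ref{tsuffcl}, yields $\disn{L}{3}=\disn{L}{2}$; your restriction to suffix-closed $K$ is essential and correctly placed, since $\dis{K}\subseteq K$ fails in general and the paper's Example~\ref{ex:d3} shows $\dis{L}\neq\disn{L}{2}$ can occur. Comparing the two: the paper's computation is shorter once Lemma~\ref{lsufcap} is available, while yours explains \emph{why} the iteration stabilizes ($\diso$ on suffix-closed languages strips the left-universal core, and stripping once leaves nothing further to strip), uses regularity nowhere (Theorems~\ref{theo:dss} and~\ref{tsuffcl} hold for arbitrary languages, though for the theorem as stated one may also just note $\dis{L}$ is regular), and as a byproduct characterizes the fixed points among suffix-closed languages as exactly those with $C(K)=\emptyset$ --- a condition that Theorem~\ref{ldead} upgrades to ``has $\emptyset$ as a quotient'' only with the help of regularity.
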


\begin{proof}
We have the following equalities:
\begin{eqnarray}
\label{eq:d2}
  \disn{L}{2}&=&\dis{\dis{L}}=\suff{\dis{L}}\cap
  \suff{\comp{\dis{L}}}= \dis{L}\cap 
  \suff{\comp{\dis{L}}}.
   \end{eqnarray}
  
Now, computing the next iteration of $\diso$, we get
 $\disn{L}{3}=\suff{\disn{L}{2}}\cap  \suff{\comp{\disn{L}{2}}}$.

Using~(\ref{eq:d2}) and Lemma~\ref{lsufcap},  we obtain the equalities
\begin{eqnarray*}
\suff{\comp{\disn{L}{2}}}=\suff{\comp{\dis{L}\cap\suff{\comp{\dis{L}}}}}&=&
  \suff{\comp{\dis{L}}}\cup  \suff{\comp{\suff{\comp{\dis{L}}}}}.
\end{eqnarray*}
Because $\disn{L}{2}$ is a suffix-closed language, it follows that
\begin{eqnarray*}
\disn{L}{3} & = & \disn{L}{2}\cap (\suff{\comp{\dis{L}}}\cup \suff{\comp{\suff{\comp{\dis{L}}}}}) \\
            & = & (\disn{L}{2}\cap \suff{\comp{\dis{L}}})\cup (\disn{L}{2}\cap \suff{\comp{\suff{\comp{\dis{L}}}}}))\\
            & = & \disn{L}{2}\cup (\disn{L}{2}\cap \suff{\comp{\suff{\comp{\dis{L}}}}}))=\disn{L}{2}.
\end{eqnarray*}
\end{proof}

The following results give some characterization for the languages
that are fixed points for $\diso$. 

\begin{lemma}
\label{lem:qempty}
Given a regular language $L$, if $L$ has $\emptyset$ as a quotient then $\dis{L}=\suff{L}$.
\end{lemma}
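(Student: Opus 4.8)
The plan is to derive the result directly from the characterization $\dis{L}=\suff{L}\cap\suff{\comp{L}}$ established in Theorem~\ref{theo:dss}. Since an intersection can only shrink a set, that identity already gives the inclusion $\dis{L}\subseteq\suff{L}$ for free; the whole content of the lemma is therefore the reverse inclusion, which I would obtain by showing that the hypothesis forces the second factor $\suff{\comp{L}}$ to collapse to all of $\kleene{\Sigma}$.

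First I would unpack the hypothesis. Saying that $L$ has $\emptyset$ as a quotient means there is a word $z\in\kleene{\Sigma}$ with $z^{-1}L=\emptyset$; by the definition of quotient this is exactly the assertion that $zv\notin L$ for every $v\in\kleene{\Sigma}$ (equivalently, reading $z$ in the minimal \dfa of $L$ leads to the dead state, as already noted in Section~\ref{snotation}).

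The key step is then the observation that this single word $z$ witnesses membership in $\suff{\comp{L}}$ for \emph{every} word. Indeed, given an arbitrary $w\in\kleene{\Sigma}$, the word $zw$ satisfies $zw\notin L$, i.e.\ $zw\in\comp{L}$, and $w$ is plainly a suffix of $zw$; hence $w\in\suff{\comp{L}}$. As $w$ was arbitrary, this proves $\suff{\comp{L}}=\kleene{\Sigma}$.

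Substituting back into Theorem~\ref{theo:dss} finishes the argument: $\dis{L}=\suff{L}\cap\suff{\comp{L}}=\suff{L}\cap\kleene{\Sigma}=\suff{L}$. I do not expect any genuine obstacle here, since everything reduces to the already-proved identity of Theorem~\ref{theo:dss}; the only point worth a sanity check is the degenerate case $L=\emptyset$, where $\emptyset$ is in fact the unique quotient and both sides equal $\emptyset$, so the claimed equality still holds.
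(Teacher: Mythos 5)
Your proof is correct and takes essentially the same route as the paper: its one-line proof likewise observes that $z^{-1}L=\emptyset$ forces $\suff{\comp{L}}=\kleene{\Sigma}$ and then reads off the result from Theorem~\ref{theo:dss}. You have merely made explicit the witnessing step (every $w$ is a suffix of $zw\in\comp{L}$) that the paper leaves implicit.
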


\begin{proof}
Because $z^{-1}L=\emptyset$, for some word $z$, we have that
$\kleene{\Sigma}= \suff{\comp{L}}$. 
\end{proof}
This lemma makes the following result immediate.
\begin{theorem}
\label{theo:ldirectdead}
 If $L$ is a suffix closed regular language with $\emptyset$ as one of the
 quotients,  then $L$ is a fixed point for $\diso$, i.e., $\dis{L}=L$.
\end{theorem}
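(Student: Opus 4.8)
The plan is to chain the immediately preceding Lemma~\ref{lem:qempty} together with the defining property of suffix closure, since the statement is essentially a one-line consequence of that lemma. First I would invoke Lemma~\ref{lem:qempty}: because $L$ has $\emptyset$ among its quotients, the lemma gives $\dis{L}=\suff{L}$ with no further work. This reduces the whole problem to showing that $\suff{L}$ coincides with $L$, using the second hypothesis.

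Next I would unfold what suffix closure means. For any language whatsoever, the inclusion $L\subseteq\suff{L}$ holds trivially, since every word is a suffix of itself (take $u=\varepsilon$ in the decomposition $w=uv$). The hypothesis that $L$ is suffix closed is precisely the reverse inclusion: every suffix of a word of $L$ again lies in $L$, i.e. $\suff{L}\subseteq L$. Combining the two inclusions yields the equality $\suff{L}=L$.

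Finally I would chain the two equalities to conclude $\dis{L}=\suff{L}=L$, which is exactly the claim that $L$ is a fixed point of $\diso$. I do not expect any genuine obstacle: all the substantive content is carried by Lemma~\ref{lem:qempty}, and the remaining step is the routine observation that a suffix-closed language equals its own set of suffixes. The only point I would take care to state explicitly is that equality $\suff{L}=L$, so that the logical chain from $\dis{L}$ back to $L$ is transparent.
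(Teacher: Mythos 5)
Your proposal is correct and follows exactly the paper's route: the paper states that Lemma~\ref{lem:qempty} makes the theorem ``immediate,'' which is precisely your chain $\dis{L}=\suff{L}=L$, with the second equality supplied by suffix closure. You have merely spelled out the one routine step the paper leaves implicit.
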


\begin{corollary}
\label{cor:emptyquotient}
Let $L$ be a regular language. If $\dis{L}$ has $\emptyset$ as a quotient, then $\disn{L}{2}=\dis{L}$.
 \end{corollary}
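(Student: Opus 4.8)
The plan is to recognize this corollary as nothing more than an instance of Theorem~\ref{theo:ldirectdead} applied to the language $M=\dis{L}$. The target identity $\disn{L}{2}=\dis{L}$ unfolds to $\dis{\dis{L}}=\dis{L}$, which is exactly the statement that $\dis{L}$ is a fixed point for $\diso$. So the whole task reduces to checking that $M=\dis{L}$ meets the three hypotheses of Theorem~\ref{theo:ldirectdead}: that it is regular, that it is suffix closed, and that it has $\emptyset$ among its quotients.

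First I would note that $M=\dis{L}$ is regular. This is immediate from either of the earlier characterizations, e.g. Theorem~\ref{theo:dss} gives $\dis{L}=\suff{L}\cap\suff{\comp{L}}$, and the suffix closure of a regular language is regular, as is the complement and the intersection; hence $\dis{L}$ is regular whenever $L$ is. Second, $M=\dis{L}$ is suffix closed: this is precisely Theorem~\ref{tsuffcl}, so no additional work is needed. Third, $M=\dis{L}$ has $\emptyset$ as a quotient by the hypothesis of the corollary.

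With all three hypotheses in place, Theorem~\ref{theo:ldirectdead} applies verbatim to $M$ and yields $\dis{M}=M$, that is, $\dis{\dis{L}}=\dis{L}$, which is the desired $\disn{L}{2}=\dis{L}$.

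I do not expect a genuine obstacle here, since every ingredient has already been proved: the only thing to be careful about is matching the statement of the corollary to the hypotheses of Theorem~\ref{theo:ldirectdead}, in particular making explicit that the suffix-closedness needed for the invoked theorem comes for free from Theorem~\ref{tsuffcl} and not from any assumption on $L$ itself. In other words, the content of the corollary is simply that $\dis{L}$ automatically satisfies two of the three conditions of Theorem~\ref{theo:ldirectdead}, so the single extra assumption that it has an empty quotient suffices to conclude that it is a fixed point.
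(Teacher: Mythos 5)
Your proof is correct and matches the paper's intent exactly: the corollary is stated there without proof as an immediate consequence of Theorem~\ref{theo:ldirectdead} applied to $M=\dis{L}$, whose suffix-closedness is supplied by Theorem~\ref{tsuffcl} and whose empty quotient is the corollary's hypothesis. Your explicit verification of the three hypotheses (regularity, suffix-closedness, empty quotient) is precisely the argument the paper leaves implicit.
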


Note that suffix-closeness of $L$ is not sufficient to ensure that $L$ has $\emptyset$ as quotient.
 For that, it is enough to consider the language given by $\kleene{0}+\kleene{0}1\kleene{(1+0\kleene{0}1)}$.
 However, if $L$ is a $\diso$ fixed point, the implication yields.

\begin{theorem}
\label{ldead}
    Let $L$ be a regular language.
If $\dis{L}=L$, then $L$ has $\emptyset$ as a quotient. 
\end{theorem}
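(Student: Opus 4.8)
The plan is to first reduce the statement to a purely quotient-theoretic condition, and then settle that condition by a synchronization argument carried out on the reverse language.

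\emph{Reduction.} The cases $L=\emptyset$ and $L=\kleene{\Sigma}$ are immediate: $\emptyset$ has $\emptyset$ as its only quotient, while $\dis{\kleene{\Sigma}}=\emptyset\neq\kleene{\Sigma}$, so the hypothesis is vacuous there. So I would assume $\emptyset\neq L\neq\kleene{\Sigma}$. Since $\dis{L}$ is always suffix closed (Theorem~\ref{tsuffcl}) and $\dis{L}=L$, the language $L$ is suffix closed, hence $\suff{L}=L$ and $\bigcup_{x\in\kleene{\Sigma}}x^{-1}L=\suff{L}=L$. By Theorem~\ref{theo:dist-quot} we then get $\dis{L}=L\setminus\bigcap_{x\in\kleene{\Sigma}}x^{-1}L$, and because $\bigcap_{x}x^{-1}L\subseteq \varepsilon^{-1}L=L$, the equality $\dis{L}=L$ holds if and only if $\bigcap_{x\in\kleene{\Sigma}}x^{-1}L=\emptyset$. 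Thus it remains to prove: if $L$ is suffix closed and $\bigcap_{x}x^{-1}L=\emptyset$, then $L$ has $\emptyset$ as a quotient.

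\emph{Passing to the reverse.} I would work with $M=L^R$, which is prefix closed because $L$ is suffix closed. In the minimal \dfa of a prefix-closed language with $\emptyset\neq M\neq\kleene{\Sigma}$, every non-final state is dead, since no extension of a rejected prefix can be accepted; hence all non-final states coincide with a single absorbing sink $d$. I would then translate the two relevant conditions through reversal. On one hand, $\bigcap_{x}x^{-1}L=\emptyset$ says that no word $w$ satisfies $\kleene{\Sigma}w\subseteq L$, equivalently that no word $u=w^R$ satisfies $u\kleene{\Sigma}\subseteq M$; since leaving the final states of $M$ means entering $d$, this is exactly the assertion that \emph{every} state of the minimal \dfa of $M$ can reach $d$. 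On the other hand, $\emptyset$ being a quotient of $L$ means some $z$ has $z\kleene{\Sigma}\subseteq\comp{L}$, equivalently $\kleene{\Sigma}z^R\subseteq\comp{M}$, which says precisely that the word $z^R$ drives \emph{every} state of the minimal \dfa of $M$ into $d$.

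\emph{The synchronization step.} The core of the argument is the observation that an absorbing sink $d$ reachable from every state can be reached from all states by one common word: enumerating the states $p_1,\dots,p_n$, I would build $u=y_1\cdots y_n$ where $y_i$ drives the current image of $p_i$ to $d$; because $d$ is absorbing, once a state has been sent to $d$ it stays there, so $u$ maps every $p_i$ to $d$. Applying this to $M$ (whose every state reaches $d$, by the first translation) yields a synchronizing word, which by the second translation is exactly a word $z$ with $z^{-1}L=\emptyset$, completing the proof.

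I expect the main obstacle to be the two reversal translations rather than the synchronization itself: one must argue carefully that ``$\bigcap_{x}x^{-1}L=\emptyset$'' corresponds to universal reachability of the sink, and that ``$\emptyset$ is a quotient'' corresponds to a single word collapsing all states onto that sink. Once these dictionary entries are established, the synchronization lemma is a short greedy concatenation that finishes the claim.
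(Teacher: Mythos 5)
Your proof is correct, and it takes a genuinely different route from the paper's. The paper argues by contradiction entirely at the level of quotients: assuming $\dis{L}=L$ and that no quotient is empty, it alternately extends a word on the right (every word extends into $L$, since no quotient is empty) and on the left (the fixed-point property supplies, for each word of $L$, a left extension outside $L$), obtaining words $u_i\cdots u_0$ that are pairwise distinguishable thanks to suffix-closedness, hence infinitely many distinct quotients --- contradicting regularity. You instead first distill from Theorem~\ref{tsuffcl} and Theorem~\ref{theo:dist-quot} the clean equivalence that, once $L$ is suffix closed, $\dis{L}=L$ holds iff $\bigcap_{x\in\kleene{\Sigma}}x^{-1}L=\emptyset$ (your observation that this intersection sits inside $\varepsilon^{-1}L=L$ is exactly the right point), then pass to $M=L^R$, where prefix-closedness forces every non-final state of the minimal \dfa to have empty right language and hence to collapse, by minimality, into a single absorbing sink $d$; your two dictionary entries check out, since in a minimal \dfa all states are reachable, so ``no $w$ with $\kleene{\Sigma}w\subseteq L$'' is precisely reachability of $d$ from every state, and ``$\emptyset$ is a quotient of $L$'' is precisely the existence of one word driving every state into $d$; and the greedy concatenation $u=y_1\cdots y_n$ does synchronize into an absorbing sink. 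The trade-off: the paper's argument is shorter and never leaves the quotient calculus, while yours is constructive --- it exhibits an explicit word $z=u^R$ with $z^{-1}L=\emptyset$, with an implicit length bound of order $n(n-1)$ for $n=sc(L^R)$, and it isolates a reusable lemma (an absorbing sink reachable from all states admits a common collapsing word) that links the fixed points of $\diso$ to synchronization. One small caution: your claim that every non-final state of $M$ is dead presupposes that such a state exists, i.e.\ $M\neq\kleene{\Sigma}$, which your initial reduction disposing of $L=\emptyset$ and $L=\kleene{\Sigma}$ correctly guarantees.
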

\begin{proof}
Let $L$ be a regular language that is fixed point for $\diso$, thus $L$ is suffix closed and 
\begin{equation}\label{fix-point}
(\forall w\in L)(\exists u\in\Sigma^\star) (uw\notin L).
\end{equation}
Assume that $L$ does not have $\emptyset$ as quotient, i.e., 
\begin{equation}\label{ecological}
(\forall w\in\Sigma^\star)(\exists v\in \Sigma^\star)(wv\in L).
\end{equation}
Let $w\notin L$ ($\Sigma^\star$ is not a fixed point for $\diso$). 
Thus by~\eqref{ecological} there  exists a $v_0\in \Sigma^\star$ such that $wv_0\in L$
 and because $L$ is suffix closed, it follows that $v_0\in L$. 
Using \eqref{fix-point} there exists $u_0$ such that $u_0wv_0\notin L$. 
Using the same reasoning, we can find $u_1\in\Sigma^\star$ and $v_1\in L$ such that 
$$u_0wv_0v_1\in L
\text{ and } u_1u_0wv_0v_1\notin L.
$$
The word $wv_0v_1$ distinguishes $u_0$ from $u_1u_0$, thus these words cannot belong to the same quotient. Suppose that we have iterated $n$ times this process having 
$$u_{n-1}\cdots u_0wv_0\cdots 
v_n\in L \text{ and } u_n\cdots u_0wv_0\cdots v_n\notin L,$$
with all $u_i\cdots u_0$ belonging to 
distinct quotients. We can apply this process one more time, obtaining 
$$u_n\cdots u_0wv_0\cdots v_{n+1}\in L \text{ and } u_{n+1}\cdots u_0wv_0\cdots v_{n+1}\notin L.$$ 
It is easy to see that the word $wv_0\cdots v_{n+1}$ distinguishes $u_{n+1}\cdots u_{0}$ from 
any of the previous words $u_i\cdots u_0$ (with $i\leq n$) because $u_i\cdots u_0wv_0\cdots 
v_{n+1}$ is a suffix of $u_n\cdots u_0wv_0\cdots v_{n+1}\in L$.
Thus, the number of $L$ quotients cannot be finite, a contradiction.
\end{proof}

By contraposition over the last result, we get that  a language $L$
with all its quotients non-empty cannot be a fixed point for $\diso$. 
 We know that, if a language $L$ is such
that $\kleene{\Sigma}=\suff{\comp{L}}$, then $\emptyset$ must be one of
the quotients of $L$. 
In Examples~\ref{ex:symmerysg}--\ref{ex:suff}
and Example~\ref{ex:d3}, we have  languages $L$ with all quotients non-empty and  $L\neq\dis{L}$. Considering Theorem~\ref{ldead} and Theorem~\ref{fix-point}, given any regular language $L$ we can iterate $\diso$ at most two times to obtain a language that has $\empty$ as a quotient.

For a finite language $L$, 
it follows from Lemma~\ref{lem:qempty}
%the following corollary shows 
that the
distinguishability language of $L$ coincides with the set of all suffixes of $L$,
therefore, $\dis{L}$ is a fixed point of the $\diso$ operator.

\begin{corollary}
\label{cor:finite}
  If $L$ is a finite language, then $\dis{L}=\suff{L}$.
\end{corollary}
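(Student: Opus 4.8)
The plan is to reduce the statement to Lemma~\ref{lem:qempty}, which already asserts that any regular language possessing $\emptyset$ among its quotients satisfies $\dis{L}=\suff{L}$. Thus the entire task is to verify that a finite language always has $\emptyset$ as one of its left quotients; the equality then follows immediately, with no further computation.

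First I would dispose of the degenerate case $L=\emptyset$: here $\varepsilon^{-1}L=\emptyset$, so $\emptyset$ is trivially a quotient, and in fact both sides of the claimed equality are empty by Example~\ref{ex:singleton}. For a nonempty finite $L$ I would set $m=\max\Set{|w|\mid w\in L}$, which is well defined precisely because $L$ is finite. Choosing any word $z\in\kleene{\Sigma}$ with $|z|>m$ (such a word exists since $\Sigma$ is nonempty), every word of the form $zx$ has length at least $|z|>m$ and hence lies outside $L$. Consequently $z^{-1}L=\Set{x\mid zx\in L}=\emptyset$, so $\emptyset$ is a left quotient of $L$. Intuitively, this is just the observation that the minimal \dfa of a finite language must contain a dead state.

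Having established that a finite language always admits $\emptyset$ as a quotient, I would invoke Lemma~\ref{lem:qempty} directly to conclude $\dis{L}=\suff{L}$, completing the argument. There is essentially no obstacle here: the only point requiring minor care is the finiteness-driven existence of a maximal-length word, which is exactly what forces the empty quotient. It is worth noting, finally, that combining this conclusion with Theorem~\ref{theo:ldirectdead} recovers the remark preceding the corollary, namely that $\dis{L}$ is itself a fixed point of $\diso$: the language $\suff{L}=\dis{L}$ is suffix closed and, being finite as well, still admits $\emptyset$ as a quotient, so $\disn{L}{2}=\dis{L}$.
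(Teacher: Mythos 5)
Your proof is correct and follows essentially the same route as the paper, which derives the corollary from Lemma~\ref{lem:qempty} in the remark immediately preceding it. The only difference is that you make explicit the (straightforward) fact that a nonempty finite language has $\emptyset$ as a left quotient, a detail the paper leaves implicit.
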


%\begin{proof}
%  We only have to notice that  $\suff{\comp{L}}=\kleene{\Sigma}$, thus 
%$\suff{L}\subseteq \suff{\comp{L}}$.
%\end{proof}

The minimal \dfa that represents the set of suffixes of a finite
language $L$ is called the \emph{suffix automaton}, and several
optimized algorithms for its construction were studied in the
literature. Thus, we can use an algorithm for building the suffix
automaton  in order to obtain $\dis{L}$. Recently, Mohri et
al.~\cite{mohri09:_gener_suffix_autom_const_algor} gave new upper
bounds on the number of states of the suffix automaton as a function
of the size of the minimal \dfa of $L$, as well as other measures of
$L$. In Section~\ref{ssc}, we study the state complexity of
$\dis{L}$ as a function of the state complexity of $L$, for any general regular language $L$. In Section~\ref{sec:sufclosuregeneral}, we generalize the results on the
characterization of the distinguishability operation, defining more operations on regular languages. 

\section{State Complexity}
\label{ssc}
By Theorem~\ref{theo:dss}, we know that $\dis{L}$ can be obtained  
using the suffix operator, complement and intersection,
therefore, it is a result of combining three operations, two unary and one binary.
We would like to estimate the state complexity of the $\diso$ operation and 
check if the upper bound is tight.
We recall that the state complexity of an operation is the worst-case state complexity of a language resulting from that operation, as a function of the state complexities of the operands.
 
The following theorem shows the construction for $\dis{L}$, in case $L$ is recognized by a \dfa.
 
\begin{theorem}
\label{theo:algodis}
  Let ${\cal A}=(Q,\Sigma,\delta,i,F)$ be a  reduced \dfa recognizing a language $L$.\ % and $|Q|=n$. 
Then ${\cal A}_d=(Q_d,\Sigma,\delta_d,Q,F_d)$ is a \dfa that accepts $\dis{L}$, 
where 
\begin{itemize}
 \item $Q_d\subseteq 2^{Q}$, 
 \item for $a\in \Sigma$ and $S\subseteq Q$ and $|S|>1$, 
$\delta_d(S,a)=\{\delta(q,a)\mid q\in S\}$,
 \item $F_d=\{ S\mid S\cap F\not= \emptyset \text{ and }  S\cap (Q\setminus F)\not= \emptyset\}$.
\end{itemize}
\end{theorem}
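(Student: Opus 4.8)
The plan is to read $\mathcal{A}_d$ as the subset construction applied to $\mathcal{A}$ but \emph{launched from the whole state set $Q$} rather than from the single initial state, and to recognize that the accepting set $F_d$ records exactly when the reached subset straddles $F$. The heart of the argument is the invariant
\[
\delta_d(Q,w)=\{\delta(q,w)\mid q\in Q\}\qquad\text{for every }w\in\kleene{\Sigma},
\]
which I would prove by induction on $|w|$. For $w=\varepsilon$ both sides equal $Q$, the initial state of $\mathcal{A}_d$. For the step, assuming the claim for $w$ and writing $S=\delta_d(Q,w)$, the defining rule $\delta_d(S,a)=\{\delta(q,a)\mid q\in S\}$ together with the extension of $\delta$ to words gives $\delta_d(Q,wa)=\{\delta(\delta(q,w),a)\mid q\in Q\}=\{\delta(q,wa)\mid q\in Q\}$, as required.

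With the invariant in hand the acceptance computation is immediate: $w\in\lang{\mathcal{A}_d}$ iff $\delta_d(Q,w)\in F_d$, and by definition of $F_d$ this holds iff the set $\{\delta(q,w)\mid q\in Q\}$ meets both $F$ and $Q\setminus F$, i.e.\ iff there exist $p,q\in Q$ with $\delta(p,w)\in F$ and $\delta(q,w)\notin F$. By \eqref{edisdefA} this is precisely the statement $w\in\dis{\mathcal{A}}$, and since $\mathcal{A}$ is reduced, the opening lemma of Section~\ref{sgenres} (giving $\dis{\mathcal{A}}=\dis{L}$) lets me conclude $\lang{\mathcal{A}_d}=\dis{L}$.

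The one delicate point, and the step I expect to require the most care, is that $\delta_d$ is prescribed only for subsets $S$ with $|S|>1$, so I must justify that this causes no gap. Two observations settle it. First, because $\delta$ is a total function on a \dfa, the image of any nonempty set is nonempty, so $\delta_d(Q,w)$ is never $\emptyset$ and the induction never leaves the class of nonempty subsets. Second, as soon as some $\delta_d(Q,w)$ is a singleton $\{q\}$, every further image $\{\delta(q,v)\}$ stays a singleton, and no singleton lies in $F_d$ (a one-element set cannot meet both $F$ and $Q\setminus F$); hence the singletons form a non-accepting trap and may all be identified with a single dead state without altering the accepted language. This both makes $\mathcal{A}_d$ a genuine complete \dfa and confirms that the unspecified transitions are irrelevant to membership, completing the proof that $\lang{\mathcal{A}_d}=\dis{L}$.
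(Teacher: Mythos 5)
Your proof is correct, but it takes a genuinely different route from the paper's. The paper derives correctness from the language-level characterization $\dis{L}=\suff{L}\cap\suff{\comp{L}}$ of Theorem~\ref{theo:dss}: an \nfa for $\suff{L}$ is $\mathcal{A}$ with every state made initial, an \nfa for $\suff{\comp{L}}$ is the same automaton with finality flipped, and since the two share the same transition structure, a single subset construction with the combined acceptance condition (the subset must contain both a final and a non-final state) realizes the intersection; singleton subsets are dismissed in one line as ``either final or non-final, hence not useful.'' You bypass Theorem~\ref{theo:dss} entirely: you establish the reachability invariant $\delta_d(Q,w)=\{\delta(q,w)\mid q\in Q\}$ by induction on $|w|$, observe that acceptance in $\mathcal{A}_d$ is then literally the condition in \eqref{edisdefA} defining $\dis{\mathcal{A}}$, and invoke Lemma~1 ($\dis{\mathcal{A}_1}=\dis{\mathcal{A}_2}=\dis{L}$ for reduced \dfas), which is exactly where the hypothesis that $\mathcal{A}$ is reduced gets consumed. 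Each approach buys something: the paper's is shorter and explains where the construction comes from (it is visibly an intersection of two suffix automata, the same viewpoint that drives the state-complexity bound in Theorem~\ref{theo:scdistub}), while yours is self-contained and is more rigorous precisely at the point the paper glosses over — you note that $\delta_d$ is prescribed only for $|S|>1$, argue that images of nonempty sets stay nonempty, that singletons map to singletons and never lie in $F_d$, and hence that all singletons may be identified with a single dead state without changing the accepted language, which also makes $\mathcal{A}_d$ a complete \dfa. Both arguments are sound.
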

\begin{proof}
Considering that $\dis{L}=\suff{L}\cap\suff{\comp{L}}$, we can use
the usual subset construction for $\suff{L}$ to build an \nfa
with the same transition function as $\mathcal{A}$, 
and all its states being initial. 
For $\suff{\comp{L}}$, the corresponding \nfa will be the same, 
but flipping the finality to all the states.  
Because both operands share the same structure,
the \dfa corresponding to the intersection
 will be the \dfa resulting from the subset construction 
 considering a suitable set of final states
(they must contain at least one final state and a non-final one). 
As all states $S\subseteq 2^Q$ with $|S|=1$ are either final or non-final, 
they cannot be useful, therefore they can be ignored.
%\qed
\end{proof}

Let ${\cal A}=(Q,\Sigma,\delta,i,F)$ be the minimal \dfa recognizing
$L$ with $|Q|=n$. 
Let $Q=\{0,\ldots,n-1\}$ and $R_i $, $0\leq i\leq n-1$, be the
left-quotients of $L$ (possibly including the empty set). 
From  Theorem~\ref{theo:dist-quot}, we have:
  \begin{equation}
\label{eq:atoms}
    \dis{L}=\bigcup_{i\in Q}R_i  \setminus
    \bigcap_{i\in Q}R_i =\left(\bigcup_{i\in
        Q}R_i \right)\cap\comp{\left (\bigcap_{i\in 
          Q}R_i \right )} = \comp{\left(\bigcap_{i\in
          Q}\comp{R_i }\right)\cup \left (\bigcap_{i\in 
          Q}R_i \right )}. 
  \end{equation}

In the following we identify the states of ${\cal A}$ with the
corresponding left-quotients. 
Instead of using traditional techniques to prove the correctness of 
tight upper bounds of operational state complexity, 
here we consider a method based on the \emph{atoms} 
of regular expressions. 
Using this approach, we  aim to provide yet another piece of evidence for their
 broad applicability.

Brzozowski and Tamm introduced
the notion of atoms of regular languages in~\cite{brzozowski11:_theor_of_atomat}
  and studied their state complexity in~\cite{brzozowski13:_compl_of_atoms_of_regul_languag}. 
An \emph{atom} of a regular language $L$ with $n$ quotients $R_0$, \ldots,
$R_{n-1}$ is a non-empty intersection $K_0\cap\cdots\cap K_{n-1}$,
where each $K_i$ is a quotient $R_i $, or its complement
$\comp{R_i }$. Atoms of $L$ are partition of  $\Sigma^\star$.
In particular, $A_Q=\bigcap_{i\in Q}R_i $ 
($A_\emptyset=\bigcap_{i\in Q}\comp{R_i }$) is an atom with zero complemented 
(uncomplemented) quotients.
In~\cite{brzozowski13:_compl_of_atoms_of_regul_languag} it
was proved that the state complexity of both those atoms is $2^n-1$. 
Using similar arguments, we prove the following theorem.

\begin{theorem}
  \label{theo:scdistub}
If a regular language $L$ has a  minimal \dfa with $n\geq 2$ states, then
$sc(\dis{L})\leq 2^n-n$. 
\end{theorem}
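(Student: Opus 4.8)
The plan is to work directly with the subset automaton $\mathcal{A}_d$ produced by Theorem~\ref{theo:algodis}, whose states are subsets of the state set $Q$ of the minimal \dfa of $L$, with initial state $Q$, transition $\delta_d(S,a)=\{\delta(q,a)\mid q\in S\}$, and final states the \emph{mixed} subsets (those meeting both $F$ and $Q\setminus F$). Since the reachable states of this automaton are exactly the subsets of $Q$ obtained from $Q$ by repeatedly applying $\delta_d$, an upper bound on $sc(\dis{L})$ will follow by counting reachable states and then identifying equivalent ones.

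First I would observe that, because $\delta$ is a total function, the image $\delta_d(S,a)$ of a non-empty set $S$ is again non-empty; hence, starting from $Q\neq\emptyset$, the empty set is never reachable. This already restricts the reachable states to the $2^n-1$ non-empty subsets of $Q$. Next I would show that every singleton collapses: a singleton $\{q\}$ is never mixed, so it is non-final in $\mathcal{A}_d$, and $\delta_d(\{q\},a)=\{\delta(q,a)\}$ is again a singleton, so once a computation reaches a singleton it stays among singletons, all of which are non-final. Consequently every reachable singleton has empty right language, so all reachable singletons are equivalent and merge into a single dead state in the minimal \dfa.

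Finally I would count. The subsets of $Q$ of size at least $2$ number $2^n-n-1$, and these are the only reachable states that can survive besides the one merged dead state; adding that state gives at most $(2^n-n-1)+1=2^n-n$ states, so $sc(\dis{L})\leq 2^n-n$. This also matches the atom description in~(\ref{eq:atoms}), where $\dis{L}=\comp{A_\emptyset\cup A_Q}$ and each of the atoms $A_Q$ and $A_\emptyset$ has state complexity $2^n-1$: the saving of $n-1$ states over $2^n-1$ comes precisely from the fact that the pure (non-mixed) singleton configurations are all indistinguishable.

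The argument has no deep structural obstacle; the only point requiring care is the bookkeeping—confirming that the empty set is genuinely unreachable and that no reduction beyond the singleton collapse is being assumed—so that the counted quantity $2^n-n$ is truly an upper bound on the number of states of the minimal \dfa for $\dis{L}$ and not merely of some reachable, possibly non-minimal, automaton.
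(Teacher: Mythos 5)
Your proof is correct and takes essentially the same route as the paper: the paper bounds the number of quotients $w^{-1}\dis{L}$ via Equation~(\ref{eq:atoms}), which is exactly the count of reachable subsets in the construction of Theorem~\ref{theo:algodis}, with the $n$ singletons collapsing to the single empty quotient, yielding $2^n-n$. Your version merely makes the reachability of non-empty subsets and the merging of the dead singletons explicit on the subset automaton, which is the same bookkeeping.
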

\begin{proof}
  Let ${\cal A}=(Q,\Sigma,\delta,i,F)$ be the minimal \dfa recognizing
  $L$ with $|Q|=n$. 
Then $Q=\{0,\ldots,n-1\}$, and let $R_i $, $0\leq i\leq n-1$ be the
  (left-)quotients of $L$. 
%Using Theorem~\ref{theo:dist-quot}, $\dis{L}=\bigcup_{i\in Q}R_i  \setminus
%  \bigcap_{i\in Q}R_i =\left(\bigcup_{i\in Q}R_i \right)\cap\comp{\left
%      (\bigcap_{i\in Q}R_i \right )}$.  
\ Using Equation~(\ref{eq:atoms}), every quotient $w^{-1}\dis{L}$
  of $\dis{L}$, for $w\in \kleene{\Sigma}$, is given by:
\begin{equation*}
  w^{-1}\dis{L} = \left(\bigcup_{i\in
      Q}w^{-1}R_i \right)\cap\comp{\left (\bigcap_{i\in 
        Q}w^{-1}R_i \right )},
\end{equation*}
\noindent where all $w^{-1}R_i$, $0\leq i\leq n-1$, are also
quotients of $L$, and they may not be distinct. 
Considering all non-empty subsets of quotients of $L$, 
there would be at most $2^n$ quotients of $\dis{L}$. 
However, all subsets, $R_j$, with exactly one element 
will lead to the empty quotient.  Thus, $sc(\dis{L})\leq 2^n-n$. 
%\qed
\end{proof}

Brzozowski~\cite{brzozowski13:_in_searc_of_most_compl_regul_languag}
presented a family of languages $U_n$ which provides witnesses for the
state complexity of several individual and combined operations over
regular languages. Brzozowski and
Tamm~\cite{brzozowski13:_compl_of_atoms_of_regul_languag} proved that
$U_n$ was also a witness for the worst-case state complexity of
atoms. This family is defined as follows. For each $n\geq 2$, 
we construct the \dfas $D_n = (\Set{0,\ldots,n-1}, \Set{0,1,2},
\delta,0,\{n-1\})$, where  
$\delta(i,0)=i+1\mod n$, $\delta(0, 1) = 1$, $\delta(1, 1) = 0$,
$\delta(i, 1) = i$ for  
$i > 1$, $\delta(i, 2) = i$ for $0 \leq i\leq n-2$, and $\delta(n-1,
2) = 0$. 
We denote by $U_n$
the language accepted by $D_n$, i.e.,
\begin{equation}
 \label{edef:universal}
U_n=\lang{D_n}.
\end{equation}
We show that $U_n$ is also a witness for the lower-bound of the state
complexity of $\dis{L}$. 
First, observe that automata $D_n$, $n\geq 2$ are minimal. 
In Figure~\ref{fig:univ}, we present $D_4$.

\begin{figure}[h!]
\begin{center}
\begin{tikzpicture}[>=stealth, shorten >=1pt, auto, node distance=1.9cm,initial text={}]

\node[state, initial, inner sep=1pt, minimum size=7pt] (S0) {};
\node[state, inner sep=1pt, minimum size=7pt][right of=S0] (S1) {};
\node[state, inner sep=1pt, minimum size=7pt][right of=S1] (S2) {};
\node[state, accepting, inner sep=1pt, minimum size=7pt][right of=S2] (S3) {};
\path[->] (S0) edge [loop above] node {$c$} ()
               edge [bend left] node {$a,b$} (S1)
		  (S1) edge node {$a$} (S2)
		       edge node [pos=.2]{$b$} (S0)
			   edge [loop above] node {$c$} ()
		  (S2) edge node {$a$} (S3)
		       edge [loop above] node {$b,c$} ()
		  (S3) edge [loop above] node {$b$} ()
		       edge [in=340, out=200] node [pos=.2]{$a,c$} (S0);
\end{tikzpicture}
\end{center}
\caption{Universal witness $D_4$.} 
\label{fig:univ} 
\end{figure}
Next, we give the lower bound for the number of states of a 
\dfa accepting $\dis{U_n}$.
\begin{theorem}
    \label{theo:scdisttight}
For $n\geq 2$, the minimal \dfa accepting $\dis{U_n}$ has $2^n-n$ states.
\end{theorem}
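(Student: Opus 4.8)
The plan is to show that the upper bound $2^n-n$ from Theorem~\ref{theo:scdistub} is actually achieved for the universal witness $U_n$, by proving that the $2^n-n$ candidate states of the automaton ${\cal A}_d$ constructed in Theorem~\ref{theo:algodis} are all reachable and all pairwise distinguishable. Recall that by Theorem~\ref{theo:algodis} the states of ${\cal A}_d$ are subsets $S\subseteq Q=\{0,\ldots,n-1\}$, and that singletons collapse to the empty quotient; the $n$ discarded states are exactly these $n$ singletons, so the candidate state set is $\{S\subseteq Q : |S|\neq 1\}$, which has $2^n-n$ elements. The initial state is $Q$ itself, and the transitions are $\delta_d(S,a)=\{\delta(q,a)\mid q\in S\}$ using the transition function of $D_n$.

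First I would establish reachability: every subset $S$ with $|S|\geq 2$ (together with $\emptyset$, reached as the image of any singleton's successor) must be reached from the initial state $Q$ by some word. The key is that the transition semigroup of $D_n$ is rich: the letters $a$ (the $n$-cycle $i\mapsto i+1\bmod n$) and $b$ (the transposition swapping $0$ and $1$) generate the full symmetric group $S_n$, while the letter $c$ acts as a partial contraction (it maps $n-1\mapsto 0$ and fixes everything else), so applying $c$ merges the pair $\{0,n-1\}$ into $\{0\}$, reducing cardinality. The standard argument for such witnesses is a downward induction on $|S|$: from $Q$ one reaches every $(n-1)$-subset by first moving the two elements to be merged into positions $\{0,n-1\}$ via a permutation in $S_n$ and then applying $c$; and having reached all sets of size $k$, one reaches all sets of size $k-1$ by the same merge-a-pair technique. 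Since the permutations let me realize any target configuration before contracting, all subsets of size between $2$ and $n$ are reachable, and hence so is $\emptyset$.

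Next I would prove distinguishability: any two distinct candidate states $S\neq T$ must be separated by some word $w$, meaning $\delta_d(S,w)$ is accepting (straddles $F$ and $Q\setminus F$) while $\delta_d(T,w)$ is not, or vice versa. Here $F=\{n-1\}$ is the single final state of $D_n$, so the final states of ${\cal A}_d$ are exactly those subsets containing both $n-1$ and some element other than $n-1$. Picking an element $q$ in the symmetric difference of $S$ and $T$, say $q\in S\setminus T$, I would use a permutation word to send $q$ to the final state $n-1$ while sending all other relevant elements to non-final positions, arranging that the image of $S$ straddles $F$ but the image of $T$ does not (or is a singleton/empty, hence non-accepting). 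The transitivity of the $S_n$-action gives enough control to isolate $q$ in this way.

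The main obstacle I anticipate is the bookkeeping in the distinguishability step rather than in reachability. When $S$ and $T$ differ only in elements that are awkward to separate simultaneously — for instance when both $S$ and $T$ are large and their symmetric difference is small — I must find a single permutation word $w$ that pushes exactly one set into an accepting configuration while keeping the other non-accepting, and verifying this requires carefully tracking how the full symmetric group acts on the chosen distinguishing element versus the remaining elements. Since the paper follows Brzozowski and Tamm's atom-based framework, I expect the cleanest route is to invoke the already-established fact that $U_n$ is a witness for the complexity of atoms (each of the atoms $A_Q$ and $A_\emptyset$ reaching the bound $2^n-1$), and to translate the atom-distinguishing words directly into words separating the states of ${\cal A}_d$, thereby reusing that machinery instead of redoing the permutation-chasing from scratch.
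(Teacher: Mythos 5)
Your proposal is correct in substance, but its main line takes a genuinely different route from the paper's. The paper proves tightness by reduction to known results on atoms: it writes $\dis{U_n}=\comp{A_Q\cup A_\emptyset}$, where $A_Q$ and $A_\emptyset$ are the atoms with zero complemented and zero uncomplemented quotients, invokes Brzozowski and Tamm's theorem that $sc(A_Q)=sc(A_\emptyset)=2^n-1$ for this very witness, and couples this with the construction of Theorem~\ref{theo:algodis} and the fact that complementation preserves state complexity --- i.e., exactly the shortcut you defer to in your closing paragraph. Your primary argument instead redoes the work directly inside the subset automaton ${\cal A}_d$: reachability of every $S$ with $|S|\geq 2$ by downward induction (the cyclic letter and the transposition generate the symmetric group, and the third letter merges $n-1$ into $0$), and distinguishability by permuting an element of the symmetric difference onto the final state $n-1$. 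That is sound and self-contained, and more informative than the paper's two-line citation, at the price of redoing permutation-chasing that the atom complexity paper already contains. Two repairs to your write-up: first, the ``main obstacle'' you anticipate in the distinguishability step is illusory --- permutations are injective, so the image of a state of size at least $2$ again has size at least $2$ and is accepting precisely when it contains $n-1$; hence for $q\in S\setminus T$ the single word $a^{n-1-q}$ (a power of the cyclic letter) already separates $S$ from $T$, with no delicate bookkeeping even when $S\Delta T$ is small. Second, $\emptyset$ is never ``reached as the image of any singleton's successor'': transition images of non-empty sets under total functions are non-empty, so the $(2^n-n)$-th state of the minimal \dfa is not a literally reachable $\emptyset$ but the dead state into which the $n$ pairwise-equivalent singleton subsets merge (they are never accepting and closed under transitions), corresponding to the empty quotient $w^{-1}\dis{U_n}=R_j\setminus R_j$ for any $w$ whose induced transformation has a singleton image; with that correction your count --- $2^n-1-n$ subsets of size at least $2$, plus one dead state --- yields $2^n-n$ as required.
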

\begin{proof}
  Let $A_n=(R_0\cap \ldots \cap R_{n-1})$, and 
$A_\emptyset=(\comp{R_0}\cap \ldots \cap \comp{R_{n-1}})$, 
be the two atoms of $U_n$ as above, where $R_i$ are its quotients $0\leq i\leq n-1$. 
Then $\dis{U_n}=\comp{A_n\cup A_\emptyset}$.  
Brzozowski and Tamm proved that
  $sc(A_n)=sc(A_\emptyset)=2^n-1$. 
Applying the construction given
  in Theorem~\ref{theo:algodis} to $\dis{U_n}$, and noting that a regular language
  and its complement have the same state complexity, we obtain the upper
  bound.
\end{proof}

If $sc(L)=1$, by Example~\ref{ex:singleton}, we have that $sc(\dis{L})=1$. If $L$ has $\emptyset$ as a quotient, by Lemma~\ref{lem:qempty},  the upper bound for $sc(\dis{L})$ coincides with the one for $\suff{L}$, i.e. it is $2^{n-1}$ if $sc(L)=n$,~\cite{brzozowski14:_quotien_compl_of_closed_languag}. This upper bound is achieved by the family of languages represented in Figure~\ref{fig:upsuffe}.

\begin{figure}[h!]
\begin{center}
\begin{tikzpicture}[>=stealth, shorten >=1pt, auto, node distance=1.9cm,initial text={}]

\node[state, accepting, initial, inner sep=1pt, minimum size=5pt] (S0) {$s_0$};
\node[state, inner sep=1pt, minimum size=5pt][right of=S0] (S1) {$s_1$};
\node[state, inner sep=1pt, minimum size=5pt][right of=S1] (S2) {$s_2$};
\node[right of=S2] (SS) {$\cdots$};
\node[state, inner sep=1pt, minimum size=5pt][right of=SS] (SN2) {$s_{n-2}$};
\node[state, inner sep=1pt, minimum size=5pt][above of=S0] (SN1) {$s_{n-1}$};

\path[->] (S0) edge node {$a$} (S1)
               edge node {$b$} (SN1)
		  (S1) edge node {$a$} (S2)
		       edge [in=345,out=195] node [pos=.2] {$b$} (S0)
		  (S2) edge node {$a$} (SS)
		       edge [loop above] node {$b$} ()
		  (SS) edge node {$a$} (SN2)
		  (SN2) edge [in=335, out = 205] node [pos=.15] {$a$} (S0) 
		       edge [loop above] node {$b$} ()
		  (SN1) edge [loop right] node {$a,b$} ();
\end{tikzpicture}
\end{center}
\caption{Witness family for  $sc(\dis{L})$ when $L$ has $\emptyset$ as a quotient.} 
\label{fig:upsuffe} 
\end{figure}
Having considered some  properties of the distinguishability language,
we would like  to select only the set of minimal words that
distinguishes between distinct quotients. Obviously, 
this is a subset of $\dis{L}$, and in the following  section we study its properties.

\section{Minimal Distinguishable Words}
\label{sec:minDist}
An even more succinct language distinguishing all different quotients
of a regular language, in fact a finite one, can be obtained if we
consider only the shortest word that distinguishes each pair of
quotients.  

\begin{definition}
\label{def:distmin}  
    Let $L$ be a regular language, and assume we have an order over 
the alphabet $\Sigma$.
If $x,y\in \Sigma^\star$ and $x\not\mneq{L} y$, 
 we define 
$$
\distminw{L}{x,y} = 
    \min\Set{w\mid w\in \disw{L}{x,y}},
$$
where minimum is considered with respect to the quasi-lexicographical order.
In case   $x\mneq{L} y$, $\distminw{L}{x,y}$ is undefined.
We can observe that if $x\not\mneq{L} y$, 
$\distminw{L}{x,y}=\min (x^{-1}L\Delta y^{-1}L)$.
%, where $\Delta$ is the symmetric difference of two sets.

 The set of minimal words distinguishing quotients of a language $L$ 
is 
$$
\distmin{L} = \Set{\distminw{L}{x,y} \mid  %x\not\equiv_L y, 
                x,y\in \Sigma^*, x\not\mneq{L} y}.
$$
\end{definition}
 
%We can observe that 
%$\distminw{L}{x,y}=\min (x^{-1}L\Delta y^{-1}L)$, where $\Delta$ is the symmetric difference of two sets.
%
\begin{example}
\label{ex:dismt} We present a few simple cases. Similar to the $\diso$ operator, 
we have the equalities:
$\distmin{\kleene{\Sigma}}=\distmin{\emptyset}=
\emptyset$ and $\distmin{\Set{\varepsilon}}=\{\varepsilon\}$. 
In case $a\in \Sigma$, $\distmin{a}=\dis{\Set{a}}=\Set{a,\varepsilon}$, and 
 $\distmin{\{a^n\}}=\dis{\Set{a^n}}=\Set{ a^i\mid 0\leq i\leq n}$, for $n\ge 2$.
\end{example}

\begin{example}
  \label{ex:dmin}
  Consider the language $L$ of Example~\ref{ex:d3}. 
We have the following equalities
$\distmin{L}=\Set{\varepsilon,0,1,01,11}$, 
$\distmin{\dis{L}}=\Set{\varepsilon,1,01,11}$, and  
$\distmin{\disn{L}{2}}=\Set{\varepsilon,1,11}$.
\end{example}

The previous example suggests that $\distmin{L}$ is also suffix closed.
   
\begin{theorem}
\label{theo:mindist_is_suffix_closed}
If $L$ is a regular language, then $\distmin{L}$ is suffix closed.
\end{theorem}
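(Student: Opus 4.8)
The plan is to show that every suffix of a minimal distinguishing word is itself a minimal distinguishing word, but for a \emph{shifted} pair of quotients. First I would fix $w\in\distmin{L}$ and write $w=\distminw{L}{x,y}$ for some pair $x\not\mneq{L} y$; without loss of generality $xw\in L$ and $yw\notin L$. Given an arbitrary factorization $w=uv$ (so that $v$ is a suffix of $w$), I rewrite these memberships as $(xu)v\in L$ and $(yu)v\notin L$, which shows immediately that $v$ distinguishes the pair $(xu,yu)$, i.e. $v\in\disw{L}{xu,yu}$ and in particular $xu\not\mneq{L} yu$.

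The heart of the argument is then to verify that $v$ is not merely \emph{a} word distinguishing $xu$ from $yu$, but the \emph{minimal} one, so that $v=\distminw{L}{xu,yu}\in\distmin{L}$. I would argue by contradiction: suppose some $v'\prec v$ also lies in $\disw{L}{xu,yu}$. Then $(xu)v'\in L\notequiv (yu)v'\in L$, which regrouped gives $x(uv')\in L\notequiv y(uv')\in L$, so $uv'\in\disw{L}{x,y}$. The goal is to derive $uv'\prec w=uv$, which contradicts the minimality of $w$ for the original pair $(x,y)$.

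The key step --- and the only place where the specific choice of order matters --- is the monotonicity of the quasi-lexicographical order under left-concatenation by a common prefix: $v'\prec v$ implies $uv'\prec uv$. This splits into two cases according to the definition of $\preceq$. If $|v'|<|v|$ then $|uv'|<|uv|$ and we are done by the length comparison; if $|v'|=|v|$ but $v'$ lexicographically precedes $v$, then $uv'$ and $uv$ have equal length and agree on the common prefix $u$, so their lexicographic comparison reduces to that of $v'$ and $v$, again giving $uv'\prec uv$. I expect this monotonicity to be the main (though elementary) obstacle, since it is exactly the property that makes the operation ``take the minimal distinguishing word'' compatible with passing to suffixes; once it is in hand, the inequality $uv'\prec w$ is immediate, the contradiction closes, and we conclude $v=\distminw{L}{xu,yu}\in\distmin{L}$, establishing suffix-closedness.
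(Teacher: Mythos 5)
Your proof is correct and is essentially the paper's own argument: both reduce to showing that the suffix $v$ of $w=uv$ is minimal in $\disw{L}{xu,yu}$, using the monotonicity $v'\preceq v \implies uv'\preceq uv$ of the quasi-lexicographical order together with the regrouping $uv'\in\disw{L}{x,y}$ to invoke the minimality of $w$. The only differences are cosmetic: you phrase it as a contradiction where the paper sets $v'=\distminw{L}{xu,yu}$ and squeezes $uv\preceq uv'\preceq uv$, and you spell out the two-case verification of the monotonicity step that the paper asserts without proof.
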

\begin{proof}
  Let $w\in \distmin{L}$, and let $w=uv$, with $u,v\in\kleene{\Sigma}$. 
Because $w\in \distmin{L}$,
  we can find two other words, $x,y\in \Sigma^\star$, such that
$xw\in L$ and $yw\notin L$, i.e.,
$xuv\in L$ and $yuv\notin L$. 
It follows that $v\in \disw{L}{xu,yu}$.
%Assume $v\notin \distmin{L}$. 
Since $v\in \disw{L}{xu,yu}$,  there exists 
$v'=\distminw{L}{xu,yu}$ and $v'\preceq v$.
Hence, $uv'\preceq uv$ and $uv'\in \disw{L}{x,y}$, which implies that 
$w=uv\preceq uv'$. Then we must have $uv'=uv$, which implies that  
$v=v'= \distminw{L}{xu,yu}\in  \distmin{L}$. 
\end{proof}

The next result gives an upper-bound for the number of elements of $\distmin{L}$.

\begin{theorem}
\label{theo:mindist_upp_bound_set}
If $L$ is a regular language with state complexity $n\geq 2$, 
then $|\distmin{L}|\leq n-1$.
\end{theorem}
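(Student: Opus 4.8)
The plan is to run a partition-refinement argument on the set of quotients of $L$, processing the words of $\distmin{L}$ one at a time in quasi-lexicographical order and showing that each word forces the number of blocks of the partition to increase by at least one. First I would fix the $n$ quotients $R_0,\dots,R_{n-1}$ of $L$ (these are exactly the states of the minimal \dfa, since $sc(L)=n$). For a finite set $S\subseteq\kleene{\Sigma}$ I define an equivalence on the quotients by declaring $R_p$ and $R_q$ equivalent when no word of $S$ separates them, i.e. $v\in R_p \Leftrightarrow v\in R_q$ for every $v\in S$, and I write $c(S)$ for the number of blocks of this partition. Enlarging $S$ can only refine the partition, so $c$ is monotone. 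I would then enumerate $\distmin{L}=\{w_0\prec w_1\prec\cdots\prec w_{m-1}\}$ in quasi-lexicographical order and track the quantity $c(\{w_0,\dots,w_i\})$ as $i$ grows.

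The crux is the claim that adding each $w_i$ \emph{strictly} refines the partition built from $\{w_0,\dots,w_{i-1}\}$. By Definition~\ref{def:distmin}, each $w_i$ equals $\distminw{L}{x,y}=\min(x^{-1}L\,\Delta\, y^{-1}L)$ for some pair of distinct quotients $R_p=x^{-1}L$ and $R_q=y^{-1}L$; in particular $w_i\in R_p\,\Delta\, R_q$, so $w_i$ separates this pair. On the other hand, any word $v$ separating $R_p$ and $R_q$ lies in $R_p\,\Delta\, R_q$ and hence satisfies $w_i=\min(R_p\,\Delta\, R_q)\preceq v$. Consequently none of $w_0,\dots,w_{i-1}$, all of which strictly precede $w_i$, can separate $R_p$ from $R_q$. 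Therefore $R_p$ and $R_q$ lie in a common block before $w_i$ is added and in distinct blocks afterwards, so $c$ strictly increases at this step.

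Finally I would count. With no words the partition is trivial, $c(\emptyset)=1$, and $m$ applications of the strict-refinement step give $c(\{w_0,\dots,w_{m-1}\})\geq 1+m$. Since any partition of the $n$ quotients has at most $n$ blocks, we get $1+m\leq n$, that is, $|\distmin{L}|=m\leq n-1$. The only delicate point is the strict-refinement claim, and everything there hinges on the quasi-lexicographical order being a \emph{total} order: this is what makes ``minimal distinguishing word'' a well-defined minimum of the symmetric difference and guarantees that no earlier word in the enumeration could already have split the pair that $w_i$ first separates. Note that suffix-closedness (Theorem~\ref{theo:mindist_is_suffix_closed}) is not actually needed for this bound; only the minimality built into $\distminw{L}{x,y}$ is used.
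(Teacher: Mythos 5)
Your proof is correct, but it takes a genuinely different route from the paper's. The paper's proof is a two-line sketch resting on the telescoping identity $(A\Delta B)\Delta(B\Delta C)=A\Delta C$: since $A\Delta C\subseteq (A\Delta B)\cup(B\Delta C)$, words distinguishing a spanning family of $n-1$ pairs already distinguish all $\binom{n}{2}$ pairs of the $n$ distinct quotients, and the identity moreover forces the map $(R_p,R_q)\mapsto\min(R_p\Delta R_q)$ to behave like an ultrametric (the minimal word for a pair equals the smaller of the minimal words through any intermediate quotient, whenever those two differ); the resulting bound of $n-1$ distinct values is then essentially asserted rather than spelled out. You replace that counting assertion with an explicit potential-function induction: processing the elements of $\distmin{L}$ in increasing quasi-lexicographical order, each new word $w_i=\min(R_p\Delta R_q)$ must split a pair of quotients that no strictly smaller word could separate --- precisely because every separator of $R_p$ and $R_q$ lies in $R_p\Delta R_q$ and is therefore $\succeq w_i$ --- so the partition of the $n$ quotients induced by the words processed so far gains at least one block per word, giving $1+|\distmin{L}|\leq n$. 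Your argument is complete as written (the base case $c(\emptyset)=1$, monotonicity of refinement, and the strict-refinement step are all justified), and it is arguably more self-contained than the paper's, filling in the combinatorial step the paper leaves implicit; what it does not expose is the structural fact behind the paper's identity, namely that the minimal words along any chain of quotients already suffice to separate all pairs. You are also right on both side remarks: the whole argument hinges only on the quasi-lexicographical order being total, so that $\distminw{L}{x,y}$ is a well-defined minimum of the symmetric difference, and neither your proof nor the paper's uses the suffix-closedness of $\distmin{L}$ from Theorem~\ref{theo:mindist_is_suffix_closed}.
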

\begin{proof}
For any three sets $A, B$ and $C$ we have the equality $(A\Delta B)\Delta(B\Delta C) = A\Delta C$.
Therefore, we can distinguish any pair from $n$ distinct sets with at most $n-1$ elements. 
To prove the theorem it is enough to choose the minimal words satisfying the above conditions,
 since the $n$ quotients of $L$ are all distinct (their symmetric difference is non-empty). 
\end{proof}

Now, we prove that the upper-bound is reached.
\begin{theorem}
\label{theo:mindist_bound_tight}
    The bound $n-1$ for the size of $\distmin{L}$, for a
    regular language $L$ with state complexity $n\geq 2$, is tight. 
\end{theorem}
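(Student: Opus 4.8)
The plan is to establish tightness by exhibiting, for every $n\geq 2$, an explicit witness language whose state complexity is exactly $n$ and for which $\distmin{L}$ has precisely $n-1$ elements. The cleanest witnesses are the unary singletons already analysed in Example~\ref{ex:dismt}: I would take $L_n=\Set{a^{n-2}}$ over the one-letter alphabet $\Sigma=\Set{a}$, so that $L_2=\Set{\varepsilon}$, $L_3=\Set{a}$, and in general $L_n$ is the singleton word of length $n-2$. The whole argument then reduces to two routine computations, the quotients of $L_n$ and its minimal distinguishing words, both of which are short and explicit.

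First I would compute the quotients to pin down the state complexity. For $0\leq i\leq n-2$ one has $(a^i)^{-1}L_n=\Set{a^{n-2-i}}$, while $(a^i)^{-1}L_n=\emptyset$ for $i\geq n-1$. These are $n-1$ distinct non-empty quotients together with the single empty quotient, hence pairwise distinct, so $sc(L_n)=n$. Next I would identify $\distmin{L_n}$ using Corollary~\ref{cor;diffxyquot}, which gives $\disw{L}{x,y}=(x^{-1}L)\Delta(y^{-1}L)$. For the pair consisting of a non-empty quotient $\Set{a^j}$ and the empty quotient, the symmetric difference is $\Set{a^j}$, whose minimal element is $a^j$; letting $j$ range over $0,\dots,n-2$ already produces the $n-1$ distinct words $\varepsilon,a,\dots,a^{n-2}$. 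For any pair of non-empty quotients $\Set{a^j},\Set{a^k}$ with $j\neq k$ the symmetric difference is $\Set{a^j,a^k}$, whose minimal word $a^{\min(j,k)}$ already lies in that set. Therefore $\distmin{L_n}=\Set{a^i\mid 0\leq i\leq n-2}$, which has exactly $n-1$ elements; this is precisely the value recorded for $\distmin{\Set{a^m}}$ in Example~\ref{ex:dismt}.

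Combining the two computations, $|\distmin{L_n}|=n-1=sc(L_n)-1$ for every $n\geq 2$, so the upper bound of Theorem~\ref{theo:mindist_upp_bound_set} is attained and the bound $n-1$ is tight. I do not expect any genuine difficulty here; the only point requiring care, the mild ``obstacle'', is the bookkeeping around the empty quotient. One must remember to count the dead quotient when asserting $sc(L_n)=n$ (which is why the witness for parameter $n$ is the singleton of length $n-2$ rather than $n$), and simultaneously verify that every $a^j$ with $0\leq j\leq n-2$ really occurs as some $\distminw{L_n}{x,y}$ and that no extra or shorter word slips in. Both checks follow immediately from the explicit list of quotients above, so the verification is purely mechanical and no deeper structural argument is needed.
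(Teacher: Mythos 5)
Your proof is correct, but it uses a different witness family from the paper. The paper's proof reuses Brzozowski's universal witness $U_n$ (the ternary-alphabet family $D_n$ already introduced for the state-complexity lower bound of $\diso$): it observes that the minimal word in the quotient $R_i$ of $U_n$ is $0^{n-i-1}$, so the minima of the pairwise symmetric differences are exactly $\varepsilon, 0, \dots, 0^{n-2}$, the largest power being disregarded. You instead take the unary singleton $L_n=\Set{a^{n-2}}$, compute all $n$ quotients explicitly ($n-1$ singletons plus $\emptyset$), and read off $\distmin{L_n}=\Set{a^i\mid 0\leq i\leq n-2}$ from Corollary~\ref{cor;diffxyquot}; this is consistent with the paper's own Example~\ref{ex:dismt}, which records $\distmin{\Set{a^m}}=\Set{a^i\mid 0\leq i\leq m}$. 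The two arguments share the same skeleton (explicit witness, then direct computation of $\distminw{L}{x,y}$ over all pairs of distinct quotients), but your version is more elementary and fully self-contained — the symmetric differences are between singletons and $\emptyset$, so there is nothing left to the reader — and it yields the mildly stronger fact that the bound is already tight over a one-letter alphabet. What the paper's choice buys is economy of witnesses: $U_n$ serves simultaneously for the $2^n-n$ state-complexity bound of $\diso$ (Theorem~\ref{theo:scdisttight}) and for this tightness result, at the cost of a terser ``it is easy to see'' verification that the shortest distinguishing word for each pair of states of $D_n$ is indeed a power of $0$. Your bookkeeping around the empty quotient (taking the word of length $n-2$ so that the dead state brings $sc(L_n)$ up to $n$) is exactly right, and no step of your argument fails.
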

\begin{proof}
Consider again the family of languages~$U_n$, 
described by  Equation~(\ref{edef:universal}). 
For each state $0\leq i\leq n-1$ of $D_n$, let $R_i$ be 
the corresponding quotient. 
It is easy to see that the minimal words for each quotient $R_i$ are $0^{n-i-1}$, 
and we can disregard the largest one. 
 \end{proof}

We now consider the iteration of the $\distmino$ operator. 
Because $\distmin{L}\subseteq \dis{L}$, $\distmin{L}\subseteq \suff{L}$,
 and $\distmin{L}$ is suffix closed, it follows that
  $\distminp{L}{2}\subseteq \distmin{L}$, and,  in general, 
\begin{equation}
\label{eSubsetDismin}
\distminp{L}{n+1}\subseteq \distminp{L}{n},\mbox{ for  all }n\geq 1. 
\end{equation}

By the finiteness of $\distmin{L}$, it follows 
 that there exists $n\geq 0$ such that $\distminp{L}{n+1}= \distminp{L}{n}$.
  For instance, considering the family of languages 
$U_n$ defined by equation (\ref{edef:universal}), we have that  $\distminp{U_n}{2}=\distmin{U_n}$.

Contrary to the hierarchy for $\dis{L}$, where the fixed point is reached for $n=2$,
 in the case of $\distmin{L}$ we have that  for any $n\geq 0$, 
there is a language for which the fixed point is reached after $n$ iterations of $\distmino$.

\begin{theorem}
\label{theo:fpdismin}
Given a regular language $L$ with state complexity $n$, 
the fixed point of $\distminp{L}{i}$, is reached for $0\leq i\leq  n-2$. 
\end{theorem}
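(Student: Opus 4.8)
The plan is to control the cardinalities along the descending chain $\distminp{L}{1}\supseteq\distminp{L}{2}\supseteq\cdots$ guaranteed by~(\ref{eSubsetDismin}), and to show that these cardinalities are squeezed into a range of only $n-2$ admissible values. First I would record the structural facts I can rely on: each iterate $\distminp{L}{i}$ (for $i\ge 1$) is a finite language, being a subset of the finite set $\distmin{L}$, and is suffix closed by Theorem~\ref{theo:mindist_is_suffix_closed}; moreover $|\distmin{L}|\le n-1$ by Theorem~\ref{theo:mindist_upp_bound_set}. Let $k$ be the least index with $\distminp{L}{k}=\distminp{L}{k+1}$; since every inclusion before $k$ is strict, writing $m_i=|\distminp{L}{i}|$ we obtain $m_1>m_2>\cdots>m_k$, so it suffices to bound $k$.

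Second, I would pin down the two ends of this strictly decreasing integer sequence. The top is immediate: $m_1\le n-1$. For the bottom I would prove the crucial claim that, when $n\ge 3$, every iterate has cardinality at least $2$; this is the step I expect to be the main obstacle. The idea is to characterise the two degenerate values $\emptyset$ and $\{\varepsilon\}$. On one hand, a finite language has a single quotient only if it is $\emptyset$, and exactly two quotients only if it is $\{\varepsilon\}$: for finite $M$, both $\emptyset$ and $M=\varepsilon^{-1}M$ are quotients, and $a^{-1}M=M$ forces $M=\emptyset$, so $sc(M)=2$ pushes every $a^{-1}M$ to $\emptyset$, whence $M\subseteq\{\varepsilon\}$. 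On the operator side, reasoning as in Example~\ref{ex:epsilon}, $\distmin{M}=\emptyset$ exactly when $sc(M)=1$ and $\distmin{M}=\{\varepsilon\}$ exactly when $sc(M)=2$, because $\varepsilon$ separates two quotients precisely when one contains $\varepsilon$ and the other does not. Combining these, $\emptyset$ and $\{\varepsilon\}$ can each be produced by $\distmino$ only from themselves; hence a chain starting at $\distmin{L}$—which for $n\ge 3$ is neither $\emptyset$ nor $\{\varepsilon\}$—can never drop to cardinality at most $1$, giving $m_k\ge 2$.

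Finally, I would close the argument by counting. The values $m_1>m_2>\cdots>m_k$ are $k$ distinct integers lying in $\{2,3,\ldots,n-1\}$, a set with exactly $n-2$ elements, so $k\le n-2$; that is, $\distminp{L}{n-2}$ already coincides with the fixed point. The remaining small cases are handled directly and confirm the bound: if $n=1$ then $\distmin{L}=\emptyset$ is a fixed point, and if $n=2$ then $\distmin{L}=\{\varepsilon\}$ is a fixed point by Example~\ref{ex:dismt}. Thus in every case the fixed point is attained within the asserted range of iterations.
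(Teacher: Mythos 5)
Your proposal is correct, and its skeleton is the paper's own: the descending chain~(\ref{eSubsetDismin}), the bound $\card{\distmin{L}}\leq n-1$ from Theorem~\ref{theo:mindist_upp_bound_set}, and a pigeonhole count on cardinalities. What you do differently is the anchor at the bottom of the chain, and the difference is substantive. The paper only observes that each iterate is suffix closed and hence contains $\varepsilon$, i.e., has cardinality at least $1$; taken literally, fitting $k$ strictly decreasing cardinalities into $\{1,\ldots,n-1\}$ bounds the first fixed (one-based) index only by $n-1$, and the stated $n-2$ emerges only if one counts strict decreases rather than iterates. Your extra lemma --- that among finite languages $\distmin{M}=\emptyset$ forces $sc(M)=1$, hence $M=\emptyset$, while $\distmin{M}=\{\varepsilon\}$ forces $sc(M)=2$, hence $M=\{\varepsilon\}$, so that for $n\geq 3$ no iterate ever drops below two elements --- confines the cardinalities to $\{2,\ldots,n-1\}$, a set of exactly $n-2$ values, and delivers $k\leq n-2$ without that off-by-one. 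Its ingredients check out: for finite $M\neq\emptyset$ with $sc(M)=2$ the quotients are $M$ and $\emptyset$, since $a^{-1}M=M$ contradicts finiteness; and $\distmin{M}=\{\varepsilon\}$ implies $sc(M)=2$ because among three pairwise distinct quotients two must agree on membership of $\varepsilon$, so some pair has a nonempty minimal separator. The one blemish, inherited from the statement rather than introduced by you, is the claim that the small cases confirm the bound: for $n=1$ the range $0\leq i\leq n-2$ is empty (and $L=\Sigma^\star$ reaches its fixed point only at $i=1$), and for $n=2$ a language such as $\Sigma^\star a\Sigma^\star$ has $\distmin{L}=\{\varepsilon\}\neq L$, so the fixed point is first reached at $i=1>n-2$; the theorem really needs $n\geq 3$ (or a shifted indexing convention), and under that reading your argument is complete --- indeed slightly more careful than the paper's.
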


\begin{proof}
Because $\distmin{L}$ is suffix closed, 
$\varepsilon\in \distminp{L}{i}$ for all every $i\geq 1$, thus
any automaton recognizing $\distminp{L}{i}$ has at least 2 states.
By Theorem~\ref{theo:mindist_upp_bound_set}, 
$\card{\distmin{L}}\leq n-1$. 
Using Equation~(\ref{eSubsetDismin})
we either have the same set, or a smaller set, thus we may lose 
at least one element at each iteration. Hence, $i\leq n-2$.
\end{proof}

If in the previous theorem we have established an upper-bound for the number of iterations 
of the $\distmino$ operator necessary to reach a fixed point, in the next one we show that the 
upper-bound can be reached.
 
\begin{theorem}
\label{theo:fpdismint}
For all  $n \geq 3$, there exists a regular language $L_n$, 
with $sc(L_n)=n$, such that
\begin{enumerate}[i)]
 \item $\distminp{L_n}{m-1}\neq \distminp{L_n}{m}$, for all $m<n-2$, and 
 \item $\distminp{L_n}{n-2}=\distminp{L_n}{n-1}$.
\end{enumerate}
\end{theorem}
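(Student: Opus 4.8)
The plan is to give a single explicit finite witness whose iterated minimal‑distinguishability sets form a telescoping chain that loses exactly one word per step. Over the ordered alphabet $\Set{0,1}$ with $0<1$, I would take
$$L_n=\Set{0^i\mid 0\leq i\leq n-2}\cup\Set{1}.$$
First I would confirm $sc(L_n)=n$ by listing the left quotients: besides $\emptyset$ and $L_n$ itself, they are exactly the nested ``prefix'' languages $P_m=\Set{0^i\mid 0\leq i\leq m}$ for $0\leq m\leq n-3$, where $1^{-1}L_n=\Set{\varepsilon}=P_0$ coincides with $(0^{n-2})^{-1}L_n$; this gives $1+(n-2)+1=n$ distinct quotients. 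I then introduce the auxiliary family $T_j=\Set{1}\cup\Set{0^i\mid 0\leq i\leq j-2}$ (so $L_n=T_n$ and $\card{T_j}=j$) and reduce the whole theorem to one key identity: $\distmin{T_j}=T_{j-1}$ for all $j\geq 4$, together with the base facts $\distmin{T_3}=\Set{\varepsilon,0}$ and $\distmin{\Set{\varepsilon,0}}=\Set{\varepsilon,0}$, so that $\Set{\varepsilon,0}$ is a $\distmino$-fixed point.

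To prove the key identity I would compute the quotients of $T_j$ — namely the full quotient $A=P_{j-2}\cup\Set{1}$, the chain $P_0\subsetneq P_1\subsetneq\cdots\subsetneq P_{j-3}$, and $\emptyset$ — and then, using $\distminw{L}{x,y}=\min(x^{-1}L\,\Delta\,y^{-1}L)$ from Definition~\ref{def:distmin}, read off the quasi-lexicographically minimal element of every pairwise symmetric difference. The differences $P_m\,\Delta\,P_{m'}$ contribute the powers $0,0^2,\dots,0^{j-3}$; every quotient against $\emptyset$ contributes $\varepsilon$; and $A\,\Delta\,P_m$ contributes $0$ when $m=0$ and $1$ when $m\geq 1$, the latter because the length-one word $1\in A\setminus P_m$ beats every $0$-power $0^{k}$, $k\geq m+1\geq 2$, that also lies in the difference. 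Collecting these minima yields exactly $\Set{\varepsilon,1}\cup\Set{0^i\mid 0\leq i\leq j-3}=T_{j-1}$, and the base cases $j=3$ and $\Set{\varepsilon,0}$ are a direct specialisation of the same bookkeeping.

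Two structural observations make the telescoping exact and constitute the real content. At each step the only word of $T_j$ that is dropped is the longest $0$-power $0^{j-2}$: it belongs to no quotient except $A$, and every difference $A\,\Delta\,Q$ already contains a strictly shorter witness ($\varepsilon$, $0$, or $1$), so $0^{j-2}$ is never a minimal distinguisher. Dually, the word $1$ must survive at every level, which is exactly what forces the chain to stay long; this is the feature that naive candidates (e.g. prefixes of the quasi-lexicographical enumeration) lack, and making the symmetric-difference minima behave is the main obstacle. Granting the identity, a one-line induction gives $\distminp{L_n}{k}=T_{n-k}$ for $0\leq k\leq n-3$ and $\distminp{L_n}{k}=\Set{\varepsilon,0}$ for $k\geq n-2$. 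Hence the cardinalities strictly decrease from $n$ down to $2$, so $\distminp{L_n}{m-1}\neq\distminp{L_n}{m}$ for every $m<n-2$, which is condition~(i), while $\distminp{L_n}{n-2}=\Set{\varepsilon,0}=\distminp{L_n}{n-1}$ places the fixed point at exactly step $n-2$, which is condition~(ii), matching the upper bound of Theorem~\ref{theo:fpdismin}.
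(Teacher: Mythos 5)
Your proof is correct and takes essentially the same route as the paper: exhibit an explicit family of state complexity $n$ on which $\distmino$ telescopes, dropping exactly one word (the longest) per iteration until a two-element fixed point is reached at step $n-2$. The only difference is the concrete witness --- the paper uses $W_m=\suff{0^m1}=\Set{0^i1\mid 0\leq i\leq m}\cup\Set{\varepsilon}$ with $\distmin{W_m}=W_{m-1}$ and fixed point $W_0=\Set{\varepsilon,1}$, while you use $T_j=\Set{0^i\mid 0\leq i\leq j-2}\cup\Set{1}$ with $\distmin{T_j}=T_{j-1}$ and fixed point $\Set{\varepsilon,0}$; your quotient and symmetric-difference computations verify this key identity correctly, so the substitution is harmless (and in fact spelled out in more detail than the paper's own terse verification).
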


\begin{proof}
Consider the family of languages 
$W_m=\suff{0^m1}=\Set{0^i1\mid 0\leq i\leq m}\cup \Set{\varepsilon}$, $m\geq0$. 
%We prove that  $\distminp{W_{m-3}}{m-2}=\distminp{L_{m-3}{m-1}$.
Then $sc(W_m)=m+3$ and
$\distmin{W_m}=\Set{0^i1\mid 0\leq i\leq m-1}\cup\Set{\varepsilon}=W_{m-1}$.
Because $W_0=\Set{1,\varepsilon}$ is a fixed point for $\distmino$, it follows that
\begin{enumerate}
 \item $\distminp{L_{m-3}}{m}\neq \distminp{L_{m-3}}{m-1}$, for all $m<n-2$, and 
  \item $\distminp{L_{m-3}}{n-2}=\distminp{L_{m-3}}{n-1}$.
\end{enumerate}
Hence, we can just take $L_n=W_{n-3}$.
\end{proof}

In the next section we use $\distmin{L}$ to recover $L$ as an $l$-cover language for 
$L\cap\Sigma^{\leq l}$.

\section{Using Minimal Distinguishability Words  to Recover the Original \mbox{Language}}
\label{sec:recoverL}

In Section~\ref{sec:intro}  we claim that for any regular language $L$ there exist a constant $l$, such that 
having the distinguishability language $\distmin{L}$,
we can recover the original language $L$, if we know all the words in $L$ of length less than or equal to $l$, i.e., 
the set $L\cap \Sigma ^{\leq l}$.

Thus, a positive learning procedure can be designed to recover
 the original language, $L$,
from every pair $(\distmin{L},l)$, such that $l$ is large enough.
It is obvious that if $L$ is a regular language, and ${\cal A}=(Q,\Sigma,\delta,q_0,F)$ 
a finite automaton recognizing $L$, i.e., $L=\mathcal{L}({\cal A})$, then $L$ is always 
an $l$-cover language for $L\cap \Sigma^{\leq l}$.
Thus, the goal is to determine the language $L$ as an unique $l$-cover language
for $L\cap \Sigma^{\leq l}$.

Let ${\cal A}=(Q,\Sigma,\delta,q_0,F)$ and $L$ be a regular language such that $L=\mathcal{L}({\cal A})$.
The automaton ${\cal A}$ is minimal if for any two states $p,q\in Q$, $p\neq q$,
we can find a word to distinguish between them. 
Hence,  ${\cal A}$ is minimal if we can find a word $w\in \distmin{L}$ such that it distinguishes between $p$ and $q$, 
thus the words $x_{\cal A}(p)$ and $x_{\cal A}(q)$ are distinguishable by some word in $w\in \distmin{L}$.

Let us consider the Myhill-Nerode equivalence induced by $L$, $\equiv_L$.
Two words $x_1$ and $x_2$ are equivalent, with respect to $\equiv_L$, if and only if there exists
$w\in \distmin{L}$ such that $x_1w\in L$ iff $x_2w\in L$.
Thus, by generating all words in the language of length
$\max(|x_1|,|x_2|)+\max\{|w| \mid w\in \distmin{L}\}$, we can decide 
after a finite number of steps if $x_1\equiv_L x_2$.

Therefore, the following algorithm can select all words $x_{\cal A}(p)$ for a minimal DFA
recognizing $L$:

\begin{algorithmic}[1]
\State $n \gets 1, x \gets \varepsilon, x_{\cal A}(n)\gets x, Q\gets \{n\}, l\gets 0$
\State $y\gets\Succ{x}\text{, where $\Succ{x}$ is the next word for the quasi-lexicographical order}$
\While {$|y|\leq l+1$} 
    \If {$y\not\equiv x_{\cal A}(q),\forall q\in Q$} 
        \State $n\gets n+1, x_{\cal A}(n)\gets y, Q\gets Q\cup\{n\}, l\gets|y|$
    \Else 
        \State {$\delta(p,a)\gets q\text{, where } y=za\equiv_L x_{\cal A}(q), z=x_{\cal A}(p), a\in\Sigma$}
    \EndIf
    \State $x\gets y$, $y\gets\Succ{x}$
    \While {$y$ is unreachable and $|y|\leq l+1$}\label{checkReachable} \Comment {i.e., $y=x_{\cal A}(q)wa$, $x_{\cal A}(q)w\equiv_Lx_{\cal A}(p)$, for some $p,q\in Q$,  $w\in \Sigma^+$, $a\in \Sigma$}
        \State $x\gets y$, $y\gets\Succ{x}$
    \EndWhile
\EndWhile
\State Set as final all $q\in Q$ such that $x_{\cal A}(q)\in L$
\end{algorithmic}

%\begin{enumerate}%{algorithm}
% \item
%   Set $n=1$, $x=\epsilon$, $x_{\cal A}(n)=x $,  $Q=\{n\} $,  and $l=0$
% \item Compute $y=\Succ{x}$\footnote{$\Succ{x}=$next word in quasi-lexicographical order},
% \item Test if $y$ is equivalent with $x_{\cal A}(q)$, for some $q\in Q$ (computed so far).
% \item \label{check-equiv}
%       If $x\not\equiv x_{\cal A}(q)$ for all $q\in Q$, 
%         then 
%     \begin{enumerate}
%     \item  Increment $n$ by 1, $x_{\cal A}(n)=x$, $Q=Q\cup\{n\}$, $l=|x|(=\max\{|x_{\cal A}(q)| \mid q\in Q\}$)
%         (it means that we found a new word, that is not equivalent with any previous selected word);
%     \item else we have $x=za\equiv_L x_{\cal A}(q)$, $x_{\cal A}(p)=z$ , $a\in \Sigma$ and we add transition $\delta(p,a)=q$
%          (the word obtained cannot be distinguished from previous words).
%     \end{enumerate}
%  %\item \label{checkStop}
%  %       Set  $x=y$ if $|x|>l+2d$ then stop.
%  \item 
%   \label{checkReachable}
%    check if  $y=\Succ{x}$ is unreachable, i.e.,
%     $y=x_{\cal A}(q)wa$, $x_{\cal A}(q)w\equiv_Lx_{\cal A}(p)$, for some $p,q\in Q$,  $w\in \Sigma^+$,
%     $a\in \Sigma$,
%  \item If yes, then compute $x=y$ and goto Step~\ref{check-equiv}; // \tmpcomment{assign $y$ to $x$}
%  \item else if $|x|>l+1$ then goto step final, else goto Step~\ref{checkReachable}; //\tmpcomment{no states at level $l$}
%  \item Set as final states all $q\in Q$ such that $x_{\cal A}(q)\in L$; output automaton, stop.
%\end{enumerate}%{algorithm}

We observe that if all the words of length $|x|=l+1$ are equivalent 
with some shorter words,
then all the words of length  greater than $l+1$ will be declared unreachable by the previous algorithm, 
and no new transition can be added.
On the other hand, $\delta(q,a)$ is always defined, as $x_{\cal A}(q)a=x_{\cal A}(p)$, for some $p\not\equiv q$, or
$x_{\cal A}(q)a\equiv x_{\cal A}(p)$, and $\delta(q,a)=p$.
All the states in the above construction correspond to distinguishable words $x_{\cal A}(q)$,
 thus the DFA is minimal.

Note that when testing in quasi-lexicographical order if a new 
word in $\Sigma^{\star}$ is equivalent with previously distinct ones, 
will prune the branches corresponding to  equivalent words,
 and testing done in line~\ref{checkReachable} needs only to test if $y$ 
is on a cut-out branch.
Because all words of length greater than  $n=sc(L)$ must be equivalent
with some word of length less than $n$, it is enough to test only words
of length at most $sc(L)+1$. In order to conduct the equivalence 
test, it is enough to generate  all words of length $n+d+1$, where $d=\max\{|w| \mid w\in \distmin{L}\}$. % and $n=sc(L)$.
Hence, all steps in the algorithm are well defined and they are only executed 
a finite number of times, thus it will produce a minimal DFA after a finite number of steps.

%because the longest word in $\distmin{L}$ has length at most $d$, 

Therefore, we just proved the following theorem:
\begin{theorem}
 Let $L$ be a regular language and $L_D=\distmin{L}$. 
If we have an algorithm to generate all words in the language $L$ up to a given length $m$, 
then we have an algorithm to compute
\begin{enumerate}[i)]
  \item A number $l$, such that $L$ is an $l$-cover language for $L\cap\Sigma^{\leq l}$;
  \item A minimal DFA $A$ for $L$, which is at the same time, an $l$-DFCA for $L\cap\Sigma^{\leq l}$.
\end{enumerate}
\end{theorem}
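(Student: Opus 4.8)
The plan is to verify that the procedure already described---enumerating words in quasi-lexicographical order and using $\distmin{L}$ as an equivalence tester---halts and returns both the promised length $l$ and a minimal \dfa for $L$. I would organise the argument around three claims: correctness of the equivalence test, correctness of the state-and-transition construction, and termination together with the extraction of $l$. The starting observation is that $\distmin{L}$ distinguishes every pair of distinct quotients: for distinct quotients $q_1\neq q_2$ of $L$, choosing $x,y$ with $x^{-1}L=q_1$ and $y^{-1}L=q_2$, the word $\distminw{L}{x,y}=\min(q_1\mathbin{\Delta}q_2)$ lies in $\distmin{L}$ and separates them. Hence for all $x_1,x_2\in\kleene{\Sigma}$ we have $x_1\mneq{L}x_2$ iff $x_1w\in L\Leftrightarrow x_2w\in L$ for every $w\in\distmin{L}$. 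As $\distmin{L}$ is finite with $d=\max\Set{|w|\mid w\in\distmin{L}}$, this test reduces to membership queries about words of length at most $\max(|x_1|,|x_2|)+d$, all of which the generation algorithm supplies, so the test is effective.

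Next I would show that the enumeration records exactly the minimal representatives $x_{\cal A}(q)$ of the $\mneq{L}$-classes. Because words are processed in quasi-lexicographical order and a new state is created precisely when the current word is inequivalent to all previously recorded ones, the recorded set is prefix-closed and each $x_{\cal A}(q)$ is the least word of its class; when $y=za$ is found equivalent to a recorded $x_{\cal A}(q)$, the edge $\delta(x_{\cal A}(p),a)=x_{\cal A}(q)$ with $z=x_{\cal A}(p)$ is well defined, and the pruning of unreachable branches ensures only minimal representatives spawn successors. Declaring a state final exactly when its representative belongs to $L$ (one more membership query) yields a \dfa whose states biject with the quotients of $L$, so it is the minimal \dfa for $L$.

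For termination and the value of $l$, I would use that $L$ has $n=sc(L)$ quotients whose minimal representatives have length at most $n-1$, so every word of length $\geq n$ is $\mneq{L}$-equivalent to a strictly shorter one. Thus once a whole length $l+1$ is scanned without a new state appearing, prefix-closedness forces every longer word to be unreachable and the construction stabilises; the returned $l$ is the maximal length of a recorded $x_{\cal A}(q)$, with $l\leq n-1$. Every membership query concerns a word of length at most $n+d+1$, all of them obtainable from the generation algorithm, so the whole procedure halts. The resulting minimal \dfa recognises $L$, hence is simultaneously a minimal \dfa for $L$ and an $l$-DFCA for $L\cap\Sigma^{\leq l}$, while $L$ is an $l$-cover language for $L\cap\Sigma^{\leq l}$ by definition.

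The step I expect to be the main obstacle is this stabilisation argument: one must certify that the \emph{local} stopping condition---no new state among words of length $l+1$---soundly guarantees that \emph{all} classes have been found. This rests on the prefix-closedness of the minimal representatives together with the length bound $n-1$, so that any still-missing class could only be witnessed by a word some proper prefix of which is already known equivalent to a shorter word, hence already pruned. A secondary point requiring care is completeness of the bounded-length oracle test, namely that no distinguishing word longer than those queried is ever needed; this follows because $\distmin{L}$ is exactly the finite set of minimal distinguishing words.
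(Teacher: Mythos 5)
Your proposal is correct and follows essentially the same route as the paper: the same quasi-lexicographical enumeration with $\distmin{L}$ as a finite equivalence tester (membership queries bounded by $\max(|x_1|,|x_2|)+d$, overall $n+d+1$), the same prefix-closed set of minimal representatives $x_{\cal A}(q)$ with pruning of unreachable branches, and the same stabilisation argument that no new state at length $l+1$ forces all longer words to be pruned. You merely make explicit what the paper leaves informal (and you silently correct the paper's slip of writing ``there exists $w\in\distmin{L}$'' where ``for all $w\in\distmin{L}$'' is meant in the equivalence test), so there is no substantive difference in approach.
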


A crucial role for producing the algorithm is the fact that testing the equivalence of two words
can be done using a finite number of steps, and this is possible if we know $\distmin{L}$.
However, it is not known if an equivalent procedure can be obtained if we know only $\dis{L}$, 
as it is possible to have two languages $L_1$,$L_2$
such that they share the same $\diso$ languages, but different  $\distmino$ languages.

For example, we can take $L_1=\{w\mid w=aaxba^n, x\in \{a,b\}^*, n\geq 0\}$
and $L_2=\{w\mid w=bbxab^n, x\in \{a,b\}^*, n\geq 0\}$.
We have that 
$\distmin{L_1} =\{ b, ab, e,aab\}$, $\distmin{L_2} = \{a,e,ba, bba\}$, 
and $\dis{L_1}=\dis{L_2}=\Sigma^\star$.
This example suggests why knowing $\dis{L}$ may not be enough.

%%todo fix example
%\tmpcomment{This example is not ideal; we need some languages 
%$L_k$, $L_k\subseteq L_{k+1}$,
%$\distmin{L_k}\neq \distmin{L_{k+1}}$, but $\dis{L_k}=\Sigma^\star$, for all $k$}

%%todo compute distmin for $L(A_5)$, and dis for  $L(A_5)\cap \Sigma^{\leq k}$ for few values of $k$
%\tmpcomment{
%$\distmin{\mathcal{L}(A_5)\cap \Sigma^{\leq k}}$ are all different is ok.
%if $\dis{\mathcal{L}(A_5)\cap \Sigma^{\leq k}}$ are all different then example does not work.
% If you cannot find a better example we leave it as is and remove our comments
%} 

In the next section we check under what conditions the results obtained 
so far can be generalized.

\section{Boolean Operations and Closure}
\label{sec:sufclosuregeneral}

In Section~\ref{sgenres} we used Boolean operations and the suffix
operation to compute the distinguishability language. The suffix
operation has the following properties:
\begin{enumerate}[i)]
\item $\suff{\emptyset}=\emptyset$;
\item $L\subseteq \suff{L}$;
\item $\suff{\suff{L}}= \suff{L}$;
\item $\suff{L_1\cup L_2}=\suff{L_1}\cup \suff{L_2}$, 
\label{P4}
\end{enumerate}
thus, it is a closure operator.
If we consider distinguishability operation as a unary operation 
on regular languages, we can see that it is obtained by applying 
finitely many times a closure operator and Boolean operations.
 The Closure-Complement Kuratowski Theorem \cite{Kuratowski1922} 
 says that using one set, one can obtain at most 14
 distinct sets using finitely many times one closure operator 
 and the complement operation. 
For the case of regular languages,  
Brzozowski et al. \cite{brzozowski11:_closur_in_formal_languag_and_kurat_theor}
determine the number of languages that can be obtained 
by applying finitely many times the  Kleene closure and complement.
However, the corresponding property \ref{P4} is not satisfied by the Kleene closure.

In this section we analyze the case of closure operators 
and Boolean operations, and ask if we apply them finitely 
many times we can still obtain only finitely many sets, or 
what is a necessary condition to obtain only finitely many sets.

In order to do this, we need to prove some technical lemmata. In the following $M$ denotes a nonempty set.

\begin{lemma}
\label{lfiniteuint}
 Let $N=\{\mathcal{A}_1,\ldots,A_n\}$ where $\mathcal{A}_i\in 2^M$, $1\leq i\leq n$.
 Then the free algebra $(N,\cup,\cap,\comp{\cdot},\emptyset,N)$ 
has a finite number of elements.
 \end{lemma}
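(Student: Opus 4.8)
The plan is to exhibit an explicit finite Boolean subalgebra of $2^M$ that contains all the generators $\mathcal{A}_1,\ldots,\mathcal{A}_n$ together with the two constants, so that the generated (``free'') algebra is forced to be a subset of it. The standard tool is the partition of $M$ into \emph{atoms} determined by the generators. For each $S\subseteq\{1,\ldots,n\}$ I would define
$$C_S=\bigcap_{i\in S}\mathcal{A}_i\;\cap\;\bigcap_{i\notin S}\comp{\mathcal{A}_i}.$$
First I would check that the (at most $2^n$) sets $C_S$ are pairwise disjoint and satisfy $\bigcup_{S}C_S=M$, i.e. they form a partition of $M$. This is immediate, since any $m\in M$ lies in exactly the cell $C_S$ with $S=\{i\mid m\in\mathcal{A}_i\}$.

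Next I would let $\mathcal{B}$ be the family of all sets expressible as a union of cells, $\mathcal{B}=\Set{\bigcup_{S\in\mathcal{T}}C_S\mid \mathcal{T}\subseteq 2^{\{1,\ldots,n\}}}$, reading the empty union as $\emptyset$. The heart of the argument is to verify that $\mathcal{B}$ is closed under the Boolean operations and contains everything required. Closure under $\cup$ is clear; closure under $\cap$ follows because distinct cells are disjoint, so the intersection of two unions of cells is the union of the cells common to both; and closure under complement follows because the cells partition $M$, whence $\comp{\bigcup_{S\in\mathcal{T}}C_S}=\bigcup_{S\notin\mathcal{T}}C_S$. Moreover $\emptyset\in\mathcal{B}$ (empty union) and $M\in\mathcal{B}$ (union of all cells), and each generator is itself a union of cells, namely $\mathcal{A}_i=\bigcup_{S\ni i}C_S$. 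Hence $\mathcal{B}$ is a Boolean subalgebra of $2^M$ containing all generators and both constants.

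Finally, since the algebra generated by $N$ is the smallest subalgebra containing the generators and the constants, it must be contained in $\mathcal{B}$. As $\mathcal{B}$ is parametrised by the subfamilies $\mathcal{T}$ of the at most $2^n$ cells, it has at most $2^{2^n}$ elements, so the generated algebra is finite. I expect the only delicate point to be the verification that $\mathcal{B}$ is closed under intersection and complement: everything hinges on the cells being pairwise disjoint with union $M$, and once that is in place the counting bound $2^{2^n}$ drops out immediately.
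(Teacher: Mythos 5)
Your proof is correct and is essentially the paper's argument: the paper disposes of this lemma in one line by observing that every expression reduces to disjunctive normal form, of which there are only finitely many, and your cells $C_S$ are precisely the minterms of that full normal form. You merely recast the syntactic remark semantically, exhibiting the unions of atoms as an explicit finite subalgebra (of size at most $2^{2^n}$) containing the generators and constants — a rigorous filling-in of the same idea rather than a different route.
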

\begin{proof}
 All expressions can be reduced to the disjunctive normal form, 
 and we only have finitely many such formulae.
%\qed
\end{proof}

\begin{lemma}
\label{lcomute}
Let  $c_1,c_2:2^M\longrightarrow 2^M$ be
two closure operators such that $c_1$ and $c_2$ commute, i.e.,
$c_1\circ c_2=c_2 \circ c_1$.
Then the composition $c=c_2\circ c_1$ is also a closure operator.  
\end{lemma}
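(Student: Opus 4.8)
The plan is to verify directly that the composition $c = c_2 \circ c_1$ satisfies the three defining axioms of a closure operator, namely that it is \emph{extensive} ($L \subseteq c(L)$), \emph{idempotent} ($c(c(L)) = c(L)$), and \emph{monotone} ($L_1 \subseteq L_2 \implies c(L_1) \subseteq c(L_2)$). I will use throughout the fact that both $c_1$ and $c_2$ individually enjoy these properties, together with the commutativity hypothesis $c_1 \circ c_2 = c_2 \circ c_1$. The commutativity assumption is precisely what is needed to push the idempotence proof through, so I expect idempotence to be the only step requiring the full strength of the hypothesis; extensivity and monotonicity follow from the individual axioms alone.

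First I would prove extensivity. For any $L \in 2^M$, extensivity of $c_1$ gives $L \subseteq c_1(L)$, and applying $c_2$ (using its monotonicity) yields $c_2(L) \subseteq c_2(c_1(L)) = c(L)$; combining this with extensivity of $c_2$, which gives $L \subseteq c_2(L)$, we obtain $L \subseteq c(L)$ by transitivity. Next I would establish monotonicity: if $L_1 \subseteq L_2$, then monotonicity of $c_1$ gives $c_1(L_1) \subseteq c_1(L_2)$, and applying $c_2$ with its monotonicity gives $c_2(c_1(L_1)) \subseteq c_2(c_1(L_2))$, i.e.\ $c(L_1) \subseteq c(L_2)$. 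Both of these are routine chains of the individual axioms and do not invoke commutativity.

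The main step is idempotence, where commutativity enters. I would compute
\[
c(c(L)) = c_2 \circ c_1 \circ c_2 \circ c_1 (L),
\]
and use the hypothesis $c_1 \circ c_2 = c_2 \circ c_1$ to swap the middle two factors, obtaining $c_2 \circ c_2 \circ c_1 \circ c_1 (L)$. Then idempotence of $c_1$ collapses $c_1 \circ c_1$ to $c_1$, and idempotence of $c_2$ collapses $c_2 \circ c_2$ to $c_2$, leaving $c_2 \circ c_1(L) = c(L)$. This swap is exactly where the commutativity assumption is indispensable; without it one cannot in general reduce the fourfold composition. I expect this rearrangement to be the crux of the argument, though it is short once the three axioms and the hypothesis are laid out.

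Having verified all three axioms, I would conclude that $c = c_2 \circ c_1$ is a closure operator, completing the proof. The whole argument is essentially a bookkeeping exercise in composing operators, with the only conceptual content being the recognition that commutativity is what rescues idempotence.
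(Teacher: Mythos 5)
Your core computation is sound, and your idempotence step --- rewriting $c(c(L))=c_2\circ c_1\circ c_2\circ c_1(L)$, swapping the middle factors by commutativity to get $c_2\circ c_2\circ c_1\circ c_1(L)$, and collapsing with the idempotence of each $c_i$ --- is essentially identical to the paper's own argument, down to the observation that commutativity is needed only there. The gap is in the list of axioms you chose to verify. In this paper a closure operator is not the order-theoretic notion (extensive, idempotent, monotone) but the one fixed at the start of Section~\ref{sec:sufclosuregeneral} by the four properties of $\suffo$: (i) $c(\emptyset)=\emptyset$, (ii) $L\subseteq c(L)$, (iii) $c(c(L))=c(L)$, and (iv) $c(L_1\cup L_2)=c(L_1)\cup c(L_2)$; the paper even singles out (iv) as the property that Kleene star lacks. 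You verify (ii), (iii), and monotonicity, but monotonicity is strictly weaker than (iv) (it follows from (iv), since $L_1\subseteq L_2$ gives $c(L_2)=c(L_1\cup L_2)=c(L_1)\cup c(L_2)\supseteq c(L_1)$, but not conversely), and you never check (i) or (iv) at all. These are not decorative here: the finite-additivity property (iv) is exactly what the subsequent disjunctive-normal-form argument and Lemma~\ref{lem:fixgen} rely on when they let the composite closure commute with unions, so a proof of this lemma that omits it does not support the rest of the section.

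The repairs are one-liners and, notably, need no commutativity: $c(\emptyset)=c_2(c_1(\emptyset))=c_2(\emptyset)=\emptyset$ by (i) for $c_1$ and then $c_2$, and $c(L_1\cup L_2)=c_2(c_1(L_1)\cup c_1(L_2))=c_2(c_1(L_1))\cup c_2(c_1(L_2))=c(L_1)\cup c(L_2)$ by (iv) for $c_1$ and then $c_2$. With those two lines added (and your monotonicity paragraph then becoming redundant, since it is a consequence of (iv)), your proof coincides with the paper's, which checks precisely these four properties in the same direct fashion. As written, however, you have proved the lemma only for the weaker standard notion of closure operator, not for the finitely additive one the paper actually uses.
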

\begin{proof}
Let us verify the properties of a closure operator, thus if 
$L,L_1,L_2\in 2^M$, we have:
\begin{enumerate}[i)]
\item $c(\emptyset)=(c_2\circ c_1)(\emptyset)=c_2(c_1(\emptyset))=\emptyset$;
\item $L\subseteq c_2(L)\subseteq c_1(c_2(L))=(c_2\circ c_1)(L)=c(L)$;
\item $c(c(L))=(c_2\circ c_1)((c_2\circ c_1)(L))=
       (c_2\circ c_1)((c_1\circ c_2)(L))=
       (c_2\circ c_1\circ c_1\circ c_2)(L)=
       (c_2\circ c_1\circ c_2)(L)=
       (c_2\circ c_2\circ c_1)(L)=
       (c_2\circ c_1)(L)=c(L)$;
\item $c(L_1\cup L_2)=(c_2\circ c_1)(L_1\cup L_2)=
c_2(c_1(L_1\cup L_2))=
c_2(c_1(L_1)\cup c_1(L_2))=
c_2(c_1(L_1))\cup c_2(c_1(L_2))=
c(L_1)\cup c(L_2)$. 
\end{enumerate}
\end{proof}

In general, not all closure operators commute, for example, 
$N_0,N_1:2^\N\longrightarrow2^\N$ defined by 
$N_0(A)=A\cup \{x\in \N\mid  \exists k>0, x=2k\mbox{ if }2k-1\in A\}$,
$N_1(A)=A\cup \{x\in \N\mid  \exists k>=0, x=2k+1\mbox{ if }2k\in A\}$,
do not commute one with each other, as $N_0$ adds all the even numbers that
are successors of elements in the set $A$, and $N_1$  adds all the odd numbers
that are successors of elements in the set $A$. 
Applying the closure operators alternatively to a finite set $A$, 
we always obtain a new set. 

The next result is well known for the behaviour of closure operators when applied 
to a intersection of two other sets.

% todo: This lemma can be omitted as an intersection of closed
             % sets is closed -- is general fact. 
\begin{lemma}  
\label{lkcap}
Let $c$ be a closure operator on $2^M$.
 If $L_1,L_2\in 2^M$ are closed subsets,  then
$c(L_1)\cap c(L_2)=c(L_1\cap L_2)$.
\end{lemma}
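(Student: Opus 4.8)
The plan is to exploit the hypothesis that $L_1$ and $L_2$ are closed, which by idempotence means $c(L_1)=L_1$ and $c(L_2)=L_2$. The claimed identity $c(L_1)\cap c(L_2)=c(L_1\cap L_2)$ then reduces to showing $c(L_1\cap L_2)=L_1\cap L_2$; that is, that the intersection of two closed sets is itself closed.

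The first step I would carry out is to extract \emph{monotonicity} from the defining properties, since it is not listed among i)--iv) but is what the argument actually needs. From property iv), $c(A\cup B)=c(A)\cup c(B)$, together with the observation that $A\subseteq B$ is equivalent to $A\cup B=B$, I would deduce that $A\subseteq B$ implies $c(A)=c(A)\cup c(B)\setminus\text{(nothing)}$; more cleanly, $c(B)=c(A\cup B)=c(A)\cup c(B)$, whence $c(A)\subseteq c(B)$. So every closure operator satisfying i)--iv) is monotone.

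With monotonicity in hand, the two inclusions are immediate. For $L_1\cap L_2\subseteq c(L_1\cap L_2)$ I would invoke extensivity (property ii). For the reverse inclusion, I would apply monotonicity to $L_1\cap L_2\subseteq L_1$ to get $c(L_1\cap L_2)\subseteq c(L_1)=L_1$, and symmetrically $c(L_1\cap L_2)\subseteq c(L_2)=L_2$, so that $c(L_1\cap L_2)\subseteq L_1\cap L_2$. Combining the two inclusions gives $c(L_1\cap L_2)=L_1\cap L_2=c(L_1)\cap c(L_2)$, which is the assertion.

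There is no serious obstacle here; the only subtlety worth flagging is that the result genuinely uses additivity (property iv) through the monotonicity it forces, so the statement could fail for a hypothetical ``closure-like'' operator lacking iv). I would therefore make the derivation of monotonicity explicit rather than treat it as folklore, since it is the sole nontrivial ingredient.
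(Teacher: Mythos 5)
Your proof is correct, and it is the standard argument: derive monotonicity from property iv) (via $A\subseteq B \Leftrightarrow A\cup B = B$), then combine extensivity with $c(L_1\cap L_2)\subseteq c(L_i)=L_i$ to get both inclusions. The paper itself states this lemma without proof, calling it well known, so your write-up simply supplies the expected argument; your remark that monotonicity is the one ingredient not listed among the paper's axioms i)--iv), and that it follows from iv), is exactly the right point to make explicit.
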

%\begin{proof}
%We always have that  $c(L_1\cap L_2)\subseteq c(L_1)\cap c(L_2)$.
%If $x\in c(L_1)\cap c(L_2)$, then
%$x\notin \comp{c(L_1)\cap c(L_2)}=
%\comp{c(L_1)}\cup \comp{c(L_2)}=\comp{L_1}\cup \comp{L_2}$.
%Thus, $x\in \comp{\comp{L_1}\cup \comp{L_2}}=
%                        L_1\cap L_2\subseteq c(L_1\cap L_2)$.
%\end{proof}
Assume we have a finite number of sets $L_1, \ldots, L_m$.
Using closure and complement for each set $L_i$, 
$1\leq i\leq m$, we can obtain a 
finite number of sets \cite{Kuratowski1922}, say
  $M_1, \ldots,M_l$.
  Now consider a Boolean expression using $M_1, \ldots,M_l$.
Because we can transform all these Boolean expressions in
disjunctive normal form, the number of Boolean expressions
over $M_1, \ldots,M_l$ is finite.
Applying the closure operator to such an expression 
will commute with union, and the other sets 
are in the form 
$c(M_{i_1}\cap\cdots \cap M_{i_k})$.
If all $M_{i_j}$ $1\leq j\leq k$ are closed sets,
then   
$c(M_{i_1}\cap\cdots \cap M_{i_k})$ is a conjunction of 
some other sets $M_{j_1}, \ldots,M_{j_k}$, $1\leq j_i\leq l$.
Otherwise, if a set $M_{i_j}$ $1\leq j\leq k$ is not closed,
we may obtain new sets, as we can see from the following example:
if 
 $L_1=\{aaaa,abaabbaa,b\}$, $L_2=\{bbb,baaab,aa\}$, where $c=\suffo$,
then
 $c(L_1\cap c(L_2))\cap L_2\neq c(L_1)\cap L_2$.

This suggests that if a unary operation that combines Boolean 
operations and closure operators
is repeatedly applied to a set and we first apply the closure 
operator to the set and its complement, 
then we use other Boolean operations or the closure operator
finitely many times, we will always obtain  finitely many sets.
\ It follows that we have just proved the following lemma:
\begin{lemma}
\label{lem:fixgen}
Let  $c$ be a closure operator on $2^M$. If  $\bcopo:2^M\longrightarrow 2^M$ is defined as
a  reunion and intersections over $c(L)$ and $c(\comp{L})$, for $L\in 2^M$,
then 
\begin{enumerate}
\item for every set $A$, $\bcop{A}=c(B)$, for some $B\in 2^M$;
%\tmpcomment{That's what I meant by closed: applied to a set, the result is closed
%We can probably use $c(\bcop{A})=\bcop{A}$, but this is weaker.}
\item any iteration of $\bcopo$ will produce a finite number of sets;
\item if $\bcop{L} \subset L$, for all $L$, then $\bcopo$ has a fixed point. 
%\tmpcomment{I think the lemma is now good to go}
\end{enumerate}
\end{lemma}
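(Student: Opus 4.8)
The plan is to treat $\bcopo$ as a fixed monotone expression in the two operands $c(L)$ and $c(\comp{L})$, built using only $\cup$ and $\cap$, and to exploit the fact that both operands lie in the image of $c$ and are therefore closed. For the \textbf{first claim}, I would show that closedness is preserved by the two operations allowed in $\bcopo$: the distributivity property~\ref{P4} gives $c(L_1\cup L_2)=c(L_1)\cup c(L_2)$, so a union of closed sets is closed, while Lemma~\ref{lkcap} gives $c(L_1)\cap c(L_2)=c(L_1\cap L_2)$, so an intersection of closed sets is closed. Since $c(A)$ and $c(\comp{A})$ are closed, a straightforward induction on the structure of the defining expression shows that $\bcop{A}$ is closed; hence $\bcop{A}=c(\bcop{A})$ and we may take $B=\bcop{A}$.

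For the \textbf{second claim}, the key observation is that, by the first claim, the very first iterate $\bcop{A}$ is already closed, so it suffices to analyse the orbit of $\bcopo$ starting from a closed set $L$. For such an $L$ one has $c(L)=L$, so $\bcop{L}$ is a $\{\cup,\cap\}$-combination of $L$ and $c(\comp{L})$. The four defining properties of $c$ listed above are exactly the Kuratowski closure axioms, so the Kuratowski closure-complement theorem bounds by a finite number the family of sets obtainable from $L$ by alternating $c$ and complementation. I would then argue that every iterate $\bcopn{A}{n}$, $n\geq 1$, is a $\{\cup,\cap\}$-combination of members of this finite Kuratowski family: the inductive step requires complementing such a combination (which, by De~Morgan, is again a combination of complements, and the family is closed under complement) and applying $c$ to it (which, by property~\ref{P4}, distributes over the unions and, by Lemma~\ref{lkcap}, leaves the closed intersection terms unchanged). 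By Lemma~\ref{lfiniteuint} the $\{\cup,\cap\}$-combinations of a finite family form a finite collection, so the orbit of $\bcopo$ is finite.

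For the \textbf{third claim}, I would read the hypothesis as $\bcop{L}\subseteq L$ for every $L$. Iterating, the orbit $A\supseteq \bcop{A}\supseteq \bcopn{A}{2}\supseteq\cdots$ is a decreasing chain, and by the second claim it assumes only finitely many distinct values. A decreasing chain with finitely many values is eventually constant, say $\bcopn{A}{n+1}=\bcopn{A}{n}$; this common value $X$ satisfies $\bcop{X}=X$ and is the desired fixed point. This is exactly the pattern realised by the distinguishability operator, where $\disn{L}{n+1}\subseteq \disn{L}{n}$ (Equation~\eqref{eqchain}) forces the chain to stabilise, as in Theorem~\ref{theo:fixpoint}.

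The step I expect to be the main obstacle is the inductive claim inside the second part. A priori, applying $c$ to an intersection of members of the Kuratowski family can produce a genuinely new set, since $c$ does not distribute over intersection; indeed the counterexample $c(L_1\cap c(L_2))\cap L_2\neq c(L_1)\cap L_2$ shows that the closure-Boolean closure of finitely many sets need not be finite in general. The resolution is that, because the orbit starts from a closed set after one application, the only intersections to which $c$ is ever applied are intersections of closed sets, and there Lemma~\ref{lkcap} collapses $c(\,\cdot\,)$ to the identity, keeping every iterate inside the finite family. Making this bookkeeping precise — rather than the finiteness count itself — is where the care is needed.
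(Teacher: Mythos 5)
Your handling of the first and third claims is correct and essentially the paper's own route: closedness of $\bcop{A}$ follows from property~\ref{P4} for unions and from Lemma~\ref{lkcap} for intersections of closed sets, and the third claim is the standard observation that a decreasing orbit taking finitely many values must stabilize (reading the hypothesis as $\bcop{L}\subseteq L$, as you do). The genuine gap is in the inductive step of your second claim. The invariant you assert --- that $c$ is only ever applied to intersections of closed sets --- is not maintained by the induction you describe, and closedness of the iterates does not rescue it. Concretely: once some iterate $X$ is a proper union of intersections of members of the Kuratowski family, the complement $\comp{X}$ in disjunctive normal form has conjuncts that are complements of closed family members, hence non-closed sets; after distributing $c$ over the outer union via property~\ref{P4}, you are applying $c$ to intersections of these non-closed sets, exactly the situation where Lemma~\ref{lkcap} gives nothing and where the paper's own counterexample $c(L_1\cap c(L_2))\cap L_2\neq c(L_1)\cap L_2$ shows new sets can appear. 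You quote that counterexample and claim it is avoided because the orbit starts at a closed set, but $X$ being closed does not make the DNF conjuncts of $\comp{X}$ closed, so the resolution you offer is an assertion, not a proof; the bookkeeping you correctly identify as the main obstacle remains open in your write-up (and, to be fair, the paper's preceding discussion is itself informal on precisely this point).

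The gap is repairable by exploiting how little freedom $\bcopo$ actually has: it is a fixed positive $\{\cup,\cap\}$-expression in only the two operands $c(L)$ and $c(\comp{L})$, so by Lemma~\ref{lfiniteuint}-style normalization there are just four possible operators, each with a directly verifiable finite orbit. If $\bcop{L}=c(L)$, the operator is the identity on closed sets, so the orbit has at most two elements. If $\bcop{L}=c(L)\cup c(\comp{L})$, extensivity gives $\bcop{L}\supseteq L\cup\comp{L}=M$, so the orbit is constantly $M$ after one step. If $\bcop{L}=c(\comp{L})$, the iterates are words in the monoid generated by closure and complement, which is finite --- this is what Kuratowski's theorem actually bounds, with no lattice combinations ever arising; note this case can cycle with period two (e.g.\ $X=[0,1]$ under topological closure on $\mathbb{R}$), which shows why claim 3 genuinely needs the decreasing hypothesis. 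Finally, if $\bcop{L}=c(L)\cap c(\comp{L})$, the orbit stabilizes at the second iterate: writing $X=\bcop{L}$, which is closed by your claim 1, one has $\bcopn{L}{2}=X\cap c(\comp{X})\subseteq c(\comp{X})$, and since $\comp{\bcopn{L}{2}}=\comp{X}\cup\comp{c(\comp{X})}$, property~\ref{P4} yields $c(\comp{\bcopn{L}{2}})=c(\comp{X})\cup c(\comp{c(\comp{X})})$, whence $\bcopn{L}{3}=\bcopn{L}{2}\cap c(\comp{\bcopn{L}{2}})\supseteq \bcopn{L}{2}\cap c(\comp{X})=\bcopn{L}{2}$, i.e.\ $\bcopn{L}{3}=\bcopn{L}{2}$ --- this is exactly Corollary~\ref{cor:ot}, proved as in Theorem~\ref{theo:fixpoint}. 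Substituting this four-case analysis for your unproved invariant makes the second claim, and with it your whole argument, sound.
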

 
Of course, if we have more than one closure operator, and we want to obtain 
finitely many sets,
we must first apply one closure operator to the collection of sets and their
 complements,
 then all the other Boolean operators and closure operator again.
In this way, we have guaranteed that we can only obtain finitely many sets.
In case the operation defined this way is monotone and bounded, it will have 
a fixed point.
%%is ok to add this?
In particular, we have a generalization of Theorem~\ref{theo:fixpoint}.
\begin{corollary}
\label{cor:ot}
Let $L\in 2^M$ and $c$ be a closure operator on $2^M$. 
If $\bcop{L}=c(L)\cap c(\comp{L})$ then $\bcopn{L}{3}=\bcopn{L}{2}$.
\end{corollary}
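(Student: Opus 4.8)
The plan is to mimic the proof of Theorem~\ref{theo:fixpoint} at the abstract level, replacing $\suffo$ by the arbitrary closure operator $c$ and using the three algebraic facts we have already assembled: that $c$ is a closure operator (so $c(c(X))=c(X)$, $X\subseteq c(X)$, and $c$ distributes over union), that $c$ respects intersection on closed sets (Lemma~\ref{lkcap}), and that any $\bcop{X}=c(X)\cap c(\comp{X})$ is itself a $c$-closed set, since by Lemma~\ref{lem:fixgen}(1) every $\bcop{A}$ equals $c(B)$ for some $B$, hence is closed. First I would record the base computation
\begin{equation*}
\bcopn{L}{2}=\bcop{\bcop{L}}=c(\bcop{L})\cap c(\comp{\bcop{L}})=\bcop{L}\cap c(\comp{\bcop{L}}),
\end{equation*}
where the last equality uses that $\bcop{L}$ is $c$-closed. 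This is the exact analogue of Equation~(\ref{eq:d2}).

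Next I would expand the complement of $\bcopn{L}{2}$. Writing $\bcop{L}=c(L)\cap c(\comp{L})$ and then using the expression above, $\comp{\bcopn{L}{2}}=\comp{\bcop{L}\cap c(\comp{\bcop{L}})}$, and applying De~Morgan together with the distributivity of $c$ over union (property iv of a closure operator), I would obtain
\begin{equation*}
c(\comp{\bcopn{L}{2}})=c(\comp{\bcop{L}})\cup c(\comp{c(\comp{\bcop{L}})}),
\end{equation*}
paralleling the displayed identity just before the end of the proof of Theorem~\ref{theo:fixpoint}. Then I would compute $\bcopn{L}{3}=c(\bcopn{L}{2})\cap c(\comp{\bcopn{L}{2}})=\bcopn{L}{2}\cap c(\comp{\bcopn{L}{2}})$, substitute the union just derived, and distribute the intersection over the union to get two terms. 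The first term collapses to $\bcopn{L}{2}$ because $\bcopn{L}{2}=\bcop{L}\cap c(\comp{\bcop{L}})\subseteq c(\comp{\bcop{L}})$, so $\bcopn{L}{2}\cap c(\comp{\bcop{L}})=\bcopn{L}{2}$. The second term is itself a subset of $\bcopn{L}{2}$, so the union is just $\bcopn{L}{2}$, yielding $\bcopn{L}{3}=\bcopn{L}{2}$.

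The one place I expect friction is justifying each replacement of $c(X)$ by $X$: these steps rely on knowing that the relevant argument is already $c$-closed (for instance $\bcop{L}$ and $\bcopn{L}{2}$), and that fact comes precisely from Lemma~\ref{lem:fixgen}(1), which guarantees every output of $\bcopo$ is of the form $c(B)$. The distributivity identity for $c$ over a complement-of-intersection needs the closure axiom iv applied after De~Morgan turns the intersection into a union; I would state this carefully rather than treat it as obvious, since it is the only genuinely algebraic manoeuvre. Everything else is bookkeeping identical in shape to Theorem~\ref{theo:fixpoint}, so the corollary follows by reading that proof with $\suffo$ systematically replaced by $c$.
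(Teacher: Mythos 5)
Your proof is correct and matches the paper's intended argument: the corollary is presented there as an immediate generalization of Theorem~\ref{theo:fixpoint} (via Lemma~\ref{lem:fixgen} and Lemma~\ref{lkcap}), obtained precisely by rereading that proof with $\suffo$ replaced by $c$, using the $c$-closedness of $\bcop{L}$ and $\bcopn{L}{2}$ together with distributivity of $c$ over unions after De~Morgan, exactly as you do. Your care in flagging where closedness is invoked is sound, and either justification you mention (Lemma~\ref{lem:fixgen}(1) or Lemma~\ref{lkcap} applied to the closed sets $c(L)$ and $c(\comp{L})$) suffices.
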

%\tmpcomment{Yes, it is ok. It is the exact same computation as for suffix(T3.13) and we use
%Lemma 7.4 1. (in the modified form):$c(\bcop{L})=\bcop{L}$.}

%went to next section
%A natural extension of $\diso$ is to consider prefix operator and infix operator, thus,
%$\dpre{L}=\preff{L}\cap\preff{\comp{L}}$, or
%$\dinf{L}=\inff{L}\cap\inff{\comp{L}}$.
%Because both are closure operators, they will have the exact same properties as the $\diso$
%operator.
%However, the $\distmino$ operator is not a closure operator,
%because $L\not\subseteq \distmin{L}$, thus the constructions studied 
%in this section will not apply to either $\distmino$ or to other derived operations like:
%\begin{itemize}
% \item $\dpremin{L}=\Set{\dpreminw{L}{x,y} \mid  x\not\eqsim_L y}$, where
%$\dpreminw{L}{x,y} = \min\Set{w\mid w\in \dprew{L}{x,y}},
%$ or
%\item 
%$\dinfmin{L}=\Set{\dinfminw{(x,y)}{L} \mid  x\not\approxeq_L y}$,
%where $
%\dinfminw{(x,y)}{L} = \min\Set{w\mid w\in \dinfminw{(x,y)}{L}}.
%$
%\end{itemize}
%Here minimum is considered with respect to the quasi-lexicographical order, and
%for the newly introduced symbols we have 
%$x\eqsim_L y$ iff $(\forall u)u\in \Sigma^\star, ux\in L \Leftrightarrow uy\in L$
%and 
%$x\not\approxeq_L y$ iff 
%$(\forall u)(\forall v)u,v\in \Sigma^\star,uxv\in L \Leftrightarrow uyv\in L$.

\section{More Distinguishability Operations}
\label{sec:otherdis}
A natural extension of $\diso$, as defined in Theorem~\ref{theo:dss}, is to consider prefix operator and infix operator, thus,
$\dpre{L}=\preff{L}\cap\preff{\comp{L}}$, or
$\dinf{L}=\inff{L}\cap\inff{\comp{L}}$, where $\preff{L}$ denotes the language of all prefixes of $L$ and $\inff{L}$ the language of all factors of $L$.
Because $\preffo$ and $\inffo$ are closure operators, $\dpreo$ and $\dinfo$  will share properties of  $\diso$. In particular $\dpre{L}$ is prefix-closed, $\dinf{L}$ is infix-closed, and both satisfy Corollary~\ref{cor:ot}. If $L$ is $\emptyset$ or $\Sigma^\star$, then $\dpre{L}=\dinf{L}=\emptyset$.

%????
%%to be moved
%However, the $\distmino$ operator is not a closure operator,
%because $L\not\subseteq \distmin{L}$, thus the constructions studied 
%in this section will not apply to either $\distmino$ or to other derived operations like:
%\begin{itemize}
% \item $\dpremin{L}=\Set{\dpreminw{L}{x,y} \mid  x\not\mnleq{L} y}$, where
%$\dpreminw{L}{x,y} = \min\Set{w\mid w\in \dprew{L}{x,y}},
%$ or
%\end{itemize}
%% end to move
%????

In the following subsections we briefly consider these operators.

\subsection{Right Distinguishability}
\label{sec:rightdis}
Given a regular language $L$, the (Myhill-Nerode) relation on $\Sigma^\star$, $x\mnleq{L} y$ if and only if $(\forall u)u\in \Sigma^\star, ux\in L \Leftrightarrow uy\in L$ is an equivalence relation with finite index and  left invariant. The \emph{right quotient} of $L$ by a word $u\in \Sigma^\star$ is the language $Lu^{-1}=\{x\in \Sigma^\star \mid xu \in L\}$ and corresponds to an equivalence class of $\mnleq{L}$,~\cite{champarnaud13:_two_sided_deriv_for_regul,sakarovitch09:_elemen_of_autom_theor}.
For $x,y\in \Sigma^\star$, we define $\dprew{L}{x,y} = L^{-1}x\Delta L^{-1}y$. Then, if we define
the \emph{right distinguishability language} of $L$ by 
\begin{equation}
\label{eq:disleftR}
\dpre{L}=\{w\mid \exists x,y\in \Sigma^\star \ (wx \in L\ \wedge \ wy
\notin L)\},
\end{equation}
\noindent it is immediate that 
\begin{equation*}
 \dpre{L} = \bigcup_{x\in\kleene{\Sigma}}Lx^{-1}\setminus
 \bigcap_{x\in\kleene{\Sigma}}Lx^{-1}, 
\end{equation*}
and
\begin{equation*}
 \dpre{L}=\preff{L}\cap\preff{\comp{L}}.
 \end{equation*}
For $u\in \Sigma^\star$, $(Lu^{-1})^R=(u^R)^{-1}L^R$, i.e., the right quotients of $L$ are exactly the reversals of the (left) quotients of $L^R$, which  correspond to  the atoms of $L$,~\cite{brzozowski11:_theor_of_atomat}. Thus, $\dpre{L}$ is the language of the words that distinguish between pairs of different atoms  of $L$.
We have that 
\begin{equation}
\label{eq:dpredis}
\dpre{L}=(\dis{L^R})^R,
\end{equation}
i.e., $\dis{L}^R=\dpre{L^R}$. 

\begin{lemma}
\label{lem:qepref}
Let $L$ be a regular language. If $L$ does not have $\emptyset$ as a quotient, then  $\dpre{L}=\preff{\comp{L}}$.
\end{lemma}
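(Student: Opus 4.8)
The plan is to prove the statement $\dpre{L}=\preff{\comp{L}}$ under the hypothesis that $L$ has no $\emptyset$ quotient, by recognizing this as the prefix-analogue of Lemma~\ref{lem:qempty}. Recall that Lemma~\ref{lem:qempty} says that if $L$ has $\emptyset$ as a \emph{left} quotient, then $\dis{L}=\suff{L}$, with the essential mechanism being that an empty quotient forces $\suff{\comp{L}}=\kleene{\Sigma}$, so that $\dis{L}=\suff{L}\cap\suff{\comp{L}}$ collapses to $\suff{L}$. Here, since $\dpre{L}=\preff{L}\cap\preff{\comp{L}}$ by the displayed characterization just above the statement, the goal reduces to showing that the hypothesis forces $\preff{L}=\kleene{\Sigma}$, i.e., every word in $\kleene{\Sigma}$ is a prefix of some word of $L$.

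First I would unwind what ``$L$ does not have $\emptyset$ as a quotient'' means. An empty left quotient $z^{-1}L=\emptyset$ means there is a word $z$ with no extension $zv\in L$. Its negation is exactly condition~\eqref{ecological} from the proof of Theorem~\ref{ldead}: $(\forall w\in\Sigma^\star)(\exists v\in\Sigma^\star)(wv\in L)$. This is precisely the statement that every word $w$ is a prefix of some word $wv\in L$, i.e., $w\in\preff{L}$. Hence $\preff{L}=\kleene{\Sigma}$. Substituting into $\dpre{L}=\preff{L}\cap\preff{\comp{L}}$ gives $\dpre{L}=\kleene{\Sigma}\cap\preff{\comp{L}}=\preff{\comp{L}}$, which is the claim.

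The key steps in order are: (i) invoke the displayed identity $\dpre{L}=\preff{L}\cap\preff{\comp{L}}$; (ii) translate the no-empty-quotient hypothesis into the reachability condition $(\forall w)(\exists v)(wv\in L)$; (iii) observe this is exactly $\preff{L}=\kleene{\Sigma}$; and (iv) conclude by intersection. This is essentially a two-line argument once the translation in step (ii) is made explicit. I expect the only subtle point — and thus the main obstacle, though a mild one — to be the careful distinction between \emph{left} quotients and \emph{right} quotients in this context: the lemma concerns $\dpreo$, which is governed by right quotients via $\dpre{L}=(\dis{L^R})^R$, yet the hypothesis as stated refers to (left) quotients. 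I would verify that the intended reading is consistent, most cleanly by working directly with the prefix characterization $\dpre{L}=\preff{L}\cap\preff{\comp{L}}$ rather than routing through the reversal identity~\eqref{eq:dpredis}, so that the hypothesis about quotients connects to $\preff{L}$ through the reachability condition~\eqref{ecological} without needing to track a reversal.
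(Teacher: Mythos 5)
Your proposal is correct and matches the paper's own proof: the paper likewise observes that the absence of an empty quotient forces $\preff{L}=\kleene{\Sigma}$ and then intersects with $\preff{\comp{L}}$ via the identity $\dpre{L}=\preff{L}\cap\preff{\comp{L}}$. Your step (ii) making the translation of the hypothesis explicit is exactly the (unstated) content of the paper's one-line argument, so there is no substantive difference.
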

\begin{proof}
Because $\emptyset$ is not a quotient of $L$, we have
 $\preff{L}=\Sigma^\star$, therefore $\dpre{L}=\preff{\comp{L}}$.
\end{proof}

We have  $\dpre{L}=L$ if and only if $\dis{L^R}=L^R$. 
In particular, $L$ has $\emptyset$ as a right quotient if and only if  $L^R$ has $\emptyset$ as quotient.
The fact that  $L$ has an empty  right quotient does not imply that $L$ has an empty (left) quotient, as can be seen with $L=(a+b)^\star a$, where  $\dpre{L}=\Sigma^\star$.
 The results in~\ref{lem:prelemma} follow immediately from~\ref{lem:qempty}--\ref{cor:finite}. 

\begin{lemma}
\label{lem:prelemma}
Let $L$ be a regular language. Then the following statements hold true:
\begin{enumerate}[i)] 
\item If $L$ has $\emptyset$ as a right quotient, then  $\dpre{L}=\preff{L}$.
\item If $L$ is s prefix-closed and  $L$ has a $\emptyset$ as a right quotient, then $\dpre{L}=L$.
\item If  $\dpre{L}=L$, then $L$ has $\emptyset$ as a right quotient.
\end{enumerate}
\end{lemma}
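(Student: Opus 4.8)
The plan is to obtain all three statements by \emph{reversal duality}, transferring each one to the corresponding result about the left operator $\diso$ already established in Lemmas~\ref{lem:qempty}--\ref{ldead}. The main tool is Equation~\eqref{eq:dpredis}, namely $\dpre{L}=(\dis{L^R})^R$, together with the observation recorded just above the statement that $L$ has $\emptyset$ as a right quotient if and only if $L^R$ has $\emptyset$ as a (left) quotient. I would also use the two elementary reversal identities $\preff{L}=(\suff{L^R})^R$ and ``$L$ is prefix-closed if and only if $L^R$ is suffix-closed'', both of which come directly from the fact that $w$ is a prefix of $v$ exactly when $w^R$ is a suffix of $v^R$.

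For part (i), assuming $L$ has $\emptyset$ as a right quotient, $L^R$ has $\emptyset$ as a quotient, so Lemma~\ref{lem:qempty} gives $\dis{L^R}=\suff{L^R}$; reversing and applying Equation~\eqref{eq:dpredis} with $\preff{L}=(\suff{L^R})^R$ yields $\dpre{L}=(\suff{L^R})^R=\preff{L}$. For part (ii), the extra hypothesis that $L$ is prefix-closed translates into $L^R$ being suffix-closed, and since $L^R$ also has $\emptyset$ as a quotient, Theorem~\ref{theo:ldirectdead} gives $\dis{L^R}=L^R$; reversing gives $\dpre{L}=(L^R)^R=L$. For part (iii), from $\dpre{L}=L$ and Equation~\eqref{eq:dpredis} I would deduce $(\dis{L^R})^R=L$, hence $\dis{L^R}=L^R$, and then Theorem~\ref{ldead} forces $L^R$ to have $\emptyset$ as a quotient, which is precisely the claim that $L$ has $\emptyset$ as a right quotient.

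I expect no real obstacle here, since the whole content is carried by the left-sided results and the translation through reversal is mechanical. The only point requiring a little care is to verify that reversal commutes with prefix/suffix closure and swaps left and right quotients in exactly the manner claimed, and in particular to keep track of which side the empty quotient lives on in each direction (a right quotient of $L$ versus a left quotient of $L^R$); this is exactly the equivalence stated immediately before the lemma, so no new argument is needed.
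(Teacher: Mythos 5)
Your proposal is correct and follows exactly the route the paper intends: the paper dispatches this lemma by noting (via Equation~\eqref{eq:dpredis}, the identity $(Lu^{-1})^R=(u^R)^{-1}L^R$, and the remark that $\dpre{L}=L$ iff $\dis{L^R}=L^R$) that all three statements ``follow immediately'' from the left-sided results Lemma~\ref{lem:qempty}, Theorem~\ref{theo:ldirectdead}, and Theorem~\ref{ldead}. You have simply made the mechanical reversal translation explicit, including the identities $\preff{L}=(\suff{L^R})^R$ and the prefix-/suffix-closure correspondence, which is exactly the argument the paper leaves to the reader.
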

\begin{example}
\label{ex:prefe2}
In Figure~\ref{fig:prefe2} one can see, from left to right, the minimal  \dfa accepting the language $L$, 
the language $\dpre{L}$, $\dpre{L}\not=\dpren{L}{2}$, and the language $\dpren{L}{2}=\dpren{L}{n}$, for $n\ge 3$.
\end{example}

\begin{figure}[h!]
\begin{center}
\begin{tikzpicture}[>=stealth, shorten >=1pt, auto, node distance=1.2cm,initial text={}]
\node[state, initial, accepting, inner sep=1pt, minimum size=7pt] (S0) {};
\node[state, inner sep=1pt, minimum size=7pt][above right of=S0] (S1) {};
\node[state, accepting, inner sep=1pt, minimum size=7pt][right of=S1] (S2) {};
\node[state, inner sep=1pt, minimum size=7pt][below right of=S0] (S3) {};
\node[state, inner sep=1pt, minimum size=7pt][right of=S3] (S4) {};
\path[->] (S0) edge node {$0$} (S1)
               edge [bend left] node {$1$} (S3)
          (S1) edge [loop above] node {$1$} ()
               edge [bend left] node {$0$} (S2)
          (S2) edge [bend left] node {$0$} (S1)
               edge [loop above] node {$1$} ()
          (S3) edge [bend left] node {$0$} (S0)
               edge node {$1$} (S4)
          (S4) edge [loop above] node {$0,1$} ();
\end{tikzpicture}\quad\begin{tikzpicture}[>=stealth, shorten >=1pt, auto, node distance=1cm,initial text={}]
\node[state, initial, accepting, inner sep=1pt, minimum size=7pt] (S0) {};
\node[state, accepting, inner sep=1pt, minimum size=7pt][above right of=S0] (S1) {};
%\node[state, accepting, inner sep=1pt, minimum size=7pt][right of=S1] (S2) {};
\node[state, accepting, inner sep=1pt, minimum size=7pt][below right of=S0] (S3) {};
\node[state, inner sep=1pt, minimum size=7pt][right of=S3] (S4) {};
\path[->] (S0) edge node {$0$} (S1)
               edge [bend left]node {$1$} (S3)
		  (S1) edge [loop above] node {$0,1$} ()
		  (S3) edge [bend left] node {$0$} (S0)
		       edge node {$1$} (S4)
		  (S4) edge [loop above] node {$0,1$} ();
\end{tikzpicture}\quad\begin{tikzpicture}[>=stealth, shorten >=1pt, auto, node distance=1.2cm,initial text={}]
\node[state, initial, accepting, inner sep=1pt, minimum size=7pt] (S0) {};
\node[state, accepting, inner sep=1pt, minimum size=7pt][above right of=S0] (S1) {};
\node[state, inner sep=1pt, minimum size=7pt][below right of=S1] (S2) {};
\path[->] (S0) edge [bend left] node {$1$} (S1)
               edge [bend right] node [swap] {$0$} (S2)
		  (S1) edge node [pos=.2]{$0$} (S0)
		       edge node {$1$} (S2)
		  (S2) edge [loop right] node {$0,1$} ();
\end{tikzpicture}
\end{center}
\caption{Automata for the languages $L$, $\dpre{L}$, and $\dpren{L}{n}$, $n\ge 2$.}

\label{fig:prefe2} 
\end{figure}

\begin{corollary}
\label{cor:dprefinite}
  If $L$ is a finite language, then $\dpre{L}=\preff{L}$.
\end{corollary}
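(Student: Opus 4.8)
The plan is to derive this from results already in hand, and there are two equally short routes; I would present the one through Lemma~\ref{lem:prelemma}(i), which asserts that if $L$ has $\emptyset$ as a right quotient then $\dpre{L}=\preff{L}$. Thus the whole task reduces to checking that a finite language always has an empty right quotient, and then quoting that lemma.

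First I would dispose of the degenerate cases: if $L=\emptyset$ then $\dpre{L}=\emptyset=\preff{L}$, and $L=\Set{\varepsilon}$ is equally immediate. Assuming $L$ is finite and nonempty, let $m=\max\Set{|w| \mid w\in L}$ and fix any word $z$ with $|z|>m$. Then for every $x\in\Sigma^\star$ we have $|xz|\geq|z|>m$, so $xz\notin L$; hence the right quotient $Lz^{-1}=\Set{x \mid xz\in L}$ is empty. Therefore $L$ has $\emptyset$ as a right quotient, and Lemma~\ref{lem:prelemma}(i) yields $\dpre{L}=\preff{L}$ at once. The mechanism underneath is that $Lz^{-1}=\emptyset$ forces $\preff{\comp{L}}=\Sigma^\star$ (every $w$ is a prefix of $wz\in\comp{L}$), so the intersection $\dpre{L}=\preff{L}\cap\preff{\comp{L}}$ collapses to $\preff{L}$, mirroring exactly the role played by Lemma~\ref{lem:qempty} in the left-sided Corollary~\ref{cor:finite}.

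As a cross-check I would also record the reversal route, which keeps the development symmetric with the left case: since $L$ finite implies $L^R$ finite, Corollary~\ref{cor:finite} gives $\dis{L^R}=\suff{L^R}$, and applying the identity $\dpre{L}=(\dis{L^R})^R$ from Equation~(\ref{eq:dpredis}) together with the observation $(\suff{L^R})^R=\preff{L}$ recovers the same conclusion.

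There is no genuine obstacle here; the only points requiring care are the separate treatment of $L=\emptyset$ (where $m$ is undefined) and, in the reversal route, the verification that reversing the suffixes of $L^R$ produces exactly the prefixes of $L$---namely that $w$ is a suffix of some $v^R\in L^R$ if and only if $w^R$ is a prefix of the corresponding $v\in L$. Both are routine bookkeeping rather than real difficulties.
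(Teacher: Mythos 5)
Your proof is correct and follows essentially the route the paper intends: the paper states this corollary without proof as an immediate consequence of Lemma~\ref{lem:prelemma}(i) (mirroring how Corollary~\ref{cor:finite} follows from Lemma~\ref{lem:qempty}), and your argument simply supplies the routine verification that a nonempty finite language has $\emptyset$ as a right quotient by taking $z$ longer than its longest word. Your alternative reversal route via Equation~(\ref{eq:dpredis}) and Corollary~\ref{cor:finite} is also sound, but it is a cross-check rather than a departure from the paper's approach.
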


The state complexity of the $\dpreo$ operation is given by the following theorem.
\begin{theorem}
\label{theo:scdpre}
If $L$ is recognized by a minimal \dfa with $n\geq 2$ states, then
 $sc(\dpre{L})=n$.
\end{theorem}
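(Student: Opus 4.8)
The plan is to read $\dpre{L}$ directly off the minimal \dfa of $L$ with a recomputed set of final states, in the spirit of Theorem~\ref{theo:algodis} but \emph{without} any subset construction, since the prefix operands are already deterministic. First I would note that membership of a word $w$ in $\dpre{L}=\preff{L}\cap\preff{\comp{L}}$ depends only on the state $p=\delta(q_0,w)$: there is an $x$ with $wx\in L$ iff $R_p\neq\emptyset$, and a $y$ with $wy\notin L$ iff $R_p\neq\kleene{\Sigma}$. Hence $w\in\dpre{L}$ iff $p$ can reach both a final and a non-final state, i.e.\ $p\in F'=\Set{p\in Q\mid R_p\neq\emptyset \text{ and } R_p\neq\kleene{\Sigma}}$. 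So the \dfa $(Q,\Sigma,\delta,q_0,F')$, with the same states and transitions as the minimal \dfa of $L$ and only the finality recomputed, accepts $\dpre{L}$, which yields $sc(\dpre{L})\leq n$. Equivalently: $\preff{L}$ and $\preff{\comp{L}}$ are each recognised by a \dfa built on the transition structure of the minimal \dfa of $L$ (make final every state that reaches $F$, respectively every state that reaches $Q\setminus F$); since the two operands share one deterministic transition structure, the product automaton for their intersection stays on the diagonal $\Set{(p,p)\mid p\in Q}$ and needs only $n$ states.

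For tightness I would exhibit, for each $n\geq 2$, a witness $L_n$ with $sc(L_n)=n$ and $sc(\dpre{L_n})=n$. The cleanest choice is a prefix-closed language having $\emptyset$ as a right quotient, because Lemma~\ref{lem:prelemma}(ii) then gives $\dpre{L_n}=L_n$ and hence $sc(\dpre{L_n})=sc(L_n)=n$. A concrete family is $L_n=\Set{a^i\mid 0\leq i\leq n-2}$ over $\Sigma=\Set{a,b}$: it is finite and prefix-closed, and its minimal \dfa is the $a$-path $0\to 1\to\cdots\to(n-2)$ of final states together with one dead state, so $sc(L_n)=n$; moreover $L_nb^{-1}=\emptyset$, so $\emptyset$ is a right quotient. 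Then Corollary~\ref{cor:dprefinite} (or Lemma~\ref{lem:prelemma}(ii)) delivers $\dpre{L_n}=\preff{L_n}=L_n$ at once.

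The hard part is really the upper bound, and its whole content is the observation that, unlike the left case of Theorem~\ref{theo:scdistub}, no determinisation is required: both prefix operands live over a common deterministic transition structure, so intersecting them cannot leave the diagonal and cannot cause the usual product blow-up. A secondary point demanding care is that the automaton $(Q,\Sigma,\delta,q_0,F')$ need \emph{not} be minimal — for instance, if $L$ has both a dead state ($R_p=\emptyset$) and a universal state ($R_p=\kleene{\Sigma}$), then both lie outside $F'$ and become equivalent dead states, so $sc(\dpre{L})$ can strictly drop below $n$. This is exactly why the equality in the statement must be read as the worst-case (operational) state complexity, and why the witness above is chosen to avoid producing two such collapsing states.
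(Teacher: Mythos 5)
Your proposal is correct and takes essentially the same route as the paper: the upper bound comes from the identical construction (keep the transition structure of the minimal \dfa and recompute finality so that $p$ is final iff $R_p\neq\emptyset$ and $R_p\neq\kleene{\Sigma}$, i.e., drop the dead states of $\mathcal{A}$ and of $\comp{\mathcal{A}}$ from the final set), and tightness uses the same prefix-closed witness family $L_n=\Set{a^i\mid i\leq n-2}$, for which $\dpre{L_n}=L_n$. Your closing observation that the equality must be read as worst-case operational state complexity matches the paper's stated convention, and your diagonal-product remark just makes explicit what the paper leaves implicit.
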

\begin{proof}
 If both $L$ and $\comp{L}$ do not have $\emptyset$ as a quotient, then $\dpre{L}=\Sigma^\star$ 
and only one state is needed for a \dfa accepting $\dpre{L}$. 
Otherwise, let $\mathcal{A}=(Q,\Sigma,\delta,i,F)$ be the minimal \dfa recognizing $L$ with $|Q|=n$.
We have that at least one of $\mathcal{A}$ or $\comp{\mathcal{A}}$ has a dead state. 
To obtain a \dfa for $\preff{L}$ one needs only to  consider all states of $\mathcal{A}$ final, except the dead state, if it exists. 
To get a \dfa for $\dpre{L}$, we also need to exclude from the set of final states the possible dead state of the \dfa $\comp{\mathcal{A}}$, recognizing $\comp{L}$, which coincides with $\mathcal{A}$, except that the set of final states is $Q\setminus F$. Tightness is achieved  for the family of languages $L_n=\{a^i\mid i\leq n-2\}$, which are prefix closed,~\cite{brzozowski14:_quotien_compl_of_closed_languag}.\end{proof}

In Section~\ref{sec:minDist}, we considered the language of the shortest words that distinguish pairs of left quotients of $L$, $\distmin{L}$. 
In this case, we can define $\dpremin{L}=\Set{\dpreminw{L}{x,y} \mid  x\not\mnleq{L} y}$, where
$\dpreminw{L}{x,y} = \min\Set{w\mid w\in \dprew{L}{x,y}}
$ if $x\not\mnleq{L} y$, and minimum is considered with respect to the quasi-lexicographical order. 
Using Equation (\ref{eq:dpredis}), one can have a finite set of words that distinguish between right quotients, namely $(\distmin{L^R})^R$.
However, using the notion of atoms we can compute directly $\dpremino$.
As we seen before, $\dpre{L}$ distinguishes between pairs of atoms of $L$. To estimate the number of elements of $\dpremin{L}$, we recall the relation between atoms and right quotients.

 Let ${\cal A}=(Q=\{0,\ldots,n-1\},\Sigma,\delta,i,F)$ be the minimal \dfa recognizing $L$ and let $R_i$, $0\leq i\leq n-1$ be the left quotients of $L$. 
Each atom can be characterized by a set $S\subseteq Q$ such that 
$A_{S}=\bigcap_{i\in S}R_i\bigcap \bigcap_{i\notin S}\comp{R_i}$.
Every  $x\in \Sigma^\star$ belongs exactly to one atom $A_{S_x}$, and if $x\mnleq{L} y$, i.e, $Lx^{-1}=Ly^{-1}$, 
then $x$ and $y$ belong to the same atom. 
Thus, the minimal word that distinguishes two distinct right quotients with correspondent  sets $S$ and $S'$ is 
$$\min\{w \mid w\in L_{S} \notequiv w\in L_{S'}\},$$ 
where $L_T=\bigcup_{i\in T}L_i$ for $T \subseteq Q$ and $\min\{w\mid w \in w\in L_T\}=\min\{x_{\cal A}(i)\mid i\in T\}$. 
Therefore $\dpreminw{L}{x,y}=\min\{x_{\cal A}(i)\mid i \in S_x\Delta S_y\}$. 
Using Theorem~\ref{theo:mindist_upp_bound_set}, it follows that $|\dpremin{L}|\leq n -1$. 
We also have that $\dpremino$ is prefix closed, $\dpreminp{L}{n+1}\subseteq \dpreminp{L}{n},\mbox{ for  all }n\geq 1$,  
and we can reach the fixed point in maximum $n-2$ iterations, using Theorem~\ref{theo:fpdismin} and 
Theorem~\ref{theo:fpdismint}, with  $W_n=\preff{10^n}$.

\subsection{Two-sided Distinguishability}
\label{sec:twosideddis}
Given a language $L\subseteq \Sigma^\star$, we can define the (Myhill-Nerode) equivalence relation on $\Sigma^\star\times \Sigma^\star$, 
$(x,y)\not\mntseq{L}(x',y')$ if and only if 
$(\forall u)u\in \Sigma^\star,xuy\in L \Leftrightarrow x'uy'\in L$. If $L$ is regular, $\mntseq{L}$ is of finite index and for $u,v\in \Sigma^\star$, the two-sided quotient $u^{-1}Lv^{-1}=\{x\in \Sigma^\star \mid uxv \in L\}$ corresponds to an equivalence class of $\mntseq{L}$.
We note that $u^{-1}Lv^{-1}=(u^{-1}L)v^{-1}=u^{-1}(Lv^{-1})$.
Two-sided quotients were recently used to define biautomata, which deterministic versions recognize exactly regular languages,~\cite{holzer13:_nondet_biaut_and_their_descr_compl,klima12:_biaut}, and \emph{couple} \nfas,
  which can recognize linear languages,~\cite{champarnaud13:_two_sided_deriv_for_regul}.

We define
the \emph{two-sided distinguishability language} of $L$ by 
\begin{equation}
\label{eq:disleftB}
\dinf{L}=\{w\mid \exists x,y,x',y'\in \Sigma^\star \ (xwy \in L\ \wedge \ x'wy'
\notin L)\}.
\end{equation}
\noindent It is immediate that 
\begin{equation*}
 \dinf{L} = \bigcup_{x,y\in\kleene{\Sigma}}x^{-1}Ly^{-1}\setminus
 \bigcap_{x,y\in\kleene{\Sigma}}x^{-1}Ly^{-1}, 
\end{equation*}
and
\begin{equation*}
 \dinf{L}=\inff{L}\cap\inff{\comp{L}}.
 \end{equation*}

Please note that  for all regular languages $L$, $\dis{L}\subseteq \dinf{L}$ and  $\dpre{L}\subseteq \dinf{L}$. 
If $L$ has $\emptyset$ as a (left) quotient, then $\emptyset$ is also a two-sided quotient. 
The following lemmata show that $\dinfo$ is always an $\inffo$ operation.

\begin{lemma}
\label{lem:qinf}
Let $L$ be a regular language. If $L$ does not have $\emptyset$ as a quotient, $\dinf{L}=\inff{\comp{L}}$.
\end{lemma}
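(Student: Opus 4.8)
The plan is to mirror the proof structure of Lemma~\ref{lem:qepref} from the right-distinguishability subsection, since the statement $\dinf{L}=\inff{\comp{L}}$ is the exact two-sided analogue of $\dpre{L}=\preff{\comp{L}}$. The key observation is the characterization already recorded in the excerpt, namely $\dinf{L}=\inff{L}\cap\inff{\comp{L}}$, so the whole burden of the proof reduces to showing that the hypothesis ``$L$ does not have $\emptyset$ as a quotient'' forces $\inff{L}=\Sigma^\star$; once that is established, intersecting with $\inff{\comp{L}}$ leaves exactly $\inff{\comp{L}}$.

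First I would argue that $\inff{L}=\Sigma^\star$. Since $L$ has no empty (left) quotient, for every word $w\in\Sigma^\star$ the quotient $w^{-1}L$ is nonempty, so there exists $v\in\Sigma^\star$ with $wv\in L$. Taking $x=\varepsilon$ and $y=v$, the word $w$ occurs as a factor of $wv\in L$, hence $w\in\inff{L}$. As $w$ was arbitrary this gives $\inff{L}=\Sigma^\star$. (This is exactly the step where the no-empty-quotient hypothesis is consumed, and it is the same mechanism as in Lemma~\ref{lem:qepref}, where $\preff{L}=\Sigma^\star$ was deduced.)

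Then I would simply substitute into the factor characterization:
\begin{equation*}
\dinf{L}=\inff{L}\cap\inff{\comp{L}}=\Sigma^\star\cap\inff{\comp{L}}=\inff{\comp{L}}.
\end{equation*}
This completes the argument.

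I do not expect a genuine obstacle here, since the heavy lifting has already been done by the identity $\dinf{L}=\inff{L}\cap\inff{\comp{L}}$ stated just before the lemma. The only point requiring a little care is the direction of the quotient used in the hypothesis: the statement speaks of $L$ having no $\emptyset$ as a (left) quotient, and I must make sure this is the hypothesis that yields $\inff{L}=\Sigma^\star$ rather than $\inff{\comp{L}}=\Sigma^\star$. Indeed, $\emptyset$ being a left quotient of $L$ is what would make $L$ miss some factors, so its \emph{absence} is precisely what guarantees $\inff{L}=\Sigma^\star$; the complement plays the symmetric role and is exactly the term that survives. This parallels the structure of Lemmata~\ref{lem:qempty} and~\ref{lem:qepref}, and I would keep the proof to these two short steps.
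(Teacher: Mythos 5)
Your proposal is correct and follows the paper's own proof exactly: both reduce the claim to the identity $\dinf{L}=\inff{L}\cap\inff{\comp{L}}$ and then show that the absence of $\emptyset$ among the left quotients of $L$ forces $\inff{L}=\Sigma^\star$, so the intersection collapses to $\inff{\comp{L}}$. The only difference is that you spell out the witness argument (every $w$ admits $v$ with $wv\in L$, so $w$ is a factor of a word of $L$), which the paper leaves implicit.
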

\begin{proof} 
Since $\emptyset$ is not a quotient of $L$, it follows that  $\inff{L}=\Sigma^\star$, therefore $\dinf{L}=\inff{\comp{L}}$.
\end{proof}

\begin{lemma}
\label{lem:qeinf}
Let $L$ be a regular language. If $L$ has $\emptyset$ as a  quotient, then  $\dinf{L}=\inff{L}$.
\end{lemma}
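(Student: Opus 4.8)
The plan is to mirror the proof strategy of Lemma~\ref{lem:qempty}, since Lemma~\ref{lem:qeinf} is the two-sided analogue of that result. Recall that $\dinf{L}=\inff{L}\cap\inff{\comp{L}}$, so to conclude $\dinf{L}=\inff{L}$ it suffices to show that the factor $\inff{\comp{L}}$ contributes nothing new, i.e.\ that $\inff{L}\subseteq\inff{\comp{L}}$, which follows at once if I can establish $\inff{\comp{L}}=\Sigma^\star$.

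First I would unpack the hypothesis: $L$ has $\emptyset$ as a (two-sided) quotient means there exist words $u,v\in\Sigma^\star$ with $u^{-1}Lv^{-1}=\emptyset$, that is, $uwv\notin L$ for every $w\in\Sigma^\star$. Equivalently, $uwv\in\comp{L}$ for all $w$. The key observation is then that every word $w\in\Sigma^\star$ occurs as a factor of the word $uwv\in\comp{L}$, so $w\in\inff{\comp{L}}$. Hence $\inff{\comp{L}}=\Sigma^\star$, and therefore $\dinf{L}=\inff{L}\cap\Sigma^\star=\inff{L}$, as desired.

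I would also note the minor subtlety about what ``$\emptyset$ as a quotient'' means in the two-sided setting. The statement just preceding the lemma already records that if $L$ has $\emptyset$ as a left quotient then $\emptyset$ is also a two-sided quotient, so the hypothesis is consistent with the left-quotient phrasing used in Lemma~\ref{lem:qempty}; in either reading one obtains words $u,v$ witnessing an empty two-sided quotient $u^{-1}Lv^{-1}$ (taking $v=\varepsilon$ in the left-quotient case). The argument above uses only the existence of such $u,v$, so it is robust to this ambiguity.

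The proof is essentially a one-line containment, so there is no serious obstacle; the only thing to be careful about is the direction of the reasoning — the nontrivial inclusion is $\inff{L}\subseteq\inff{\comp{L}}$ (via $\inff{\comp{L}}=\Sigma^\star$), while the reverse inclusion $\dinf{L}\subseteq\inff{L}$ is immediate from the defining intersection. A clean writeup would simply state that since $\emptyset$ is a two-sided quotient, $\inff{\comp{L}}=\Sigma^\star$, and therefore $\dinf{L}=\inff{L}\cap\inff{\comp{L}}=\inff{L}$.
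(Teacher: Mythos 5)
Your proof is correct and follows essentially the same route as the paper's: both establish $\inff{\comp{L}}=\Sigma^\star$ from the empty-quotient hypothesis and then conclude $\dinf{L}=\inff{L}\cap\Sigma^\star=\inff{L}$. The only cosmetic difference is that the paper obtains $\inff{\comp{L}}=\Sigma^\star$ via $\suff{\comp{L}}=\Sigma^\star$ (every $w$ is a suffix of $zw\in\comp{L}$ for a witness $z$ with $z^{-1}L=\emptyset$, as in Lemma~\ref{lem:qempty}), whereas you exhibit $w$ directly as a factor of $uwv\in\comp{L}$, which also cleanly covers the genuinely two-sided reading of the hypothesis.
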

\begin{proof}
We know that $\suff{\comp{L}}=\Sigma^\star$, thus $\inff{\comp{L}}=\Sigma^\star$.
\end{proof}

If $L$ is infix closed, then $L$ is also suffix and prefix closed. 
Excluding $\Sigma^\star$, the fixed points of $\dinfo$ are exactly the infix-closed languages. 
To see that, by the previous lemma we have:

\begin{lemma}
\label{lem:inffixpoint}
If $L$ is a infix-closed regular language and  $L$ has a $\emptyset$ as a quotient, then $\dinf{L}=L$.
\end{lemma}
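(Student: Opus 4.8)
The plan is to combine the two immediately preceding lemmata with the observation that $\dinf{L}=\inff{L}\cap\inff{\comp{L}}$. Since $L$ is assumed infix-closed, we have $\inff{L}=L$, so the only task is to show that $\inff{\comp{L}}=\Sigma^\star$, which will make the intersection collapse to $L$.

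First I would invoke Lemma~\ref{lem:qeinf}: because $L$ has $\emptyset$ as a (left) quotient, that lemma already gives $\dinf{L}=\inff{L}$. This is the workhorse step, and it is essentially immediate once the hypothesis about the empty quotient is in place. The underlying reason is that if $z^{-1}L=\emptyset$ for some word $z$, then $z$ cannot be extended on the right to reach $L$, so $z$ lies in $\comp{L}$ together with all of $z\Sigma^\star$; consequently $\suff{\comp{L}}=\Sigma^\star$ and therefore $\inff{\comp{L}}=\Sigma^\star$ as well, as noted in the proof of Lemma~\ref{lem:qeinf}.

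Second I would use the hypothesis that $L$ is infix-closed to rewrite $\inff{L}=L$. This follows directly from the definition of infix-closure: a language $L$ is infix-closed precisely when every factor of every word of $L$ is again in $L$, i.e.\ $\inff{L}\subseteq L$; combined with the trivial inclusion $L\subseteq \inff{L}$ we get the equality $\inff{L}=L$. Substituting this into the equation $\dinf{L}=\inff{L}$ obtained in the first step yields $\dinf{L}=L$, which is exactly the claim.

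I do not anticipate a genuine obstacle here, since the statement is essentially a corollary chaining two already-established lemmata with the definitional characterization of infix-closure; the only point that requires a moment of care is making sure the hypotheses of Lemma~\ref{lem:qeinf} are met (namely the existence of an empty quotient), which is granted verbatim in the statement. The proof will therefore be short: apply Lemma~\ref{lem:qeinf} to reduce $\dinf{L}$ to $\inff{L}$, then apply infix-closure to reduce $\inff{L}$ to $L$.
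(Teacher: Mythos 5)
Your proof is correct and follows essentially the same route as the paper, which derives the lemma directly from Lemma~\ref{lem:qeinf} (giving $\dinf{L}=\inff{L}$ when $\emptyset$ is a quotient) and then uses infix-closure to conclude $\inff{L}=L$. Your added justification of why $\suff{\comp{L}}=\Sigma^\star$ forces $\inff{\comp{L}}=\Sigma^\star$ is a faithful unpacking of the paper's one-line argument, not a different approach.
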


\begin{lemma}
\label{lem:dinfpe}
Let $L$ be a regular language. If  $\dinf{L}=L$, then 
$L$ has $\emptyset$ as a quotient.
 \end{lemma}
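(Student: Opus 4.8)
The plan is to prove the contrapositive: assuming $L$ has no $\emptyset$ quotient, I will show $\dinf{L}\neq L$. The crucial tool is the immediately preceding Lemma~\ref{lem:qinf}, which already identifies $\dinf{L}$ under exactly this hypothesis. Indeed, if $L$ does not have $\emptyset$ as a quotient, then Lemma~\ref{lem:qinf} gives $\dinf{L}=\inff{\comp{L}}$, so it suffices to argue that $\inff{\comp{L}}$ cannot coincide with $L$.

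First I would record the elementary but decisive fact that the infix operator is extensive: every word is a factor of itself (take the empty prefix and suffix in $w=uxv$), hence $\comp{L}\subseteq \inff{\comp{L}}$. Combining this with the equality $\dinf{L}=\inff{\comp{L}}$, any fixed-point identity $\dinf{L}=L$ would force $\comp{L}\subseteq L$.

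Next I would close the argument by observing that $\comp{L}\subseteq L$ is only possible when $L=\kleene{\Sigma}$: every word lies in $L$ or in $\comp{L}$, and in the latter case the inclusion places it in $L$ as well, so $L=\kleene{\Sigma}$. But $L=\kleene{\Sigma}$ has already been shown to satisfy $\dinf{L}=\emptyset$, so $\dinf{L}=\emptyset\neq\kleene{\Sigma}=L$, contradicting $\dinf{L}=L$. This contradiction yields $\dinf{L}\neq L$, which is exactly the contrapositive of the claim.

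The argument is short precisely because Lemma~\ref{lem:qinf} does the heavy lifting; the only thing to watch is the logical direction. I expect the main (mild) obstacle to be ensuring that the hypothesis ``$L$ has no $\emptyset$ quotient'' is used only through Lemma~\ref{lem:qinf}, and that the degenerate case $L=\kleene{\Sigma}$ --- which technically also has no $\emptyset$ quotient --- is handled, since it is exactly the case that produces the final contradiction. One could alternatively mirror the infinite-quotient construction from the proof of Theorem~\ref{ldead}, building words $xwy$ that toggle membership and separate infinitely many two-sided quotients, but the route through Lemma~\ref{lem:qinf} is far more economical and avoids that bookkeeping.
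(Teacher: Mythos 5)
Your proof is correct and takes essentially the same route as the paper's: both argue the contrapositive through Lemma~\ref{lem:qinf}, reducing the fixed-point hypothesis $\dinf{L}=L$ to the equality $L=\inff{\comp{L}}$. In fact you spell out the step the paper leaves implicit --- that extensivity of $\inffo$ forces $\comp{L}\subseteq L$, hence $L=\kleene{\Sigma}$, which is ruled out because $\dinf{\kleene{\Sigma}}=\emptyset$ --- so your version is, if anything, more complete than the paper's one-line justification.
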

\begin{proof}
Assume $L$ does not have $\emptyset$ as a quotient.
By Lemma~\ref{lem:qinf}, it follows  $\dinf{L}=\inff{\comp{L}}$, thus $L$ would not be a fixed point of $\dinfo$.
\end{proof}
%\tmpcomment{I would combine these two lemmata (8.8, 8.9) in one theorem, and add some extra text.}

From these two lemmata, one has
\begin{theorem}
\label{theo:dinffixpoint}
If $L$ is a regular language different from $\Sigma^\star$, $\dinf{L}=L$ if and only if $L$ is infix closed.
\end{theorem}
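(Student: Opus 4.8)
The plan is to split the biconditional and lean entirely on the infix-closure lemmata already established; both implications reduce to the interaction between being a $\dinfo$ fixed point and having $\emptyset$ as a quotient.

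First I would treat the forward direction. Suppose $\dinf{L}=L$. By Lemma~\ref{lem:dinfpe}, $L$ then has $\emptyset$ as a quotient, so Lemma~\ref{lem:qeinf} applies and gives $\dinf{L}=\inff{L}$. Combining these, $L=\dinf{L}=\inff{L}$, and a language that equals its own infix closure is by definition infix closed. Note that this direction does not use the hypothesis $L\neq\Sigma^\star$; that restriction is needed only to rule out the degenerate case in the converse, since $\Sigma^\star$ is infix closed yet $\dinf{\Sigma^\star}=\emptyset$, so the biconditional would fail there.

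For the converse, I would assume $L$ is infix closed with $L\neq\Sigma^\star$, and reduce to Lemma~\ref{lem:inffixpoint} by producing $\emptyset$ as a quotient. Since $L\neq\Sigma^\star$, I fix a word $w\notin L$. For any $x,y\in\Sigma^\star$, the word $xwy$ has $w$ as a factor, so if $xwy$ were in $L$, infix closure would force $w\in L$, a contradiction. Hence $xwy\notin L$ for all $x,y$; in particular $wy\notin L$ for every $y$, i.e.\ $w^{-1}L=\emptyset$. Thus $\emptyset$ is a quotient of $L$, and Lemma~\ref{lem:inffixpoint} immediately yields $\dinf{L}=L$.

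The only genuine content, and hence the step I expect to be the main obstacle, is this last observation that a proper infix-closed language must have an empty quotient; everything else is bookkeeping with the cited lemmata. The argument itself is short, but it is precisely where the restriction $L\neq\Sigma^\star$ becomes essential, since without a word outside $L$ there is no way to manufacture the empty quotient.
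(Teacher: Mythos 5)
Your proof is correct and takes essentially the same route as the paper, which obtains the theorem directly by combining Lemma~\ref{lem:dinfpe} (plus Lemma~\ref{lem:qeinf}) for the forward direction with Lemma~\ref{lem:inffixpoint} for the converse. You additionally spell out the one step the paper leaves implicit, namely that an infix-closed $L\neq\Sigma^\star$ has $\emptyset$ as a quotient (any $w\notin L$ gives $w^{-1}L=\emptyset$), which is precisely what licenses the appeal to Lemma~\ref{lem:inffixpoint}.
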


\begin{corollary}
\label{cor:dinffinite}
  If $L$ is a finite language, then $\dinf{L}=\inff{L}$.
\end{corollary}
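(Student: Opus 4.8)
The plan is to reduce the statement to Lemma~\ref{lem:qeinf}, which asserts that any regular language with $\emptyset$ among its quotients satisfies $\dinf{L}=\inff{L}$. Since $\dinf{L}=\inff{L}\cap\inff{\comp{L}}$, the inclusion $\dinf{L}\subseteq\inff{L}$ holds for every language, so the only content is the reverse inclusion $\inff{L}\subseteq\inff{\comp{L}}$, and this is exactly what the presence of an empty quotient delivers.

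First I would dispose of the degenerate case $L=\emptyset$, where $\inff{L}=\emptyset=\dinf{L}$ by the convention recorded after the definition of $\dinfo$ (if $L$ is $\emptyset$ or $\Sigma^\star$, then $\dinf{L}=\emptyset$). For a nonempty finite $L$, I would exhibit an empty left quotient directly: letting $m=\max\Set{|w|\mid w\in L}$, any word $z$ with $|z|>m$ satisfies $zx\notin L$ for all $x\in \Sigma^\star$, hence $z^{-1}L=\emptyset$. Thus $\emptyset$ is a quotient of $L$.

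With the empty quotient in hand, Lemma~\ref{lem:qeinf} applies directly and yields $\dinf{L}=\inff{L}$, completing the argument. This mirrors the derivations of Corollary~\ref{cor:finite} and Corollary~\ref{cor:dprefinite}, which obtain the finite-language cases of $\diso$ and $\dpreo$ from Lemma~\ref{lem:qempty} and Lemma~\ref{lem:prelemma} in the same fashion. The only step requiring any attention is confirming that finiteness forces an empty quotient, and even that is immediate from the length bound; there is no genuine obstacle here.
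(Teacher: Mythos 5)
Your proposal is correct and takes essentially the same route as the paper, which states the corollary without proof as an immediate consequence of Lemma~\ref{lem:qeinf} together with the fact that a finite language has $\emptyset$ among its left quotients (exactly parallel to how Corollary~\ref{cor:finite} and Corollary~\ref{cor:dprefinite} are obtained). Your explicit length-bound argument for the empty quotient and the separate handling of $L=\emptyset$ merely spell out the step the paper leaves implicit.
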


In case $L=\Sigma^\star$, $\dinf{L}=\Sigma^\star\cap \emptyset=\emptyset$. Because,
$\dinf{\emptyset}= \emptyset\cap\Sigma^\star=\emptyset$, we have $\dinfn{L}{2}=\dinf{L}$.
This result can be generalized for all regular languages $L$, such that $\dinf{L}\neq\Sigma^\star$.

\begin{corollary}
\label{cor:f2f}
Given  a regular language $L$, if $\dinf{L}\neq\Sigma^\star$, then $\dinfn{L}{2}=\dinf{L}$. 
\end{corollary}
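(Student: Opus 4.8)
The plan is to recognize $\dinfn{L}{2}$ as one application of $\dinfo$ to an already infix-closed language, and then to read off the conclusion from the fixed-point characterization proved just above. First I would set $M := \dinf{L}$ and record the two facts that hold for \emph{every} regular language: $M$ is regular, and $M$ is infix closed. The latter is the infix analogue of Theorem~\ref{tsuffcl}; it is guaranteed by the equality $\dinf{L}=\inff{L}\cap\inff{\comp{L}}$, since each of $\inff{L}$ and $\inff{\comp{L}}$ is infix closed (an infix of an infix is an infix) and the intersection of two infix-closed sets is again infix closed.

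Next I would bring in the hypothesis. We are given $\dinf{L}\neq\Sigma^\star$, which says exactly $M\neq\Sigma^\star$, and this is precisely the side condition required by Theorem~\ref{theo:dinffixpoint}. Applying that theorem to $M$ — a regular language, different from $\Sigma^\star$, and infix closed — yields $\dinf{M}=M$. Unwinding the abbreviation gives $\dinf{M}=\dinf{\dinf{L}}=\dinfn{L}{2}$ on the left and $M=\dinf{L}$ on the right, so $\dinfn{L}{2}=\dinf{L}$, as claimed.

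There is essentially no obstacle here; the only points that need care are that the hypothesis $\dinf{L}\neq\Sigma^\star$ is used \emph{solely} to license Theorem~\ref{theo:dinffixpoint}, and that the infix-closedness of $\dinf{L}$ is genuinely available before the fact is invoked. If one prefers to avoid the characterization theorem, a self-contained route argues by cases on whether $L$ has $\emptyset$ as a quotient: by Lemma~\ref{lem:qinf} and Lemma~\ref{lem:qeinf}, in either case $\dinf{L}=\inff{K}$ for a regular $K$ (namely $K=\comp{L}$ or $K=L$), hence $M$ is infix closed; since $M\neq\Sigma^\star$ it has $\emptyset$ as a quotient, and Lemma~\ref{lem:inffixpoint} then delivers $\dinf{M}=M$. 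Either way the engine of the proof is the same: $\dinfo$ lands inside the infix-closed languages after a single step.
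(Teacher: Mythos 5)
Your proof is correct, and it takes a somewhat different route from the paper's. The paper argues by direct computation: from $\dinf{L}\neq\Sigma^\star$ it deduces that $L$ or $\comp{L}$ has $\emptyset$ as a quotient, then uses Lemma~\ref{lem:qeinf} or Lemma~\ref{lem:qinf} to write $\dinf{L}=\inff{L}$ or $\dinf{L}=\inff{\comp{L}}$, and concludes $\dinfn{L}{2}=\inff{\inff{L}}=\dinf{L}$ (respectively $\dinfn{L}{2}=\inff{\dinf{L}}=\dinf{L}$) by idempotence of $\inffo$. You instead observe that $M=\dinf{L}$ is always regular and infix closed (a fact the paper records at the start of Section~\ref{sec:otherdis}), note that the hypothesis says exactly $M\neq\Sigma^\star$, and invoke the fixed-point characterization of Theorem~\ref{theo:dinffixpoint} to get $\dinf{M}=M$ in one step. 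What your version buys is that it surfaces a point the paper's terse computation leaves implicit: the step $\dinfn{L}{2}=\inff{\dinf{L}}$ silently applies Lemma~\ref{lem:qeinf} to $\dinf{L}$ itself, which requires $\dinf{L}$ to have $\emptyset$ as a quotient --- and that holds precisely because a proper infix-closed language always has $\emptyset$ as a quotient (if $w\notin M$ and $M$ is infix closed, then $wx\in M$ would force $w\in M$, so $w^{-1}M=\emptyset$); this is the same fact that powers Theorem~\ref{theo:dinffixpoint}. What the paper's version buys is lemma-level self-containment, exhibiting $\dinf{L}$ explicitly as $\inff{K}$ with $K\in\Set{L,\comp{L}}$ without passing through the characterization theorem. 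Your fallback argument (cases on whether $L$ has $\emptyset$ as a quotient, then Lemma~\ref{lem:inffixpoint}) is essentially the paper's proof with the tacit step spelled out, so the two approaches coincide there.
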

\begin{proof}
If $\dinf{L}\neq\Sigma^\star$, then either $L$ or $\comp{L}$ has $\emptyset$ as a quotient.
Hence,   $\dinf{L}=\inff{L}$ and $\dinfn{L}{2}=\inff{\inff{L}}=\dinf{L}$ or
 $\dinf{L}=\inff{\comp{L}}$ and  $\dinfn{L}{2}=\inff{\dinf{L}}=\dinf{L}$.
%Otherwise, if $L$ does not have $\emptyset$ as a quotient, then $\dinf{L}=\inff{\comp{L}}$. 
%If $\dinfn{L}{2}=\inff{\dinf{L}}$, then $\dinfn{L}{2}=\dinf{L}$, otherwise we conclude that $\dinf{L}=\Sigma^\star$, which is %false. 
\end{proof}

The state complexity of $\dinfo$ coincides with the state complexity of the $\inffo$ operation.

\begin{theorem}
\label{theo:scdinf}
If $L$ is recognized by a minimal \dfa with $n\geq 2$ states, then
$sc(\dinf{L})=2^{n-1}$.
\end{theorem}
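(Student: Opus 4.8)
The plan is to reduce the state complexity of $\dinfo$ to that of the infix (factor) closure operation, using the structural identities already established. By Corollary~\ref{cor:f2f} and the preceding lemmata, the interesting case is when $\dinf{L}\neq\Sigma^\star$, where $\dinf{L}$ equals either $\inff{L}$ (when $L$ has $\emptyset$ as a quotient) or $\inff{\comp{L}}$ (when it does not). Since a regular language and its complement have the same state complexity, in both cases $sc(\dinf{L})$ is bounded by the worst-case state complexity of the $\inffo$ operation applied to a language with $n$ states. Thus the upper bound $sc(\dinf{L})\leq 2^{n-1}$ follows from the known state complexity of factor closure for regular languages recognized by a minimal \dfa with $n$ states, which is $2^{n-1}$~\cite{brzozowski14:_quotien_compl_of_closed_languag}.

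First I would treat the degenerate cases: if $\dinf{L}=\Sigma^\star$ the claim needs separate handling, but the bound $2^{n-1}\geq 1$ is trivially not violated, and in fact one argues that when both $L$ and $\comp{L}$ have all quotients non-empty the operation collapses, so the substantive content is the non-trivial case. For the upper bound, I would invoke the construction analogous to Theorem~\ref{theo:algodis}: an \nfa for $\inff{L}$ is obtained from the minimal \dfa of $L$ by making every state both initial and final (for the suffix part one makes all states initial, for the prefix part all states final; for infix one does both). The subset construction then yields at most $2^{n-1}$ useful states once the empty subset and the redundant singleton collapses are accounted for, matching the established factor-closure bound.

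For tightness, I would exhibit a witness family and compute $sc(\dinf{L_n})$ directly, either by reusing a known witness for factor closure that achieves $2^{n-1}$, or by choosing $L_n$ infix-closed (a fixed point by Theorem~\ref{theo:dinffixpoint}) so that $\dinf{L_n}=L_n$ and the state complexity of $L_n$ itself realizes the bound. Concretely, one can take the witness family attaining $2^{n-1}$ for factor closure from~\cite{brzozowski14:_quotien_compl_of_closed_languag}; since $\dinf{L_n}$ equals one of $\inff{L_n}$ or $\inff{\comp{L_n}}$, the witness for $\inffo$ transfers directly.

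The main obstacle will be ensuring that the reachable-state count in the subset construction is exactly $2^{n-1}$ and not over-counted: one must verify that the singleton subsets and the empty subset do not contribute distinct useful states (they are uniformly final or non-final and hence either collapse into a single dead state or behave as $\Sigma^\star$), mirroring the argument in Theorem~\ref{theo:algodis}. Since this bookkeeping is precisely what the cited factor-closure result already settles, I expect the proof to be short, delegating both the upper bound and the witness to the known complexity of $\inffo$.
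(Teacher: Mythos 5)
Your proposal follows essentially the same route as the paper's proof: a case split on whether $L$ or $\comp{L}$ has $\emptyset$ as a quotient (with $\dinf{L}=\Sigma^\star$ in the degenerate case), reduction of $\dinf{L}$ to $\inff{L}$ or $\inff{\comp{L}}$ via Lemmas~\ref{lem:qinf} and~\ref{lem:qeinf}, the upper bound $2^{n-1}$ delegated to the known complexity of closures in~\cite{brzozowski14:_quotien_compl_of_closed_languag}, and tightness via a transferred witness (the paper uses the family of Figure~\ref{fig:upsuffe}).

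Two caveats on details. First, your fallback tightness suggestion---choosing $L_n$ infix-closed so that $\dinf{L_n}=L_n$ by Theorem~\ref{theo:dinffixpoint}---cannot work: it forces $sc(\dinf{L_n})=sc(L_n)=n$, far below $2^{n-1}$ for $n\geq 3$, so only your primary suggestion (a witness whose factor closure itself attains $2^{n-1}$) is viable, and that is what the paper does. Second, your accounting for the $2^{n-1}$ bound is slightly off: in the \nfa for $\inff{L}$ \emph{all} states are initial and final, so singleton subsets are not redundant in the way they are in Theorem~\ref{theo:algodis}; the exponent drops from $n$ to $n-1$ because the dead state is deleted \emph{before} the subset construction, leaving subsets of an $(n-1)$-element state set. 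Since you delegate the count to the cited result anyway, neither issue breaks the argument, but the first alternative should be struck.
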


\begin{proof}
 If both $L$ and $\comp{L}$ do not have $\emptyset$ as a quotient, then $\dinf{L}=\Sigma^\star$, 
therefore only one state is needed for a \dfa accepting $\dinf{L}$. 
Otherwise, let $\mathcal{A}=(Q,\Sigma,\delta,i,F)$ be the minimal \dfa recognizing $L$ with $|Q|=n$, 
hence at least one of $\mathcal{A}$ or $\comp{\mathcal{A}}$ has a dead state.
An \nfa recognizing $\inff{L}$ can be obtained by marking as initial and final all states of $Q$ and deleting the possible dead states.
The correspondent \dfa has at most $2^{n-1}$ states,~\cite{brzozowski14:_quotien_compl_of_closed_languag}. 
An analogous construction can be used for $\inff{\comp{L}}$. 
Considering Lemma~\ref{lem:qinf} and Lemma~\ref{lem:qeinf}, a \dfa for $\dinf{L}$ is one of the above. 
Tightness is achieved for the family of languages recognized by \dfas represented in Figure~\ref{fig:upsuffe}.
\end{proof}

In this case, we can also define
$\dinfmin{L}=\Set{\dinfminw{x,y}{L} \mid  x\not\mntseq{L} y}$,
where 
$\dinfminw{x,y}{L} = \min\Set{w\mid w\in \dinfw{L}{x,y}}$. Although it is easy to see that $\dinfmin{L}$ enjoy similar properties of $\distmin{L}$ and $\dpremin{L}$, we leave open how to compute this set.

% Would leave minimal words out, but Some text would be nice

\section{Conclusion}
\label{sconc}
%\tmpcomment{This needs to be changed/revised/updated -- added some of the new results}

In this paper we have introduced two new operations on regular languages
that help us distinguish non-equivalent words under Myhill-Nerode equivalence.
The first one $\diso$ finds all these words and the second one $\distmino$
produces only the minimal ones, where minimum is considered with respect 
to the quasi-lexicographical order.
Both have fixed points under iteration. 
The number of iterations  until a fixed point is reached is bounded by $2$ for
the case of $\diso$, and it is bounded by the state complexity 
of the starting language for $\distmino$.
A full characterization of the fixed points of $\diso$ is provided. 
Brzozowski's universal witness $U_n$ reaches the upper-bound of $2^n-n$,
for the state complexity of $\diso$.
In the case of $\distmino$ operation, the maximum number of words in 
the language is $n-1$, where $n$ is the state complexity of the original language.
We used $\distmino$ to recover the original language $L$ as an $l$-cover language of 
an initial segment of the language, where words have length at most $l$, by generating 
words in the language up to length $l+d$, where $d$ is the length of the 
longest word in $\distmin{L}$.
We have generalized some results for these type of operations with arbitrary closures and 
Boolean operations.
We have extended the study to infix and prefix operators to distinguish right quotients and 
atoms of a language.

As open problems and future work we can consider the state complexity of combined
operations, when one of them is in the set $\{\diso,\dpreo,\dinfo\}$.
It worth mentioning that recovering the whole language from a finite number of words in the language
 is very useful in learning algorithms, thus  it would be useful to study all 
conditions that can help us to reconstruct it if we know some of the distinguishability languages.
Finite languages have the particularity that distinguishability operation reduces to the suffix one.
What would be corresponding operation for cover automata for finite languages and dissimilarity operation? 
%As open problems and future work we can consider the state complexity of combined 
%$\diso$ with other language operations, in particular with reversal, because,
%in this last case the semantics of these operations is interesting.

%\tmpcomment{Shouldn't we create another bib: DLjournal.bib?}
%\bibliography{distingL}
\newcommand{\etalchar}[1]{$^{#1}$}

\end{document}